\pgfplotsset{compat=1.14}
\newcommand{\E}{\mathbb{E}}
\newtheorem{definition}{Definition}[section]
\newtheorem{lemma}{Lemma}[section]
\newtheorem{theorem}{Theorem}[section]
\newtheorem{claim}{Claim}[section]
\newtheorem{corollary}{Corollary}[section]
\newtheorem{fact}{Fact}[section]
\newtheorem{example}{Example}[section]
\newtheorem{proposition}{Proposition}[section]
\newcommand{\ignore}[1]{}
\newcommand{\kibitz}[2]{\ifnum\Comments=1\textcolor{#1}{#2}\fi}
\newcommand{\sherry}[1]{\kibitz{red}{\noindent[Sherry: #1]}}
\title{Optimal Advertising for Information Products\thanks{This work is supported by the National Science Foundation under Grant No. CCF-1718549. }}
\author{
Shuran Zheng \\
Harvard University\\
{\normalsize shuran\_zheng@seas.harvard.edu}\\
\and
Yiling Chen \\
Harvard University\\
{\normalsize yiling@seas.harvard.edu}\\
}
\date{September 2021\\Revised after EC'21\footnote{We would like to thank the EC'21 participants who came to our poster session for the helpful discussion. We are extremely grateful to Kevin He for informing us of the important missing references in Bayesian Persuasion. }}
\begin{document}

\begin{titlepage}

\maketitle

\begin{abstract}
When selling information products, sometimes the seller can provide some free partial information to change people's valuations so that the overall revenue can possibly be increased. 
In this work, we study the general problem of advertising information products by revealing  partial information. We consider buyers who are decision-makers. The outcomes of the decision problems depend on the state of the world that is unknown to the buyers.
The buyers can make their own observations and thus can hold different personal beliefs about the state of the world. There is an information seller who has access to the state of the world. The seller can promote the information by revealing some partial information. We assume that the seller chooses a long-term advertising strategy and then commits to it. The buyers decide whether to purchase the full information product after seeing the partial information. The seller's goal is to maximize the expected revenue. 
 We study the problem in two settings. 
 \begin{enumerate} 
 \item	The seller targets buyers of a certain type. In this case, finding the optimal advertising strategy is equivalent to finding the concave closure of a simple function. The function is a product of two quantities. The first one is \emph{the likelihood ratio}~\citep{ALONSO2016672}, which captures how much the buyer's personal belief deviates from the prior. The second one is \emph{the cost of uncertainty}~\citep{frankel2019quantifying}, which represents the value of the information to the buyer. 
 Based on this observation, we prove some properties of the optimal mechanism, which allow us to solve for the optimal mechanism by a finite-size convex program. The convex program will have a polynomial size if the state of the world has a constant number of possible realizations or the buyers face a decision problem with a constant number of options. For the general problem, we prove that it is NP-hard to find the optimal mechanism. 
 \item For the general problem when the seller faces buyers of different types and only knows the distribution of their types, we provide an approximation algorithm that finds an $\varepsilon$-suboptimal mechanism when it is not too hard to predict the possible type of buyers who will make the purchase. For the general problem, we prove that it is NP-hard to find a constant-factor approximation.
 \end{enumerate}
\end{abstract}

\end{titlepage}


\section{Introduction}

The trading of information constitutes an increasingly important business in modern economies. The rapid spread of the internet in the past few decades has provided easy access to a large volume of online data, which stimulated the dynamically growing markets for information. Information is being sold in a large variety of forms: online newspapers and magazines, consulting services, database access, industry reports and credit reports, etc.

The nature of information products varies greatly from traditional commodity products. A lot of effort has been made to understand the optimal strategy of selling information. It has been long observed that revealing partial information about the information products may greatly increase the subsequent likelihood of purchase: movies have trailers, online newspapers and magazines provide free beginning paragraphs or pages, dataset platforms allow the potential buyers to browse the datasets and give free random samples of the data, etc. Revealing partial information decreases the amount of information that is later being sold, but in return, it may change people's opinion about the product and increase some buyers' interest in the information product.


In this work, we study the problem of promoting information products by revealing some partial information. 
For example, consider a flight tracker which wants to sell the information of flight delays to travelers. In this problem, the information being sold (or the state of the world) is the flight delay. To promote the information of flight delay, the flight tracker can send out some partial information, e.g. whether the delay is longer than six hours, which is a signal correlated with the state of the world. If a traveler purchases the information, he may use it to update his travel plan. The information's value for a traveler is determined by his expected gain in replanning.


We consider an information seller who has access to the state of the world which is valuable to some decision-makers. 
We consider a long-term seller who needs to decide an advertising strategy (e.g. tell the travelers whether the delay is longer than six hours, provide a free random sample of the datasets) and follow that strategy thereafter. We assume that the seller's advertising strategy is publicly known.

Each day, some buyers will arrive. Each of the buyers faces a decision problem, the outcome of which depends on the state of the world and the action he takes.
The buyers do not have direct access to the state of the world, but they may have partial observations (e.g. the weather) and thus hold personal beliefs about the state of the world when they arrive.
The buyers decide whether to purchase the full information after seeing the partial information provided by the seller. For example, a buyer decides whether to pay $5$ dollars for the exact delay after knowing that the delay is longer than six hours. We assume that the buyers are rational and they perform Bayesian updating on their beliefs after seeing the partial information revealed by the seller.

It is worth noting that in this work we assume the buyers can have personal beliefs about the state of the world. This deviates from one of the ubiquitous but controversial assumptions in economic theory: the common prior assumption. In models of asymmetric information, the common prior assumption is that there is an ex-ante stage at which the individuals have identical information and subsequently update their beliefs in response to private signals. The plausibility of assuming common priors has been questioned and discussed (see~\citep{morris1995common}). In this work, we do not assume common priors always exist but consider the individuals' beliefs about the external world to be the primitives of the model. Nevertheless, the case when a common prior exists is just a special case of our model. 

In addition, our work focuses on the interaction between the seller and the buyer through the advertising strategy but omit the consideration of advertising cost. We assume that the cost difference between different advertising strategies is relatively small compared to the seller's revenue.

We will consider two settings: (1) The seller targets a group of buyers of a certain type. This can be applied to the case when the majority of the buyers share a common belief and have the same goal. We study this simplified situation to understand the hardness of the problem and gain some insight into the optimal advertising strategy. (2) The seller faces buyers of different types and only knows the distribution of their types.

\subsection{Our Results}
Our first important observation is that partial information disclosure can be beneficial for the seller when the buyers have personal beliefs. 
Based on this observation, we study the optimal advertising strategy for an information seller.

 We first consider a seller who targets a group of  buyers of a specific type.
In this case, we apply the results from~\citep{ALONSO2016672} to show that finding the optimal advertising strategy is equivalent to  finding the concave closure of a simple function. We present a finite-size convex program that solves the optimal mechanism. Furthermore, when the state of the world has only a few possible realizations (e.g. the weather is going to be sunny/cloudy/rainy...), or the buyers face a decision problem with only a few options,  the convex program will have a polynomial size and thus can be efficiently computed.
\begin{theorem}[Informal]
When the seller targets a group of buyers of a specific type, solving the optimal advertising mechanism is equivalent to finding the concave closure of a simple function. The optimal mechanism can be solved by a finite-size convex program.
When the state of the world has a constant number of possible realizations, or the buyers face a decision problem with a constant number of options, the convex program will have a polynomial size. 
\end{theorem}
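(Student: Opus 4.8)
\emph{Proof plan.} Fix the targeted type, with personal prior $\p$ over the states $\Omega$, and let $\mu$ denote the seller's long-run distribution over $\Omega$ (assume $\mu\ll\p$). The plan is to first use the two cited results to reduce the seller's revenue problem to computing a concave closure, and then to exploit the piecewise structure of the decision value to turn that concave closure into a finite convex program.

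For the equivalence, an advertising strategy is a Blackwell experiment on $\Omega$, and since the seller commits to it, the Kamenica--Gentzkow splitting lemma identifies it with a Bayes-plausible distribution $\{(\lambda_s,q_s)\}_s$ over the \emph{buyer's} posteriors, i.e.\ $\lambda_s\ge 0$, $\sum_s\lambda_s=1$, $\sum_s\lambda_s q_s=\p$. For each signal the seller posts the revenue-maximizing price: a buyer at posterior $q$ pays up to his value for learning the state exactly, which is the cost of uncertainty $H(q)=\sum_\omega q_\omega\max_a u(a,\omega)-\max_a\sum_\omega q_\omega u(a,\omega)$ of~\citep{frankel2019quantifying}, so the seller charges $H(q_s)$ at signal $s$. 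By~\citep{ALONSO2016672}, the seller's frequency of a signal equals the buyer's frequency times the likelihood ratio $L(q)=\sum_\omega q_\omega\,\mu_\omega/\p_\omega$ evaluated at the induced posterior, an affine function with $L(\p)=1$. Hence the expected revenue is $\sum_s\lambda_s\,L(q_s)H(q_s)$, and the optimum over all strategies is the concave closure $\widehat g(\p)$ of the ``simple function'' $g:=L\cdot H$; this proves the first assertion. Moreover, the decision value $q\mapsto\max_a\sum_\omega q_\omega u(a,\omega)$ is piecewise linear and convex, with pieces the optimal-action regions $R_a$ (polytopes cut out by the $|A|-1$ comparison inequalities and the simplex constraints), so $H$ is piecewise linear and concave on the same pieces and $g$ restricted to $R_a$ is a product of two affine functions, i.e.\ a quadratic.

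To discretize the concavification I would prove the following properties of an optimal distribution of posteriors. By Carath\'eodory it may be taken with at most $|\Omega|+1$ atoms, and after refinement each atom lies in a single region $R_a$. On any face $F$ of a region on which the quadratic $g|_F$ is not concave there is a direction of positive curvature; spreading an atom from $\mathrm{relint}(F)$ along it to $\partial F$, preserving mass and mean, does not decrease the objective, so iterating (vertices are trivially concave faces) every atom may be placed on a face $F$ with $g|_F$ concave. On such a face, merging all atoms there into their barycenter does not decrease the objective, so at most one atom per concave face is needed. Finally, writing $r_F=\lambda_F q_F$, the contribution $\lambda_F\, g(r_F/\lambda_F)$ is the perspective of the concave function $g|_F$, hence jointly concave in $(\lambda_F,r_F)$; the face membership $r_F\in\lambda_F F$ is a homogeneous (hence linear) constraint, and Bayes plausibility becomes $\sum_F r_F=\p$, $\sum_F\lambda_F=1$. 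Maximizing this concave objective over these linear constraints is a finite convex program with optimum $\widehat g(\p)$.

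For the size, there are at most $|A|$ regions. If $|\Omega|$ is constant, each $R_a$ lives in a constant-dimensional space and so has polynomially many faces, giving a convex program of size $\mathrm{poly}(|A|)$. If $|A|$ is constant, each $R_a$ has at most $|\Omega|+|A|-1$ facets in dimension $|\Omega|-1$, hence $\mathrm{poly}(|\Omega|)$ vertices; one argues that the push-to-boundary reduction places the optimal atoms at these vertices (a perturbation argument covering degenerate configurations in which $L$ or $H$ is constant on a higher-dimensional face), so $\mathrm{poly}(|\Omega|)$ candidate posteriors and a $\mathrm{poly}(|\Omega|)$-size program suffice. I expect the main obstacle to be exactly this structural step: the revenue functional $\sum_s\lambda_s g(q_s)$ is a sum of products of decision variables with a \emph{piecewise}-quadratic function, so it is neither convex nor a priori of bounded support, and the work is to show an optimal solution concentrates on few faces where $g$ is concave (which kills both the combinatorial choice of which region each atom occupies and, via the perspective transform, the nonconvexity), and --- in the constant-$|A|$ regime, where the regions sit in high dimension --- to bound how many such faces there can be.
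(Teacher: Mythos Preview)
Your concave-closure reduction and the perspective-transform idea for the convex program are both correct and match the paper. The difference, and the gap, is in the structural reduction that controls the program's size.

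You parametrize by \emph{concave faces} of the regions $R_a$ (faces on which the quadratic $g|_F$ is concave), pushing atoms to the boundary along directions of positive curvature. For the constant-$|A|$ regime you then assert that ``the push-to-boundary reduction places the optimal atoms at these vertices.'' This is false. On a one-dimensional face $e$ with $g|_e$ strictly concave, splitting an interior atom to the two vertex endpoints \emph{decreases} the objective by concavity, so an optimal atom can sit strictly inside an edge. (The paper's binary-state analysis already exhibits this: one optimal posterior lies in the interior of a linear piece of $C$, not at a vertex.) So in the constant-$|A|$ regime you are left counting concave faces of a high-dimensional polytope, and the proposed fallback to vertices does not work.

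The paper sidesteps the face lattice entirely. It projects posteriors to the two-dimensional plane $(R(\eta),C(\eta))$ and uses that $xy$ is strictly convex along positive-slope directions and strictly concave along negative-slope ones. From this it proves two local optimality conditions: an optimal posterior cannot be split inside its region $\mathcal{P}_a$ along a direction whose $(R,C)$-image has positive slope, and no two optimal posteriors can have $(R,C)$-images on a negative-slope line. These force, for each optimal posterior $\eta^s\in\mathcal{P}_a$, the supporting vertex set $T\subseteq\mathcal{H}_a$ to project to a single line of nonpositive slope in $(R,C)$-space; one then decomposes $\eta^s$ into points on \emph{segments between pairs of vertices} $i,j\in\mathcal{H}_a$ with $(R(i)-R(j))(C(i)-C(j))\le 0$ (these segments need not be faces of $\mathcal{P}_a$). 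The convex program has one variable pair $(\gamma_{ij},\gamma_{ji})$ per such segment, with objective
\[
\sum_{\{i,j\}}\;\gamma_{ij}R(i)C(i)+\gamma_{ji}R(j)C(j)-\frac{\gamma_{ij}\gamma_{ji}}{\gamma_{ij}+\gamma_{ji}}\big(R(i)-R(j)\big)\big(C(i)-C(j)\big),
\]
which is exactly your perspective of $g$ restricted to the segment, and is concave precisely because of the sign condition on the pair. Since the number of vertex pairs is at most $|\mathcal{H}_a|^2$, and $|\mathcal{H}_a|=\binom{|\Omega|+|A|-1}{|A|}$ is polynomial whenever either $|\Omega|$ or $|A|$ is constant, the size bound follows uniformly in both regimes. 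This two-dimensional projection argument is the missing ingredient in your plan.
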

The function, whose concave closure indicates the optimal advertising mechanism, is the product of two quantities. The first one is what we call \emph{the likelihood ratio function}, which depends on the buyer's personal belief about the state of the world. The second component is \emph{the cost of uncertainty function}, which represents the value of the information to the buyer. 

In addition, our convex program shows that there exists an optimal mechanism that reveals partial information in a way that the buyer will be able to reduce the range of the state to a set of size $\le 2|A|$, where $A$ is the buyer's action set.
\begin{theorem}[Informal] \label{thm:inf_insight}
There exists an optimal mechanism that guarantees that the buyer's posterior about the state of the world has no more than $2|A|$ non-zero entries after seeing the partial information.
\end{theorem}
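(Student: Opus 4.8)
The plan is to use the characterization stated just above: the optimal revenue of the targeted‑type problem equals the concave closure $\mathrm{cav}[f](\p)$ of $f(q)=\rho(q)\cdot u(q)$ evaluated at that type's belief $\p$, where $\rho$ is the likelihood‑ratio function — linear in the posterior $q$, strictly positive, with gradient vector $\bm\ell$ — and $u$ is the cost‑of‑uncertainty function, i.e.\ $u(q)=\min_{a\in A}\langle \bm w_a,q\rangle$ for suitable nonnegative vectors $\bm w_a$ (one per action that can be optimal absent the information), a concave piecewise‑linear function with at most $|A|$ linear pieces. An optimal mechanism corresponds to a Bayes‑plausible family of posteriors $\{(\lambda_i,q_i)\}$ with $\sum_i\lambda_i q_i=\p$, $\sum_i\lambda_i=1$, achieving $\sum_i\lambda_i f(q_i)=\mathrm{cav}[f](\p)$, and what we must show is that one such family can be taken with every $q_i$ supported on at most $2|A|$ states. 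The approach is a local exchange argument: starting from any optimal family, repeatedly replace a posterior of too‑large support by two posteriors of strictly smaller support, keeping $\sum_i\lambda_i q_i=\p$ fixed and not decreasing $\sum_i\lambda_i f(q_i)$; since supports are nonnegative integers that strictly decrease, this terminates.

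Fix an optimal family and a posterior $q_i$ with support $S=\mathrm{supp}(q_i)$ larger than desired. Let $R_{a_i}=\{q:u(q)=\langle\bm w_{a_i},q\rangle\}$ be a linear region containing $q_i$; on $R_{a_i}$ the objective $f$ agrees with the quadratic $f_{a_i}(q)=\rho(q)\langle\bm w_{a_i},q\rangle$. I look for a direction $\bm d\neq\bm 0$ supported on $S$ with (i) $\langle\mathbf{1},\bm d\rangle=0$, so moving along $\bm d$ stays in the affine hull of the face $\Delta(S)$; (ii) $\langle\bm w_{a_i}-\bm w_{a'},\bm d\rangle=0$ for every other action $a'$, so the defining inequalities of $R_{a_i}$ are constant along $\bm d$ and the entire line $q_i+t\bm d$ stays inside $R_{a_i}$; and (iii) $\langle\bm\ell,\bm d\rangle=0$, which makes $\rho$ constant along $\bm d$ and hence $f_{a_i}$ \emph{affine} in $t$ on the line. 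Given such a $\bm d$, follow the line in both directions until it leaves the relative interior of $\Delta(S)$, reaching $q_i+\epsilon^+\bm d$ and $q_i-\epsilon^-\bm d$ with $\epsilon^+,\epsilon^->0$; each of these is in $R_{a_i}$ and has strictly smaller support than $q_i$. Replace the atom $(\lambda_i,q_i)$ by $(\lambda_i\beta,\,q_i+\epsilon^+\bm d)$ and $(\lambda_i(1-\beta),\,q_i-\epsilon^-\bm d)$ with $\beta=\epsilon^-/(\epsilon^++\epsilon^-)$. The weighted barycenter of the two new atoms is exactly $\lambda_i q_i$, so $\sum_i\lambda_i q_i=\p$ is preserved; and because $f$ coincides with the affine function $f_{a_i}$ along the whole segment, the weighted $f$‑value of the two new atoms equals $\lambda_i f_{a_i}(q_i)=\lambda_i f(q_i)$, so the family remains optimal.

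The one substantive point — and the main obstacle — is to show such a $\bm d$ exists whenever $|S|$ is large. Conditions (i)–(iii) are homogeneous linear equations on $\bm d\in\mathbb{R}^S$: one from the unit‑sum, one from the likelihood‑ratio level, and at most $|A|-1$ from preserving $R_{a_i}$ (if several actions are simultaneously optimal at $q_i$, these are ties among those $\le|A|$ actions — still at most $|A|-1$ equations). Hence the admissible directions form a subspace of dimension at least $|S|-(|A|+1)$, which is positive as soon as $|S|\ge|A|+2$; so the iteration halts with $|\mathrm{supp}(q_i)|\le|A|+1\le 2|A|$, giving the claim. The delicate part is precisely what conditions (ii)–(iii) buy us: the perturbation must keep the active linear piece of $u$ active along the \emph{entire} segment (otherwise $f$ switches to a different quadratic piece and the affinity/convexity needed to preserve the objective breaks) while still leaving a free direction that drives some coordinate of $S$ to zero. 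If one instead only preserves region membership for small steps, one replaces (iii) by choosing the sign of $\bm d$ so that $\langle\bm\ell,\bm d\rangle\langle\bm w_{a_i},\bm d\rangle\ge 0$ and argues from convexity of $f_{a_i}$ along $\bm d$ rather than affinity; the bookkeeping changes slightly but the resulting support bound is still of order $|A|$, hence at most $2|A|$.
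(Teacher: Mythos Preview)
Your argument is correct and takes a genuinely different route from the paper. The paper works in the two–dimensional $(R,C)$–plane: using Lemmas~\ref{lem:opt_decompose} and~\ref{lem:opt_never} it shows that each optimal posterior can be pushed onto a segment between two vertices of some polytope $\mathcal{P}_a$, and then invokes linear-programming basic–feasible–solution theory to argue that each such vertex has at most $|A|$ nonzero coordinates, giving $2|A|$ after taking the convex combination of two vertices. Your approach bypasses all of that machinery: you simply count linear constraints on a perturbation direction $\bm d$ (one for the simplex, one to freeze $\rho$, at most $|A|-1$ to freeze the active piece of $u$) and split along a direction where $f$ is affine. This is shorter and in fact yields the sharper bound $|A|+1\le 2|A|$. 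What the paper's longer detour buys is more than the support bound: the segment-between-vertices structure is exactly what allows them to write the finite convex program of Theorem~\ref{thm:main_convex}, which your local-exchange argument does not directly produce. Two small points worth tightening: you should start from a finite optimal family (Carath\'eodory or Lemma~\ref{lem:opt_size} gives this), and your termination claim should be phrased per atom—each split replaces one atom by two of strictly smaller support, so the recursion tree rooted at any atom has depth bounded by its initial support and hence is finite.
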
 
For the general problem, our convex program can possibly be exponentially large. We show this is not surprising because the problem is NP-hard. 
 \begin{theorem}[Informal]
 	When the seller targets a group of buyers of a specific type, solving the optimal advertising mechanism is NP-hard.
 \end{theorem}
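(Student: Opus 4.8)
The plan is to give a polynomial-time reduction from a standard NP-hard problem --- I will use \emph{Independent Set} --- using the characterization established above: for a fixed buyer type the optimal advertising revenue equals the concave closure $\operatorname{cav}[f](\lambda_0)$ of the function $f=L\cdot c$, where $L$ is the likelihood-ratio factor (linear in the posterior, determined by the seller's prior $\mu_0$ and the buyer's belief $\lambda_0$) and $c$ is the cost-of-uncertainty function of the decision problem, a nonnegative, positively homogeneous, piecewise-linear concave function of the form $c(\lambda)=\min_{a\in A}\langle\lambda,d_a\rangle$ that vanishes at every vertex of the simplex. The aim is a construction in which $\operatorname{cav}[f](\lambda_0)$ is a strictly increasing function of the independence number $\alpha(G)$, so that deciding whether the optimal revenue reaches a threshold decides whether $\alpha(G)\ge k$.

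For the construction I would take one state per vertex of $G$ together with a set of auxiliary ``gadget'' states (for instance one associated with each edge of $G$), and let $\mu_0$ and $\lambda_0$ disagree only on the gadget states, so that $L$ is essentially constant on the vertex coordinates but is sensitive to how much mass a posterior places on gadget states. The action set is chosen so that, restricted to the vertex coordinates, $c(\lambda)$ behaves like $1-\max_\omega\lambda_\omega$: a posterior spread uniformly over a set $S$ of vertex states then has value increasing in $|S|$, so the seller is rewarded for inducing posteriors with large vertex-support. The remaining gadgetry is designed so that a posterior whose vertex-support contains both endpoints of some edge of $G$ is inefficient, so that only posteriors supported on independent sets contribute at the best rate. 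Finally $\lambda_0$ is chosen so that it must be assembled as a mixture of several such ``good'' posteriors together with point masses, which contribute nothing since $c$ vanishes at the vertices.

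Correctness has two directions. In the easy direction, a large independent set (or, in the version I would actually phrase the threshold around, a small cover of $G$ by independent sets) yields an advertising scheme --- the posteriors uniform on those sets, with appropriate weights, plus point masses to balance the rest of $\lambda_0$ --- whose revenue is a fixed monotone function of the combinatorial parameter and hence meets the threshold. In the hard direction, one starts from any scheme with revenue above the threshold and extracts an independent set of size $\ge k$: using $\operatorname{cav}[f](\lambda_0)\le\max_\lambda f(\lambda)$ together with the penalty built into the gadget, one argues that a posterior can contribute above the relevant cutoff only if its vertex-support is (nearly) independent and large, and a rounding/averaging step over the posteriors of the scheme then yields an actual independent set of the required size.

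The step I expect to be the main obstacle is the ``co-occurrence penalty.'' Because $c$ is a minimum of \emph{additive} functions of the posterior, it cannot by itself punish a posterior for loading two particular states simultaneously --- putting mass on more states only increases each $\langle\lambda,d_a\rangle$ --- so the penalty that makes non-independent support inefficient must be produced indirectly, through the interplay of the decision problem with the linear likelihood-ratio factor and the auxiliary states. One must then prove a tight quantitative claim: that no ``intermediate'' scheme --- mixing posteriors with partially non-independent support, or exploiting the gadget states in unintended proportions --- can beat the schemes built from genuine independent sets. Establishing this robustness of the concave closure, rather than merely comparing one good scheme against one bad one, is the technical heart of the argument; a secondary and more routine point is to check that the instance necessarily uses a super-constant number of both states and actions, consistent with the polynomial-time algorithms available when either parameter is constant.
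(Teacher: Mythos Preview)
Your plan has a genuine gap precisely where you flag it: the ``co-occurrence penalty.'' With $c$ concave (a min of linear functionals) and $L$ linear, the value $f(\eta)=L(\eta)\,c(\eta)$ of a posterior uniform on a vertex set $S$ depends on $|S|$ and on the \emph{average} of the likelihood weights over $S$, but not on which pairs of vertices co-occur. Adding edge-gadget states does not help unless something forces a posterior with non-independent vertex support to also carry gadget mass; but the concave-closure optimization ranges over \emph{all} posteriors, so the adversary (the seller) is free to place zero mass on the gadget states and collect the same value as for an independent set of the same size. You acknowledge this is the crux and defer it, but as written there is no mechanism in sight, and I do not see how to manufacture one with only a single linear factor $L$ multiplying a concave $c$.

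The paper sidesteps this obstacle entirely by reducing from a different source: \emph{quadratic programming} over a polytope (the NP-complete instances of Pardalos). The key trick is to make the likelihood ratio and the cost for one distinguished action coincide, $R(y)=C_a(y)=\beta^\top x$, so that on the region where $a$ is optimal one has $R(y)C(y)=(\beta^\top x)^2$, exactly the quadratic term; the linear term comes from the dual variable $\alpha$ in the concave-closure LP, and the polytope constraints $Ax\le b$ are encoded by the other actions via $C_a(y)\le C_{a'}(y)$. A second step rescales the other $C_{a'}$ so that the unconstrained max of $R\cdot C-\alpha^\top y$ already lands in the region $\{C_a\le C_{a'}\}$, and a final ellipsoid argument turns this ``oracle'' problem into the concave-closure problem itself. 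No combinatorial gadgetry is needed. It is worth noting that the paper \emph{does} use an Independent-Set/coloring reduction, but only for the multi-type hardness (Theorem~\ref{thm:general_hardness}): there the heterogeneous buyer beliefs supply exactly the pairwise penalty you are missing --- a type-$i$ buyer with a neighbor $j$ in the signal's support has a posterior that kills her willingness to pay --- and that mechanism is unavailable when there is only one buyer type.
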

 
When the seller faces buyers of different types and only knows the distribution of their types, the problem becomes more challenging because it is more difficult for the seller to choose the best price (after the advertising). 
Nevertheless, we show that in some special cases when it is not too hard to predict the possible types of buyers who will make the purchase, it is possible to find an $\varepsilon$-suboptimal mechanism  by a linear program.
\begin{theorem}[Informal]
	When the buyers' types are drawn from a known distribution, and the set of buyer types that will finally purchase the information has polynomially many possibilities, we can find an $\varepsilon$-suboptimal mechanism  within running time polynomial in $1/\varepsilon$ and the input size.
\end{theorem}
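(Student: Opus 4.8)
The plan is to reduce the seller's problem to a polynomial‑size linear program over distributions of posteriors, in the spirit of the single‑type characterization but carrying a price alongside each signal. An advertising mechanism is a signaling scheme together with a price posted at each signal realization. Applied to the true state distribution $\pi$, a scheme induces a Bayes‑plausible family $\{(\lambda_s,p_s)\}_s$ of seller‑posteriors, $\sum_s\lambda_s p_s=\pi$, and conversely every such family comes from a scheme; a buyer of type $t$ with personal belief $\mu_t$ who sees signal $s$ holds the posterior $p_s^t$ obtained from $p_s$ by the likelihood‑ratio reweighting $p_s^t(\omega)\propto(\mu_t(\omega)/\pi(\omega))\,p_s(\omega)$ of \citet{ALONSO2016672}, a fixed fractional‑linear map of $p_s$. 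Type $t$ buys at price $\rho$ precisely when its cost of uncertainty $c_t(p_s^t)$ (equivalently, the value of full information to it at $p_s^t$) is at least $\rho$, so the revenue is $\sum_s \lambda_s\,\rho_s\,\Pr_t[\,c_t(p_s^t)\ge\rho_s\,]$, the probability being over the known type distribution. Indexing signal copies by their price turns this into a linear objective: the seller picks weights $x_{p,\rho}\ge 0$ with $\sum x_{p,\rho}=1$ and $\sum x_{p,\rho}\,p=\pi$, maximizing $\sum x_{p,\rho}\,\rho\,\Pr_t[\,c_t(p^t)\ge\rho\,]$.

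The obstruction is the continuum of possible posteriors, and two steps remove it. First, restrict prices to an $\varepsilon$‑net of $[0,\bar c]$, where $\bar c$ bounds any willingness to pay: rounding a price down can only enlarge its set of buyers, so the optimal revenue drops by at most an additive $\varepsilon$. Second, for a grid price $\rho$ and a set $T$ of types, let $U_{\rho,T}=\{p\in\Delta(\Omega):c_t(p^t)\ge\rho\text{ for all }t\in T\}$ be the set of seller‑posteriors at which every type in $T$ is willing to pay $\rho$. By the product characterization established earlier — the map $p\mapsto(\text{likelihood ratio of }t\text{ at }p)\cdot c_t(p^t)$ is, for each $t$, a concave function equal to the minimum over $t$'s actions of affine functions of $p$, minus an affine function of $p$ — the condition $c_t(p^t)\ge\rho$ (obtained by clearing the positive denominator) is equivalent to one affine inequality in $p$ per action of $t$, so each $U_{\rho,T}$ is a polyhedron with $\sum_{t\in T}|A_t|$ facets. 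Thus we never enumerate the posteriors inside a cell; we optimize over $U_{\rho,T}$ within the LP.

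By hypothesis the purchasing sets that arise in an optimal grid‑priced mechanism lie in an efficiently computable, polynomial‑size family $\mathcal{T}$ of subsets of types. Take as candidate signals all pairs $(\rho,T)$ with $\rho$ on the grid and $T\in\mathcal{T}$; introduce variables $\lambda_{\rho,T}\ge0$ and $q_{\rho,T}\in\mathbb{R}^{\Omega}_{\ge 0}$ (so $q_{\rho,T}$ is $\lambda_{\rho,T}$ times that signal's posterior), with constraints $\sum\lambda_{\rho,T}=1$, $\sum q_{\rho,T}=\pi$, and the cone constraint $q_{\rho,T}\in\lambda_{\rho,T}\,U_{\rho,T}$ — which, $U_{\rho,T}$ being polyhedral, is a system of linear inequalities jointly in $(q_{\rho,T},\lambda_{\rho,T})$ — and objective $\max\sum\lambda_{\rho,T}\,\rho\,\Pr_t[\,t\in T\,]$. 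This linear program has size polynomial in $1/\varepsilon$ and the input (including $|\mathcal{T}|$), hence is solvable in the stated time; from an optimal solution we read off the mechanism that for each $(\rho,T)$ with $\lambda_{\rho,T}>0$ sends, with probability $\lambda_{\rho,T}$, a signal inducing the seller‑posterior $q_{\rho,T}/\lambda_{\rho,T}\in U_{\rho,T}$ at price $\rho$; Bayes‑plausibility holds because $\sum q_{\rho,T}=\pi$.

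For the guarantee, two directions. Plugging the optimal grid‑priced mechanism into the LP — merging the signals that share a pair $(\rho,T)$, which is legitimate since $U_{\rho,T}$ is convex and, by hypothesis, all its purchasing sets lie in $\mathcal{T}$ — gives a feasible point whose objective equals that mechanism's revenue, so the LP optimum is at least $\mathrm{OPT}-\varepsilon$. Conversely, at a signal with posterior in $U_{\rho,T}$ priced at $\rho$ every type in $T$ indeed buys, so the extracted mechanism's realized revenue is at least the LP objective, hence at least $\mathrm{OPT}-\varepsilon$. I expect the main obstacle to be the polyhedral description of the cells $U_{\rho,T}$ — pushing the "(likelihood ratio)$\times$(cost of uncertainty)" representation through so that "type $t$ is willing to pay $\rho$" becomes a conjunction of affine constraints on the \emph{seller's} posterior — together with the bookkeeping that restricting to grid prices keeps the relevant purchasing sets inside the hypothesized family $\mathcal{T}$.
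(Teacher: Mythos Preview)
Your approach is essentially the same as the paper's: discretize prices to an $\varepsilon$-grid, create one signal per pair (grid price, purchasing set), observe that ``all types in $T$ buy at price $\rho$'' is a conjunction of linear inequalities in the unnormalized signal/posterior vector, and solve the resulting polynomial-size LP. The paper parameterizes by the signaling probabilities $\pi(\omega,s_{p,\Lambda})$ rather than your scaled posteriors $q_{\rho,T}$, but these are the same variables up to the factor $\mu(\omega)$; your merging argument via convexity of $U_{\rho,T}$ corresponds exactly to the paper's Lemma~\ref{lem:general_eff}.

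One small correction: your objective $\sum_{\rho,T}\lambda_{\rho,T}\,\rho\,\Pr_t[t\in T]$ is only the revenue when the type distribution is independent of $\omega$. In the paper's model the seller knows $\mu(\theta\mid\omega)$, and the revenue contributed by signal $(\rho,T)$ is $\rho\sum_{\theta\in T}\sum_\omega \mu(\omega,\theta)\,\pi(\omega,s_{\rho,T})=\rho\sum_{\theta\in T}\sum_\omega \mu(\theta\mid\omega)\,q_{\rho,T}(\omega)$, which is still linear in your variables, so the LP goes through unchanged; just replace the coefficient of $q_{\rho,T}(\omega)$ accordingly.
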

Furthermore, this $\varepsilon$-suboptimal mechanism can be solved by an LP.
For the general problem, the optimal mechanism is not only hard to solve, but also hard to approximate. 
\begin{theorem}[Informal]
When the buyers' types are drawn from a known distribution, it is NP-hard to find a constant-factor approximation  for our optimal information advertising problem.
\end{theorem}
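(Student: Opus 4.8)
My plan is to give a gap-preserving reduction from a combinatorial problem that is itself NP-hard to approximate within any constant factor --- \textsc{Maximum Independent Set} (equivalently, by complementation, \textsc{Max Clique}), which is NP-hard to approximate within $n^{1-\epsilon}$ for every $\epsilon>0$. Given a graph $G=(V,E)$, I would build an advertising instance whose buyer types correspond to the vertices of $G$, each arriving with equal probability $1/|V|$, and whose state space and per-type decision problems are engineered so that the optimal expected revenue is $\Theta\!\left(p^{\star}\cdot\alpha(G)/|V|\right)$, where $\alpha(G)$ is the independence number of $G$ and $p^{\star}$ is a fixed ``target'' price determined by the construction. A purported $c$-approximation algorithm for the advertising problem, run on this instance, would then approximate $\alpha(G)$ to within a constant depending only on $c$, contradicting the inapproximability of \textsc{Maximum Independent Set} unless $\mathrm{P}=\mathrm{NP}$. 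If the cleanest gadget yields only a fixed constant gap rather than an arbitrarily large one, I would first apply a standard graph-product (gap amplification) step to $G$, which preserves the conclusion.

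The heart of the construction is an edge gadget that turns the \emph{combinatorial} fact ``$u$ and $v$ are adjacent'' into a \emph{Bayesian-plausibility} constraint on the posteriors the seller can induce. For each edge $\{u,v\}\in E$ I would introduce a small block of states and calibrate the personal beliefs and payoff matrices of types $u$ and $v$ so that, using the likelihood-ratio times cost-of-uncertainty characterization from the first theorem, the value of the full information product to type $u$ exceeds $p^{\star}$ under a posterior $\mu$ only if $\mu$ places enough mass on the ``$u$-side'' of the block, and symmetrically for type $v$ only if $\mu$ places enough mass on the ``$v$-side''; since the two requirements cannot hold at one Bayes-consistent posterior, no single signal can make both $u$ and $v$ purchase, while for a non-edge there is no obstruction. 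Aggregating the blocks and normalizing the prior, the sets of types that a single signal can simultaneously activate at price $p^{\star}$ become exactly the independent sets of $G$, so the revenue from such a signal is $p^{\star}$ times the type-mass of an independent set.

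The main obstacle --- the part the full proof has to handle carefully --- is ruling out that a \emph{multi-signal, multi-price} scheme does substantially better, e.g.\ by sending one signal tailored to an independent set $S_1$, another tailored to $S_2$, and so on, collecting revenue from a union of independent sets that could cover all of $V$. The gadget must be designed so that ``activating'' a set $S$ under some signal consumes a fixed amount of the prior probability attached to the states that make the information valuable for $S$, and these probability budgets are shared across the competing independent sets; Bayesian consistency across the whole scheme should then force the total activation mass $\sum_{v}\Pr[v]\cdot\Pr_{\sigma}[v\text{ purchases}]$, summed over all signals, to be within a constant factor of $\alpha(G)/|V|$ no matter how many signals are used, with only $O(1)$ price levels worth scanning. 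Making this budget/consistency argument tight requires choosing the state probabilities, belief perturbations, and payoff magnitudes so that every ``type $v$ purchases'' inequality is strict with a polynomially bounded margin (so the instance has a polynomial-size rational encoding), and checking that the cost-of-uncertainty function one needs is realizable by an honest decision problem with a modest action set. These calibration details, together with the analysis of arbitrary schemes, are where the real work lies; given them, the outer reduction from \textsc{Maximum Independent Set} is then routine.
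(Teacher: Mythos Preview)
Your high-level framework---a graph reduction with buyer types as vertices, and ``simultaneously activatable under one signal'' corresponding to independent sets---matches the paper's, and you have correctly put your finger on the real obstacle: a multi-signal scheme can tailor different signals to different independent sets and, in principle, cover essentially all of $V$. Where your proposal has a genuine gap is in the resolution of that obstacle. Your ``budget'' argument asserts that activating $S$ consumes prior mass shared with competing independent sets, but nothing in your per-edge state-block gadget forces such sharing across \emph{different} signals. If the block that makes information valuable to $u$ sits on edge $\{u,v\}$ and the block for $w$ sits on a disjoint edge $\{w,x\}$, a signal saying ``we are in the $\{u,v\}$ block'' and another saying ``we are in the $\{w,x\}$ block'' draw on disjoint prior mass; Bayesian consistency does not obviously cap their sum at $\alpha(G)/|V|$. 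Indeed, in the paper's own YES instance the optimal scheme uses $q$ signals and activates a $(1-\varepsilon)$ fraction of all buyers, so any cap must come from the \emph{price} side, not from a bound on total activation mass.

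The paper resolves the multi-signal issue by a completely different, per-signal dichotomy, enabled by encoding edges in the \emph{beliefs} rather than in state blocks: a type-$v$ buyer puts prior mass $qn$ times larger on each neighbor of $v$ than on non-neighbors. The key lemma is then that if both endpoints $i,j$ of an edge purchase after the same signal $s$, one of them must have posterior at most $1/(qn)$ on its own vertex (because $\pi(i,s)\le\pi(j,s)$ or vice versa, and the neighbor's prior dominates), so its cost of uncertainty---hence $p_s$---is at most $M/(qn)$. Thus for \emph{every} signal $s$, either the purchasing set $V^{(s)}$ is independent, in which case it is tiny in the NO case, or it contains an edge, in which case $p_s$ is tiny; either way the revenue conditioned on $s$ is $O(M/(qn))$, and summing over signals changes nothing. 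This is also why the paper reduces from \textsc{AlmostColoring} rather than \textsc{Maximum Independent Set}: the YES case supplies a near-partition into $q$ equal-size independent sets, which immediately yields a $q$-signal scheme with revenue $\Omega(M/n)$, while the NO case guarantees every independent set has size at most $|V|/q^{k+1}$, exactly what the per-signal bound needs. The constant-factor gap is then $\Theta(q)$, and $q$ can be taken as large as desired.
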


 \subsection{Related Work}
Markets for information and data have attracted an increasing amount of attention recently. 
 We refer the readers to~\citep{bergemann2019markets} for an overview of the vast literature. In parallel with the analysis of competitive markets of information (see~\citep{sarvary2011gurus} for an overview) and the study of data intermediaries~\citep{bergemann2019markets,bergemann2019economics}, our work falls into the category of a monopoly information holder directly selling information to the buyers.  In contrast to some works that focus on specific information product, e.g. selling cookies \citep{bergemann2015selling} and selling datasets \citep{mehta2019sell}, we consider  selling information in a general framework, which makes our work most relevant to \citep{esHo2007price,bergemann2018design,babaioff2012optimal,chen2020selling,cai2020sell}. What makes our work different from the previous ones is that we consider a seller who can only use posted price mechanism with a single price. The previous works  \citep{esHo2007price,bergemann2018design,babaioff2012optimal,chen2020selling,cai2020sell} all consider designing a menu of different information with different prices. The size of the optimal menu is as large as the type space (due to the use of the revelation principle). \citet{babaioff2012optimal} and \citet{chen2020selling} actually consider a seller that can interact with the buyer in multiple rounds. Although larger mechanisms give the seller more power to extract revenue, they are also more difficult to implement and participate in. Therefore in this work, we consider the design of simple mechanisms for selling information, in which the seller just posts a price for the full revelation of information (menu size equal to one), but can partially reveal some relevant information before the sale to promote the information product.
 
 A particularly relevant topic is Bayesian persuasion~\citep{kamenica2011bayesian, kamenica2018bayesian,Dughmi2017}, especially the public persuasion problem~\citep{dughmi2019hardness,xu2020tractability} and Bayesian persuasion with heterogeneous priors~\citep{alonso2016persuading, ALONSO2016672}. In Bayesian persuasion, there is a sender and a receiver. The sender wants to persuade the receiver to take some actions by choosing a signal (or in our words, choosing some partial information) to reveal to the receiver. Our problem is very close to the public persuasion problem in the sense that the seller sends a public signal to persuade the buyers. The key difference between our problem and the public persuasion problem is that in Bayesian persuasion, the sender only decides the signaling scheme that is used to persuade the receivers; but in our problem, the seller also needs to choose a price menu. 
 But in this work, we mainly focus on the case when the seller targets buyers with a specific belief. In this case, the optimal price menu can be immediately decided for a chosen signaling scheme. As a result, our problem becomes a Bayesian persuasion problem with heterogeneous priors~\citep{ALONSO2016672}. But computing the optimal mechanism for this problem is still not trivial. 
 We also want to point out a work~\citep{rayo2010optimal} that studies a quite different information disclosure problem but has a very close underlying mathematical model. Actually, their problem can be seen as a special case of ours. We discuss this in Appendix~\ref{app:optimal_inf_dis}.
 
 There is also a vast recent literature on information design that studies how different information disclosure rules influence the outcomes of games in different settings (see~\citep{bergemann2019information}),  including the information disclosure in pricing~\citep{rayo2010optimal,smolin2019disclosure, ali2020voluntary}, in auctions~\citep{Miltersen12,Badanidiyuru2018,Daskalakis2016does,emek2014signaling,Peter2007optimal}, in two-sided markets~\citep{romanyuk2019cream,johari2019quality,bimpikis2020information}, in normal-form games~\citep{bhaskar2016hardness,cheng2015mixture,dughmi2014hardness}, etc.

 
It is worth noting that there is a fundamental difference between advertising regular goods (see~\citep{bagwell2007economic} for an overview) and advertising information products. 
 Providing additional information about regular goods will not make any change to the goods themselves. But advertising information products may change the information product itself as revealing relevant information may decrease the amount of information that is finally being sold.
\section{Model}
We consider the setting with a monopolist information seller and information buyers who need to make a decision based on the information held by the seller. The information being sold is the state of the world  $\omega \in \Omega = \{1, \dots, n\}$, which is drawn from a \emph{commonly known} distribution $\mu(\omega)$.

Each day, a new state of the world $\omega$ will be realized and some information buyers will come. The information buyers need to choose an action $a\in A$. A buyer's utility $u(\omega, a)$ depends on his action $a$ and the state of the world $\omega$ on that day.  The buyers cannot directly observe $\omega$ of that day, but they may have their own partial observations  and thus may hold personal beliefs about the state of the world (which can be different from $\mu(\omega)$). We denoted by $\theta \in \Theta \subseteq \Delta \Omega$ the buyer's personal belief, which is a distribution over $\Omega$ with full support. We assume that the set of possible personal beliefs $\Theta \subseteq \Delta \Omega$ is a finite set. In the work, we also call $\theta$ the \emph{type} of the buyer. Without loss of generality, we assume that the utility function is normalized so that $u(\omega, a) \in [0,1]$.\footnote{The case that different types of buyers have different action sets and different utility functions can be converted into a single action set and a common utility function by merging each buyer's action sets and the associated utility functions. So without loss of generality we assume there is a single action set and a common utility function. }


The information seller has access to the realized state of the world $\omega$ every day. The seller needs to decide a long-term strategy to sell the information of $\omega$.  
 We assume the seller can only sell the information by a posted price mechanism with a single price, that is, set a price for telling the buyers the value of $\omega$. But before selling the information, the seller can advertise the information of $\omega$ by sending out some partial information, or more formally, the seller can send a \emph{signal} that is correlated with the state of the world. The buyers will update his belief about the state of the world $\omega$ after seeing the signal. The seller then post a price for the full revelation of $\omega$. The price can be different when the buyers see different signal realizations.
Formally, the seller can use an advertising rule defined as follows.
\begin{definition}
An advertising rule $\langle S, \pi, \{p_s : s \in S\} \rangle$ consists of 
\begin{itemize}
	\item a finite set of signals $S$,
	\item  a signaling scheme $\pi$, which is a random mapping from the support of the state of the world $\Omega$ to the signals $S$, i.e., $\pi: \Omega \to \Delta S$,
	\item and a price menu $\{p_s : s \in S\}$.
\end{itemize}	 
When using advertising rule $\langle S, \pi, \{p_s : s \in S\} \rangle$, the seller will first send a signal $s\in S$ by the signaling scheme $\pi$, that is, when the state of the world is $\omega$ the seller will send signal $s\in S$ with probability $\pi(\omega, s)$. Then if the signal that has been sent is $s$, the seller will charge price $p_s$ for the full revelation of $\omega$. 
\end{definition}

\begin{example}
	In the example of selling flight delay, the seller sends two possible signals $$S = \{\text{below $6$ hours}, \text{above $6$ hours}\}$$ with signaling scheme
	\begin{align*}
			\pi(\omega, \text{below $6$ hours}) = \bm{1}(\omega \le 6), \\
			\pi(\omega, \text{above $6$ hours}) = \bm{1}(\omega > 6).
	\end{align*}	
\end{example}

\paragraph{Buyer strategy.} Consider a buyer with personal belief $\theta = (\theta_1, \dots, \theta_n)$ before seeing the signal $s$, then when the signal is realized to $s$, the posterior belief of the buyer will be
\begin{align} \label{eqn:posterior_eta}
\eta^s (\theta) = \frac{\big( \theta_1 \pi(1, s), \dots, \theta_n \pi(n, s)\big)}{\sum_{\omega=1}^n \theta_\omega \pi(\omega,s)}.
\end{align}
Then the highest price the buyer is willing to pay for the full revelation of $\omega$ will be his expected loss of not knowing $\omega$ based on his posterior belief $\eta^s (\theta)$. This expected loss (as a function of $\eta$) is defined as the \emph{cost of uncertainty}  by~\cite{frankel2019quantifying}. Here we use the same term. 
\begin{definition}[Cost of uncertainty~\citep{frankel2019quantifying}] \label{def:cost_of_unc}
For a decision maker with utility function $u(\omega, a)$ and a belief $\eta = (\eta_1, \dots, \eta_n) \in \Delta \Omega$, the cost of uncertainty is equal to the expected loss of not knowing $\omega$,
\begin{align*}
C(\eta) & =   \E_{\omega \sim \eta} \big[\max_{a\in A} u(\omega, a) \big] - \max_{a\in A} \E_{\omega \sim \eta} [ u(\omega, a) ] \\
& =   \sum_{\omega=1}^n \eta_\omega \max_{a\in A} u(\omega, a) - \max_{a\in A} \sum_{\omega = 1}^n \eta_\omega u(\omega,a)\\
& = \min_a\, C_a(\eta),  
\end{align*}
where $C_a(\eta) =  \sum_{\omega=1}^n \eta_\omega \left(\max_{a'\in A} u(\omega, a') -  u(\omega,a)\right)$ is a linear function of $\eta$ that represents the expected regret of taking action $a$. Since $C(\eta)$ is the minimum of $|A|$ linear functions, $C(\eta)$ is a concave function.
\end{definition}
So when the signal is realized to $s$, the buyer will purchase the full revelation of $\omega$ if and only if his expected gain of knowing $\omega$ based on his posterior belief is higher than the price, $C(\eta^s(\theta)) \ge p_s$.
 
\paragraph{Mechanism design problem.} We assume that the seller knows the utility function $u(\omega, a)$, $\mu(\omega)$, and the conditional distribution of buyers' personal beliefs $\mu(\theta|\omega)$.\footnote{In this work, we do not assume common priors always exist, but consider the individuals' beliefs about the external world to be the primitives of the model. Nevertheless, the case when a common prior exist is just a special case of our model: the seller and the buyer shares a common prior distribution $\mu(\omega, \theta)$, where $\omega$ is the state of the world and $\theta$ represents the buyer's private observation, or the buyer's type. Therefore when the buyer's type is realized to $\theta$, he will believe that $\omega$ follows distribution $\mu(\omega | \theta)$. This is a special case of our model, in which $\theta = \mu(\omega |\theta)$.}
 We also assume that the seller will choose and commit to an advertising rule before observing the realization of $\omega$. The timing is as follows
\begin{enumerate}
	\item[(0)] The seller chooses an advertising rule based on $\mu(\omega), \mu(\theta|\omega), u(\omega, a)$ and then posts the advertising rule.
	\item On each day, a new state of the world $\omega$ is drawn from $\mu(\omega)$, independent from what has been observed in the previous rounds. Only the seller observes $\omega$.  Some buyers come, with types $\theta \sim \mu(\theta|\omega)$ independently.
	\item The seller sends a signal $s$ according to the posted advertising rule and set price $p_s$ for the full revelation of $\omega$.
	\item The buyers decide to purchase or not.
\end{enumerate}
The seller's expected revenue per buyer will then equal 
\begin{align*}
	 \sum_{\omega \in \Omega} \mu(\omega) \sum_{\theta \in \Theta} \mu(\theta|\omega) \sum_{s\in S} \pi(\omega, s)\cdot  p_s \cdot \bm{1}(C(\eta^s(\theta)) \ge p_s),
\end{align*}
where $\eta^s(\theta)$ is the type-$\theta$ buyer's posterior belief when receiving $s$, and $C(\cdot)$ is the cost of uncertainty function.
The seller's goal is to find an advertising rule that maximizes his expected revenue (per buyer).

\section{Single Buyer Type}
We start with the case when the seller targets a single buyer type $\theta$, which applies to, for example, the case when the majority of the buyers share a common belief. This simplification also allows us to understand the hardness of the problem and gain some insight into the problem. 
In this case, our problem basically becomes a bayesian persuasion problem with heterogeneous priors~\citep{ALONSO2016672}. The optimal advertising rule can be represented as finding the concave closure of a function that is the product of the \emph{likelihood ratio}~\citep{ALONSO2016672} and the \emph{cost of uncertainty}~\citep{frankel2019quantifying}. We prove that finding the concave closure of this function is NP-hard in general. However, inspired by the observations from~\citep{rayo2010optimal}, we are able to drastically reduce the design space and formulate a finite-size convex program that computes the optimal advertising rule. When the size of $\Omega$ or the size of $A$ is a constant, the convex program  will have a polynomial size. Our convex program also shows that there exists an optimal mechanism that reveals $\le 2|A|$ possible realizations of the state $\omega$ at the beginning. We also apply our observations to the case when $\omega$ is binary and give a characterization of the optimal mechanism.

 
\subsection{Concave Closure Formulation} \label{sec:single_ccf}
When the designer targets a single buyer type, we can apply the characterization in~\citep{ALONSO2016672} to show the follows. 
\begin{definition}[Likelihood ratio~\citep{ALONSO2016672}] \label{def:likelihood}
	For a buyer with prior belief $\theta \in \Delta \Omega$, the \emph{likelihood ratio function} is defined over all of his possible posteriors $\eta$ as follows,
	$$
	R(\eta) = \sum_{\omega} \frac{\mu_\omega\cdot \eta_\omega}{\theta_\omega}.
	$$
\end{definition}
Here the likelihood ratio is well-defined because we assume that $\theta$ has a full support, i.e., $\theta_\omega>0$ for all $\omega$.
\begin{proposition} \label{thm:concave_f}
	When the designer targets a single buyer type $\theta$, her optimal expected revenue (per buyer) is equal to the value of the concave closure of 
	$$f(\eta) = R(\eta)\cdot C(\eta)$$
	at point $\theta$,
	where $R(\eta)$ is the likelihood ratio function (Definition~\ref{def:likelihood}) and $C(\eta)$ is the cost of uncertainty function (Definition~\ref{def:cost_of_unc}). More formally, finding the optimal advertising rule is equivalent to solving the following optimization problem 
\begin{align}
\max_{S,\bm{\eta},\bm{\phi}} &\quad \sum_{s\in S} \phi_s \cdot R(\eta^s) C(\eta^s) \label{prog:concave}\\
\text{s.t.} & \quad \sum_{s\in S} \phi_s \cdot \eta^s = \theta \notag \\
& \quad \phi_s \ge 0, \ \eta^s \in \Delta \Omega, \ \forall s. \notag
\end{align}

\end{proposition}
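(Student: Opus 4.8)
The plan is to reduce the seller's revenue‑maximization over all advertising rules to the optimization \eqref{prog:concave}, which by definition evaluates the concave closure of $f=R\cdot C$ at $\theta$; the reduction rests on three observations. First, once a signaling scheme $\pi$ is fixed, optimal pricing is forced. The per‑buyer revenue $\sum_{\omega}\mu(\omega)\sum_s\pi(\omega,s)\,p_s\,\bm{1}(C(\eta^s(\theta))\ge p_s)$ separates across signals, and the coefficient of signal $s$, namely $\big(\sum_\omega\mu(\omega)\pi(\omega,s)\big)p_s\,\bm{1}(C(\eta^s(\theta))\ge p_s)$, is maximized — using $C\ge 0$ and the convention that an indifferent buyer purchases — by $p_s=C(\eta^s(\theta))$. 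Hence the best revenue achievable with $\pi$ is $\sum_s\big(\sum_\omega\mu(\omega)\pi(\omega,s)\big)C(\eta^s(\theta))$.

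Second, I rewrite this through the buyer's eyes. Let $\phi_s:=\sum_\omega\theta_\omega\pi(\omega,s)$ be the probability the type‑$\theta$ buyer assigns to $s$; then \eqref{eqn:posterior_eta} reads $\theta_\omega\pi(\omega,s)=\phi_s\eta^s_\omega$, and summing over $s$ gives the Bayes‑plausibility constraint $\sum_s\phi_s\eta^s=\theta$ with $\phi_s\ge 0$ (and automatically $\sum_s\phi_s=1$). Since $\theta$ has full support we may write $\pi(\omega,s)=\phi_s\eta^s_\omega/\theta_\omega$, so that the seller's probability of sending $s$ is $\sum_\omega\mu(\omega)\pi(\omega,s)=\phi_s\sum_\omega\mu_\omega\eta^s_\omega/\theta_\omega=\phi_s R(\eta^s)$. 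Substituting, the best revenue for $\pi$ equals $\sum_s\phi_s R(\eta^s)C(\eta^s)=\sum_s\phi_s f(\eta^s)$, a feasible objective value of \eqref{prog:concave}.

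Third, the correspondence is onto: for any feasible $(\{\phi_s\},\{\eta^s\})$ of \eqref{prog:concave}, set $\pi(\omega,s):=\phi_s\eta^s_\omega/\theta_\omega$; one checks $\pi(\omega,s)\ge 0$ and $\sum_s\pi(\omega,s)=\theta_\omega^{-1}\sum_s\phi_s\eta^s_\omega=1$, so $\pi$ is a legitimate signaling scheme whose induced posterior upon $s$ is exactly $\eta^s$, and pricing $p_s=C(\eta^s)$ recovers revenue $\sum_s\phi_s f(\eta^s)$. Combining the two directions, the optimal revenue over advertising rules equals the optimum of \eqref{prog:concave}, which is $(\mathrm{cav}\,f)(\theta)$. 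Finally, since $R$ is linear and $C$ is a finite minimum of linear functions, $f$ is continuous on the compact simplex $\Delta\Omega$, so the concave closure is attained; applying Carathéodory's theorem to the hypograph of $f$ shows finitely many signals suffice, which justifies both the ``$\max$'' and the restriction to a finite signal set in the definition of an advertising rule.

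None of the steps is technically deep; the point that deserves care is the change of measure in the second step, since it is precisely the discrepancy between the probability the seller attaches to a signal (under the true prior $\mu$) and the probability the buyer attaches to it (under his belief $\theta$) that produces the likelihood‑ratio factor $R(\eta^s)$ — this is why the relevant function is the product $R\cdot C$ rather than $C$ alone, and why the single‑type problem becomes a heterogeneous‑prior Bayesian persuasion instance in the sense of \citep{ALONSO2016672}. A secondary, purely technical matter is the indifferent‑buyer tie‑break together with the continuity/Carathéodory argument, which is what lets the statement be phrased as an attained maximum.
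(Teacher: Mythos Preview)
Your proposal is correct and follows essentially the same approach as the paper: fix the price at $p_s=C(\eta^s)$, change measure from the seller's signal probability $\sum_\omega\mu_\omega\pi(\omega,s)$ to the buyer's $\phi_s=\sum_\omega\theta_\omega\pi(\omega,s)$ via the identity $\sum_\omega\mu_\omega\pi(\omega,s)=\phi_s R(\eta^s)$, and rewrite the feasibility constraints as Bayes plausibility $\sum_s\phi_s\eta^s=\theta$. You are slightly more explicit than the paper in spelling out the converse direction (recovering $\pi$ from $(\phi_s,\eta^s)$) and in invoking continuity and Carath\'eodory to justify attainment with finitely many signals, points the paper leaves implicit.
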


For completeness, we give the full proof of Proposition~\ref{thm:concave_f} in Appendix~\ref{app:proof_concave}.

The optimization problem basically tries to find a valid set of posteriors with associated probabilities that will maximize the designer's expected revenue. The set $S$ is just the set of signals. The variable $\eta^s$ represents the buyer's posterior after receiving $s$, assuming that his prior is $\theta$.  The variable $\phi_s$ represents the probability of receiving $s$ from the buyer's point of view. If we solve the optimal solution $S, \bm{\eta}, \bm{\phi}$ of~\eqref{prog:concave}, the optimal advertising rule $\langle S, \pi, \{p_s : s \in S\} \rangle$ represented by the solution has
$$ \pi(\omega, s) = \frac{\phi_s \cdot \eta^s_\omega}{\theta_\omega}, \quad \forall \omega \in \Omega, s\in S,$$
$$ p_s = C(\eta^s), \quad \forall s \in S.$$
So the random mapping $\pi$ is decided by the posteriors $\eta^s$ and their associated probabilities $\phi_s$ (from the buyer's point of view), and the optimal price for each signal $s$ is just the highest price that the type-$\theta$ buyer is willing to pay after seeing the signal, which is just the cost of uncertainty at $\eta^s$. 

Note that when the targeted type $\theta = \mu$, the likelihood ratio is always equal to one,
$$
R(\eta) = \sum_{\omega:\,\theta_\omega>0} \frac{\mu_\omega\cdot \eta_\omega}{\theta_\omega} = 1, \ \text{ for all possible posterior } \eta.
$$ As a result, the optimal objective value is the value of the concave closure of  $f(\eta) = C(\eta)$, which is a concave function. The concave closure of a concave function is just itself. As a result, the optimization problem~\eqref{prog:concave} will have an optimal solution
$$
S = \{s\}, \quad \phi_s = 1, \quad \eta^s = \theta,
$$ which means that it is optimal for the seller to not reveal any partial information, but directly set a price for the full revelation.
\begin{proposition} \label{prop:common_prior}
	When  $\theta = \mu$, one of the optimal advertising rules for the seller is to not reveal any partial information and directly charge a price for the full revelation, i.e.
	$$
	S = \{s\}, \quad \pi(\omega, s) = 1 \quad \forall \omega.
	$$
\end{proposition}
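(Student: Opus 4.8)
The plan is to deduce the proposition directly from Proposition~\ref{thm:concave_f}, which tells us that the seller's optimal revenue equals the value at $\theta$ of the concave closure of $f(\eta)=R(\eta)\,C(\eta)$, equivalently the optimal value of program~\eqref{prog:concave}. The first step is the elementary observation, already isolated in the paragraph preceding the statement: when $\theta=\mu$, for every posterior $\eta\in\Delta\Omega$ we have $R(\eta)=\sum_\omega (\mu_\omega/\theta_\omega)\,\eta_\omega=\sum_\omega \eta_\omega=1$, since $\eta$ is a probability vector. Hence $f(\eta)=C(\eta)$ on all of $\Delta\Omega$, and by Definition~\ref{def:cost_of_unc} the cost of uncertainty $C$ is the pointwise minimum of the $|A|$ affine maps $C_a$, so $C$ is concave on $\Delta\Omega$.

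The second step is to show that the trivial, single-signal decomposition is an optimal solution of~\eqref{prog:concave}. Given any feasible $(S,\bm\eta,\bm\phi)$, I would first sum the constraint $\sum_{s\in S}\phi_s\,\eta^s=\theta$ over the coordinates of $\Omega$ and use $\eta^s,\theta\in\Delta\Omega$ to get $\sum_{s\in S}\phi_s=1$; thus $\{\phi_s\}_{s\in S}$ is a genuine probability distribution over $S$ and $\theta$ is the corresponding barycenter of the $\eta^s$. Concavity of $C=f$ together with Jensen's inequality then gives $\sum_{s\in S}\phi_s\,C(\eta^s)\le C\big(\sum_{s\in S}\phi_s\,\eta^s\big)=C(\theta)$, so no feasible decomposition beats objective value $C(\theta)$. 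On the other hand the decomposition $S=\{s\}$, $\phi_s=1$, $\eta^s=\theta$ is feasible and attains objective value $R(\theta)\,C(\theta)=C(\theta)$, hence is optimal.

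Finally I would translate this optimal solution back into an advertising rule using the correspondence recorded just after Proposition~\ref{thm:concave_f}: $\pi(\omega,s)=\phi_s\,\eta^s_\omega/\theta_\omega=1$ for all $\omega$, and $p_s=C(\eta^s)=C(\theta)=C(\mu)$. This is precisely the rule that reveals no partial information and posts the single price $C(\mu)$, which proves the proposition. There is no genuinely hard step here: the whole content is that setting $\theta=\mu$ annihilates the likelihood-ratio factor, so program~\eqref{prog:concave} reduces to maximizing the expectation of a concave function over mean-preserving spreads of $\theta$, which is maximized by using no spread at all. The only point that needs a line of care is deriving $\sum_{s}\phi_s=1$ so that Jensen's inequality can be applied to the normalized weights $\{\phi_s\}$.
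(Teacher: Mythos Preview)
Your proposal is correct and follows essentially the same line as the paper: both observe that $\theta=\mu$ forces $R\equiv 1$ so that $f=C$ is concave, and then conclude that the trivial single-signal decomposition is optimal. The only difference is cosmetic---the paper phrases the last step as ``the concave closure of a concave function is itself,'' while you spell out the underlying Jensen inequality after checking $\sum_s\phi_s=1$.
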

But when the targeted type $\theta \neq \mu$, the function $f(\eta) = R(\eta)\cdot C(\eta)$ is neither concave nor convex in general. It will be possible that revealing partial information gives a higher expected revenue. We give an example of such beneficial partial information disclosure. 
\begin{example} \label{emp:concave_emp}
	 Suppose the state is binary and the two possible states happen with equal probability in the long term 
	 $$
	 \Omega = \{0,1\}, \quad \mu(0) = \mu(1) = 0.5.
	 $$
	 The buyers face a problem of guessing the state of the world, with utility equal to one when guessing correctly, equal to zero otherwise
	 $$
	 a\in\{0,1\}, \quad u(\omega, a) = \bm{1}(\omega = a).
	 $$ 
	 The targeted buyers believe that $\omega = 0$ with probability $0.8$, i.e. 
	 $$\theta(0) = 0.8,\quad \theta(1) = 0.2.$$ So the cost of uncertainty function and the likelihood ratio function are
	 $$C(\eta) = \min\{\eta_0, 1-\eta_0\}, \quad R(\eta) = \frac{0.5}{0.8}\eta_0 + \frac{0.5}{0.2} \eta_1.$$ The function $f(\eta) = R(\eta)C(\eta)$ and its concave closure $\overline{f}$ is plotted in Figure~\ref{fig:concave_emp} as a function of $\eta_0$. The plot shows that the optimal expected revenue is $\overline{f}(0.8) = \frac{5}{16}$. Since $(0.8, \overline{f}(0.8))$ is a convex combination of  $(0.5, f(0.5))$ and $(1, f(1))$, the optimal advertising rule sends two possible signals $S = \{s,t\}$ with $\eta^s = (0.5, 0.5)$ and $\eta^t = (1,0)$, which means that the optimal advertising rule has
	 $$
	 \pi(0, s) = \frac{1}{4}, \quad \pi(0, t) = \frac{3}{4}, \quad \pi(1,s) = 1, \quad \pi(1, t) = 0.
	 $$
\end{example}

\begin{figure}
\centering
\begin{tikzpicture}[line cap=round,line join=round,>=triangle 45,x=4cm,y=3.5cm]
\begin{axis}[
x=4cm, y=3.5cm,
axis lines=middle,
xmin=-0.1,
xmax=1.25,
ymin=-0.18,
ymax=1.1,
xtick={0,1},
ytick={0,1},
xlabel={$\eta_0$},
ylabel={$f(\eta_0)$}]
\clip(-0.16475975811684287,-0.13) rectangle (1.177126377704243,1.1712518807599313);
\draw[line width=1.7pt, dashed, smooth,samples=100,domain=0:0.5] plot(\x,{(\x)*(2.5-15/8*(\x))});
\draw[line width=1.3pt,dashed, smooth,samples=100,domain=0.5:1] plot(\x,{(1-(\x))*(2.5-15/8*(\x))});
\draw[line width=1.5pt,color=gray,smooth,samples=100,domain=0:0.5] plot(\x,{(\x)*(2.5-15/8*(\x))});
\draw [line width=1.5pt, color = gray] (0.5,25/32) -- (1,0);
\draw [line width=1pt, dotted, color = black] (0.8, 0) -- (0.8,1);
\draw [line width=1pt, dotted, color = black] (0.5, 0) -- (0.5,1);
\begin{scriptsize}
\draw[color=black] (0.41961001070846876,1.151772888465754) node {$g$};
\draw [fill=gray] (0.8,5/16) circle (2pt);
\draw[color=black] (0.97,0.35) node {$\overline{f}(0.8)$};
\draw[color=black] (0.65,0.13) node {$f(0.8)$};
\draw[color=black] (0.7139592275982553,1.151772888465754) node {$i$};
\draw [fill=gray] (0.5,0.78125) circle (2pt);
\draw[color=gray] (0.8,-0.09) node {$0.8$};
\draw[color=gray] (0.5,-0.09) node {$0.5$};
\draw [fill=gray] (1,0) circle (2pt);
\draw [fill=gray] (0.8,0.2) circle (2pt);

\end{scriptsize}
\end{axis}
\end{tikzpicture}	
\caption{An example of concave closure representation. The function in dashed black line is $f(\eta) = R(\eta)\cdot C(\eta)$ for Example~\ref{emp:concave_emp} and the function in gray line is its concave closure $\overline{f}$.}
\label{fig:concave_emp}
\end{figure}
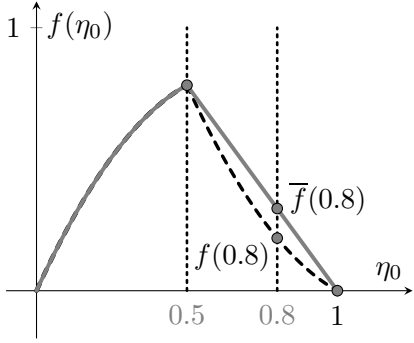

Although we have known that finding the optimal advertising rule is equivalent to finding the concave closure of function $R(\eta) \cdot C(\eta)$, it is still not easy to find the value of the concave closure and the mechanism that will generate the optimal revenue, especially when $\Omega$ is large and the posterior $\eta$ is a high-dimensional vector, in which case we cannot really plot the function $R(\eta) \cdot C(\eta)$. It is also not possible to directly solve the optimization problem~\eqref{prog:concave}, because there are infinitely many choices of $S$ and $\{\eta_s: s\in S\}$.
It turns out that the seller's optimal information advertising problem \eqref{prog:concave} is NP-hard.
\begin{theorem} \label{thm:single_hardness}
	When the seller targets the buyers of a specific type $\theta$, it is NP-hard to find the optimal advertising rule \eqref{prog:concave}.
\end{theorem}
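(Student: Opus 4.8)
The plan is to reduce from a known NP-hard problem about concave closures / signaling. The natural candidate is the hardness of computing optimal public signaling (or equivalently, evaluating the concave closure of a piecewise-linear function given implicitly), which is known to be NP-hard even in fairly restricted settings; a particularly clean route is to reduce from a covering-type problem such as \textsc{Set Cover} or \textsc{Independent Set}, or alternatively from the hardness of maximizing a function of the form ``weight times valuation'' established for Bayesian persuasion with many states. Concretely, I would encode an instance of the hard combinatorial problem into the state space $\Omega$ (one state per ground element, plus possibly a few dummy states), into the action set $A$ (one action per set/vertex, with $u(\omega,a)$ chosen so that $C_a(\eta)$ vanishes exactly when $\eta$ is supported on the ``block'' corresponding to $a$), and into the targeted type $\theta$ versus the prior $\mu$ so that the likelihood ratio $R(\eta)=\sum_\omega \mu_\omega \eta_\omega/\theta_\omega$ rewards posteriors that spread mass across many distinct blocks. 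Because by Proposition~\ref{thm:concave_f} the seller's optimal revenue equals $\overline{f}(\theta)$ where $f=R\cdot C$, and $\overline{f}(\theta)$ is a maximum over splittings $\sum_s \phi_s \eta^s = \theta$, an optimal splitting will want to use posteriors each of which (i) kills $C$ as little as possible—i.e.\ sits on a single block, making it cheap/valuable—while (ii) collectively covering $\theta$'s support, forcing the chosen blocks to form a feasible cover.

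The key steps, in order, would be: (1) fix the hard source problem and the exact gadget, specifying $\mu$, $\theta$, $u(\omega,a)$, and a target value $V$; (2) show the ``completeness'' direction—given a small/good combinatorial solution, build an explicit advertising rule (equivalently, a splitting of $\theta$ into posteriors each supported on one block) whose expected revenue meets $V$; here I would use posteriors that are the conditional of $\theta$ restricted to a block, renormalized, and check that $R(\eta)C(\eta)$ evaluates to the intended per-signal contribution; (3) show the ``soundness'' direction—given any advertising rule with revenue $\ge V$, argue that one may assume (using concavity of $C$ and Theorem~\ref{thm:inf_insight}-style support bounds, or a direct exchange argument) that every posterior in the support is essentially block-structured, and then read off a feasible combinatorial solution of the required quality; (4) verify the reduction is polynomial-time and that all utilities lie in $[0,1]$ after normalization, and that $\Theta$ is finite (here a single type).

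The main obstacle I expect is step (3), the soundness direction: a priori an optimal splitting of $\theta$ may use ``mixed'' posteriors that straddle several blocks, and I must argue these can be purified into block-supported posteriors without losing revenue, or else directly bound the contribution of mixed posteriors. The two tools for this are (a) the concavity of $C(\eta)=\min_a C_a(\eta)$, which lets me split any posterior along the pieces where a single $C_a$ is active without decreasing $\sum \phi_s C(\eta^s)$, combined with linearity of $R$ so that $\sum \phi_s R(\eta^s)C(\eta^s)$ is also controlled; and (b) choosing the gadget so that the likelihood-ratio weighting makes a posterior strictly worse as soon as it mixes blocks, so any optimal solution is forced to be block-pure. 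Getting the arithmetic of $R\cdot C$ to line up exactly with the combinatorial objective—so that the threshold $V$ cleanly separates yes- from no-instances—is the delicate calculation, but it is a finite gadget check rather than a conceptual difficulty.
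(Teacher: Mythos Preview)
Your route is genuinely different from the paper's. The paper does \emph{not} give a direct Karp reduction from a covering problem; it instead chains Turing reductions: (i) embed a constrained quadratic maximization from \cite{pardalos1991quadratic} into the separation-oracle problem $\max_{y\in\Delta_n}\{R(y)C_a(y)-\alpha^Ty : C_a(y)\le C_{a'}(y),\ \forall a'\}$; (ii) reduce this to the unconstrained oracle $\max_{y\in\Delta_n}\{R(y)C(y)-\alpha^Ty\}$ by rescaling the inactive $C_{a'}$ so steeply that the maximizer is forced back into $\mathcal{P}_a$; (iii) observe that an oracle for the concave closure $\overline{f}$ would solve this via the ellipsoid method, since $\max_y\{f(y)-\alpha^Ty\}=\max_y\{\overline{f}(y)-\alpha^Ty\}$ and the right-hand side is a concave maximization. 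So the paper's argument is indirect and never builds a single advertising instance whose optimal revenue encodes a combinatorial threshold.

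Your direct-reduction plan is reasonable in spirit, but the gadget as written is internally inconsistent. You ask that $C_a(\eta)$ vanish exactly when $\eta$ is supported on the block for $a$. By definition $C_a(\eta)=\sum_\omega\eta_\omega(\max_{a'}u(\omega,a')-u(\omega,a))$, so this forces $a$ to be optimal at every state in its block; hence for any block-pure posterior $\eta$ one has $C(\eta)\le C_a(\eta)=0$, and the revenue contribution $R(\eta)C(\eta)$ is zero. This is the opposite of what your completeness step needs, where block-pure posteriors are supposed to be the \emph{valuable} signals. Either the role of blocks must be inverted or the utilities must be redesigned so that $C$ is large, not zero, on the intended supports.

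Your soundness tool (a) also has the inequality backwards. Concavity of $C$ says that for any splitting $\eta^s=\sum_i\alpha_i\eta^{(i)}$ one has $\sum_i\alpha_i C(\eta^{(i)})\le C(\eta^s)$: splitting can only \emph{lose} $C$-mass, except within a single linear piece $\mathcal{P}_a$. Since $R$ is linear, you cannot in general purify a mixed posterior into block-pure ones without dropping $\sum_s\phi_s R(\eta^s)C(\eta^s)$; Lemma~\ref{lem:opt_decompose} only licenses splits along directions where $R$ and $C$ move together, which is a much weaker statement than what your purification step requires. The burden therefore falls entirely on your tool (b)---a gadget-specific penalty for mixing---which you have not constructed. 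Until a concrete gadget resolving both issues is exhibited, the proposal does not constitute a proof.
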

The proof of the theorem can be found in Appendix~\ref{app:single_hardness}.
Despite the hardness result, in the following sections, we show that it is possible to find the optimal advertising rule by a finite-size convex program. When the number of actions $|A|$ is a constant or the number of states $|\Omega|$ is a constant, the optimal advertising rule can be found in polynomial time.

\subsection{Properties of the Optimal Mechanism} \label{sec:single_prop}
In this section, we make some observations about the optimal advertising rule, based on some techniques from~\citep{rayo2010optimal}. These observations will drastically reduce the design space and eventually allow us to solve the optimal advertising rule by a finite-size convex program. 

The first observation (Lemma~\ref{lem:opt_size}) is that there exists an optimal mechanism with $|S|\le n = |\Omega|$. The second and the third observations (Lemma~\ref{lem:opt_decompose} and Lemma~\ref{lem:opt_never})are the necessary conditions for a mechanism to be optimal, by considering the likelihood ratio and the cost of uncertainty generated by the mechanism.

First, having a larger set of signals $S$ may help the seller extract more revenue, but we show that a set of $n$ signals is sufficient for the seller to maximize the expected revenue.
\begin{lemma} \label{lem:opt_size}
There exists an optimal advertising rule $\langle S, \pi \rangle$ with $|S| \le n = |\Omega|$.
\end{lemma}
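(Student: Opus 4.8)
The plan is to argue directly about the optimization problem \eqref{prog:concave}, whose feasible solutions correspond to advertising rules exactly as described right after Proposition~\ref{thm:concave_f} (deleting posteriors in the program deletes signals in the rule); so it suffices to produce an optimal triple $(S,\bm{\eta},\bm{\phi})$ for \eqref{prog:concave} with $|S|\le n$. First I record the structure of a feasible solution: summing the constraint $\sum_s\phi_s\eta^s=\theta$ over the $n$ coordinates and using $\eta^s,\theta\in\Delta\Omega$ gives $\sum_s\phi_s=1$, so a feasible solution is precisely a way of writing the point $(\theta,\mathrm{obj})\in\mathbb{R}^{n+1}$ as a convex combination $\sum_s\phi_s\,(\eta^s,f(\eta^s))$ of points on the graph of $f=R\cdot C$. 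Since $C$ is concave (hence continuous) and $R$ is linear, $f$ is continuous on the compact simplex $\Delta\Omega$, so $\mathrm{conv}\{(\eta,f(\eta)):\eta\in\Delta\Omega\}$ is compact; hence the supremum in \eqref{prog:concave}, which by Proposition~\ref{thm:concave_f} equals the value $V:=\overline{f}(\theta)$ of the concave closure at $\theta$, is attained by a finitely supported combination. Moreover every graph point obeys the single affine relation $\sum_\omega\eta_\omega=1$, so the graph lies in an affine subspace of $\mathbb{R}^{n+1}$ of dimension $n$, and Carathéodory's theorem already yields an optimal solution with $|S|\le n+1$.

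To remove the last signal I would use a supporting-hyperplane argument in the spirit of \citep{rayo2010optimal}. Fix an optimal solution with $m:=|S|\le n+1$ and let $K:=\mathrm{conv}\{(\eta^s,f(\eta^s)):s\in S\}$, a polytope contained in the $n$-dimensional affine space $H_1:=\{x\in\mathbb{R}^{n+1}:\sum_{i=1}^n x_i=1\}$. Optimality forces $(\theta,V)$ to sit low in $K$: if $(\theta,V+\varepsilon)\in K$ for some $\varepsilon>0$, expanding it as a convex combination $\sum_s\psi_s(\eta^s,f(\eta^s))$ would give a feasible solution of \eqref{prog:concave} (its first $n$ coordinates read $\sum_s\psi_s\eta^s=\theta$) with objective $V+\varepsilon>V$, a contradiction. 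If $\dim K=n$, this shows $(\theta,V)\notin\mathrm{int}_{H_1}K$ — the vertical direction $e_{n+1}$ lies in the direction space of $H_1$, so an $H_1$-neighbourhood of a relative interior point would contain such a point $(\theta,V+\varepsilon)$ — hence $(\theta,V)$ lies on a supporting hyperplane of $K$ within $H_1$, which cuts out a proper face $F$ with $\dim F\le n-1$; since $F$ is the convex hull of the vertices of $K$ lying on it, Carathéodory inside $\mathrm{aff}(F)$ expresses $(\theta,V)$ as a convex combination of at most $n$ of them. If instead $\dim K\le n-1$, then Carathéodory applied directly inside $\mathrm{aff}(K)$ already gives at most $\dim K+1\le n$ generators. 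Either way, discarding the remaining signals yields an optimal solution of \eqref{prog:concave}, hence an optimal advertising rule, with $|S|\le n$.

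There is essentially no computation here; the one place that requires care is the dimension bookkeeping that produces the bound $n$ rather than $n+1$. It hinges on two facts acting together: all the graph points satisfy the single linear relation $\sum_\omega\eta_\omega=1$, which shrinks the effective ambient dimension from $n+1$ to $n$, and optimality pushes $(\theta,V)$ onto the relative boundary of $K$, which shaves off one more dimension — with the degenerate case $\dim K<n$ dispatched separately as above. I therefore expect the only subtle step of the write-up to be stating ``$(\theta,V)$ is not a relative interior point of $K$'' correctly for a polytope that need not be full-dimensional in $H_1$.
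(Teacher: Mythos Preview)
Your proof is correct but follows a different route from the paper's. The paper argues directly by linear algebra: if $|S|=k>n$ then the posteriors $\eta^{(1)},\dots,\eta^{(k)}\in\mathbb{R}^n$ are linearly dependent, so there is a nonzero $\alpha$ with $\sum_i\alpha_i\eta^{(i)}=0$; perturbing the weights along this direction preserves feasibility, optimality forces the objective to be constant along it (otherwise one could strictly improve), and sliding until some weight hits zero removes a signal. This is essentially Carath\'eodory unpacked, with the optimality condition folded into the perturbation step so that the bound $n$ (rather than $n+1$) falls out in one move. Your argument instead lifts to the graph of $f$ in $\mathbb{R}^{n+1}$, uses the affine relation $\sum_\omega\eta_\omega=1$ plus Carath\'eodory to get $|S|\le n+1$, and then invokes a separate supporting-hyperplane step---optimality places $(\theta,V)$ on the relative boundary of the convex hull of the chosen graph points, hence in a proper face of dimension $\le n-1$---to shave off the last signal. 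Both are valid; the paper's version is shorter and more self-contained, while yours makes the convex-geometric picture explicit and cleanly separates the ``Carath\'eodory'' part from the ``optimality'' part. The one place your write-up needs care, as you note yourself, is the case split on $\dim K$; your handling of it is fine.
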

We give the proof of the lemma in Appendix~\ref{app:single_opt_size}. The idea of the proof is that for any optimal mechanism with $|S|>n$, we can replace one of the signals with a convex combination of the others so that $|S|$ can be decreased by one.

Second, an optimal advertising rule $\langle S, \pi \rangle$ should not have an $\eta^s$ that can be decomposed $\eta^s = \alpha \eta^{(1)} + (1-\alpha) \eta^{(2)}$ to strictly increase the expected revenue
$$
R(\eta^s) C(\eta^s) < \alpha R(\eta^{(1)}) C(\eta^{(1)}) + (1-\alpha) R(\eta^{(2)}) C(\eta^{(2)}).
$$
So $R(\eta^s) C(\eta^s)$ should be locally concave. Let's look at function $R(\eta^s) C(\eta^s)$. The likelihood ratio $R(\eta^s)$ is a linear function of $\eta^s$, and the cost of uncertainty $C(\eta^s) = \min_a C_a(\eta^s)$ is a piece-wise linear function of $\eta^s$. Let 
 \begin{align} \label{eqn:P_a}
\mathcal{P}_a = \{ \eta\in \Delta \Omega: C_a(\eta) \le C_{a'}(\eta), \forall a'\}	
 \end{align}
be the region in which $C(\eta) = C_a(\eta)$, which means that when a buyer's belief falls in $\mathcal{P}_a$, action $a$ will be his best action. We consider the local convexity/concavity of $R(\eta^s) C(\eta^s)$ within $\mathcal{P}_a$. Notice the following fact.
\begin{fact}
Function $g(x,y) = x\cdot y$ with Hessian matrix
\begin{align*}
H = \left[\begin{array}{ll} 0& 1 \\ 1 & 0\end{array} \right]
\end{align*}
is strictly convex along a direction $d = (d_x, d_y)$ with a positive slope $d_x d_y >0$, i.e., for any point $(x_0, y_0)$, function
$$
h(t) = g(x_0 + t d_x,\ y_0 + t d_y)
$$ 
is strictly convex when $d_x d_y >0$.
Because the second directional derivative of $g$ in the direction $d$ at any point $(x_0, y_0)$ is equal to 
\begin{align*}
d^T H d = 2d_x d_y > 0.
\end{align*}
For the same reason, $g(x,y)$ is strictly concave along a direction $d = (d_x, d_y)$ with a negative slope $d_x d_y < 0$.
\end{fact}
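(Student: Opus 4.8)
The plan is to reduce this two-variable statement to a one-dimensional calculus fact about quadratics. First I would fix an arbitrary base point $(x_0,y_0)$ and a direction $d=(d_x,d_y)$, and expand the restriction explicitly: $h(t)=g(x_0+td_x,\,y_0+td_y)=(x_0+td_x)(y_0+td_y)=d_xd_y\,t^2+(x_0d_y+y_0d_x)\,t+x_0y_0$. Thus $h$ is a quadratic polynomial in the single real variable $t$ with leading coefficient $d_xd_y$.

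Next I would invoke the elementary fact that a univariate quadratic $t\mapsto at^2+bt+c$ is strictly convex on $\mathbb{R}$ when $a>0$ and strictly concave when $a<0$ (equivalently, $h''(t)=2a$ has constant sign). Applying this with $a=d_xd_y$ yields both halves of the claim at once: $h$ is strictly convex whenever $d_xd_y>0$ and strictly concave whenever $d_xd_y<0$. Equivalently --- and this is the phrasing the statement already hints at --- one can read off the sign directly from the constant Hessian $H$ of $g$, since the second directional derivative of $g$ in the direction $d$ at any point equals $d^{\mathsf T}Hd=2d_xd_y$; strict convexity of $g$ along the line through $(x_0,y_0)$ with direction $d$ is by definition strict convexity of $t\mapsto g\big((x_0,y_0)+td\big)$, which holds iff this second directional derivative is positive, and similarly for concavity.

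There is essentially no hard step here; the only thing to be careful about is that ``\emph{strictly convex along a direction} $d$'' is a statement about the one-dimensional restriction $h$, not about $g$ itself --- indeed $g(x,y)=xy$ is an indefinite quadratic form and is neither globally convex nor globally concave, which is exactly why the sign of $d_xd_y$ (rather than positive-definiteness of $H$) is the relevant condition. Keeping the conclusion phrased in terms of $h$, or equivalently in terms of the second directional derivative $d^{\mathsf T}Hd$, avoids this pitfall.
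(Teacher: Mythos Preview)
Your proposal is correct and follows essentially the same approach as the paper: both compute the second derivative of the one-dimensional restriction $h(t)$ and observe it equals $2d_xd_y$, with the paper phrasing this via the Hessian quadratic form $d^{\mathsf T}Hd$ and you additionally expanding $h(t)$ explicitly as a quadratic with leading coefficient $d_xd_y$.
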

Consider $x = R(\eta^s), y = C(\eta^s)$ and $g(x,y) = R(\eta^s)C(\eta^s)$. Then for two-dimensional points $\{(R(\eta), C(\eta)): \eta \in \mathcal{P}_a\}$, we should have the following lemma (illustrated in Figure~\ref{fig:lem_decompose}).  
\begin{lemma} \label{lem:opt_decompose}
	 Let $\langle S, \pi \rangle$ be
	 an optimal advertising rule. Consider a single $s\in S$ with $\phi_s>0$. Let $\eta^s$ be the buyer's
	 posterior when $s$ is sent.
	 Suppose $\eta^s \in \mathcal{P}_a$. Define $\mathcal{Q}_a = \{(R(\eta), C(\eta)): \eta \in \mathcal{P}_a\}$ as the region on xy-plane that represents the likelihood ratio and the cost of uncertainty of the points in $\mathcal{P}_a$. Then the point $(R(\eta^s), C(\eta^s))$ cannot be decomposed along a direction with a positive slope within $\mathcal{Q}_a$, that is, there cannot exist $\eta^{(1)},\eta^{(2)} \in \mathcal{P}_a$ with 
	 $$
	 \eta^s = \alpha \eta^{(1)} + (1-\alpha) \eta^{(2)}, \quad \alpha \in(0,1)
	 $$ 
	 and 
	 $$
	 (R(\eta^s) - R(\eta^{(1)})) (C(\eta^s) - C(\eta^{(1)}))>0.
	 $$
\end{lemma}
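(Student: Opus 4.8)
The plan is to argue by contradiction: if such a ``positive-slope'' decomposition of $\eta^s$ exists, splitting the signal $s$ into two signals strictly increases the objective of \eqref{prog:concave}, contradicting optimality of $\langle S,\pi\rangle$.

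First I would set up the split. Suppose toward a contradiction that there exist $\eta^{(1)},\eta^{(2)}\in\mathcal{P}_a$ and $\alpha\in(0,1)$ with $\eta^s=\alpha\eta^{(1)}+(1-\alpha)\eta^{(2)}$ and $(R(\eta^s)-R(\eta^{(1)}))(C(\eta^s)-C(\eta^{(1)}))>0$. Starting from the optimal tuple $(S,\bm\eta,\bm\phi)$, I would delete $s$ and add two new signals $s_1,s_2$ with posteriors $\eta^{(1)},\eta^{(2)}$ and associated probabilities $\phi_{s_1}=\phi_s\alpha$, $\phi_{s_2}=\phi_s(1-\alpha)$. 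Feasibility for \eqref{prog:concave} is immediate: $\phi_s\eta^s=\phi_s\alpha\,\eta^{(1)}+\phi_s(1-\alpha)\,\eta^{(2)}$, so the constraint $\sum\phi\cdot\eta=\theta$ is preserved; the new probabilities are nonnegative; and $\eta^{(1)},\eta^{(2)}\in\Delta\Omega$ by hypothesis. (In terms of the advertising rule this is just the standard splitting of a signal, which keeps $\pi(\cdot,\cdot)$ a valid signaling scheme and, via $p_{s_i}=C(\eta^{(i)})$, a valid price menu.)

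Next I would compute the change in objective. Write $x_i=R(\eta^{(i)})$, $y_i=C(\eta^{(i)})$. Since $R$ is linear, $R(\eta^s)=\alpha x_1+(1-\alpha)x_2$; and since the whole segment $[\eta^{(1)},\eta^{(2)}]$ lies in the convex set $\mathcal{P}_a$ on which $C$ coincides with the linear function $C_a$, also $C(\eta^s)=\alpha y_1+(1-\alpha)y_2$. A short algebra step (the two-point covariance identity) then yields
$$
\bigl(\phi_s\alpha\,x_1y_1+\phi_s(1-\alpha)\,x_2y_2\bigr)-\phi_s R(\eta^s)C(\eta^s)=\phi_s\,\alpha(1-\alpha)\,(x_1-x_2)(y_1-y_2).
$$
Finally, because $R(\eta^s)-R(\eta^{(1)})=(1-\alpha)(x_2-x_1)$ and $C(\eta^s)-C(\eta^{(1)})=(1-\alpha)(y_2-y_1)$, the hypothesis is equivalent to $(x_1-x_2)(y_1-y_2)>0$; hence the displayed difference is strictly positive, so the split rule has strictly larger expected revenue, contradicting optimality.

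\textbf{Where the care is needed.} There is no real obstacle here; the computation is routine. The two points that must not be glossed over are (i) feasibility of the split — in particular that $\eta^{(1)},\eta^{(2)}\in\Delta\Omega$ and that the probabilities rescale correctly so $\sum\phi\cdot\eta=\theta$ still holds — and (ii) the use of linearity of $C$ \emph{along the segment}, which is exactly why the lemma requires $\eta^{(1)},\eta^{(2)}\in\mathcal{P}_a$ (the same region as $\eta^s$); without this, one would only have $C(\eta^s)\ge\alpha C(\eta^{(1)})+(1-\alpha)C(\eta^{(2)})$ by concavity of $C$, and the sign bookkeeping in the last step would no longer go through.
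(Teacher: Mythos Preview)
Your proposal is correct and follows essentially the same approach as the paper: argue by contradiction, split $s$ into two signals with posteriors $\eta^{(1)},\eta^{(2)}$ and weights $\phi_s\alpha,\phi_s(1-\alpha)$, use linearity of $R$ and of $C$ on $\mathcal{P}_a$, and show the objective strictly increases. The only cosmetic difference is that the paper invokes the previously stated Fact that $g(x,y)=xy$ is strictly convex along directions of positive slope, whereas you write out the equivalent two-point covariance identity explicitly.
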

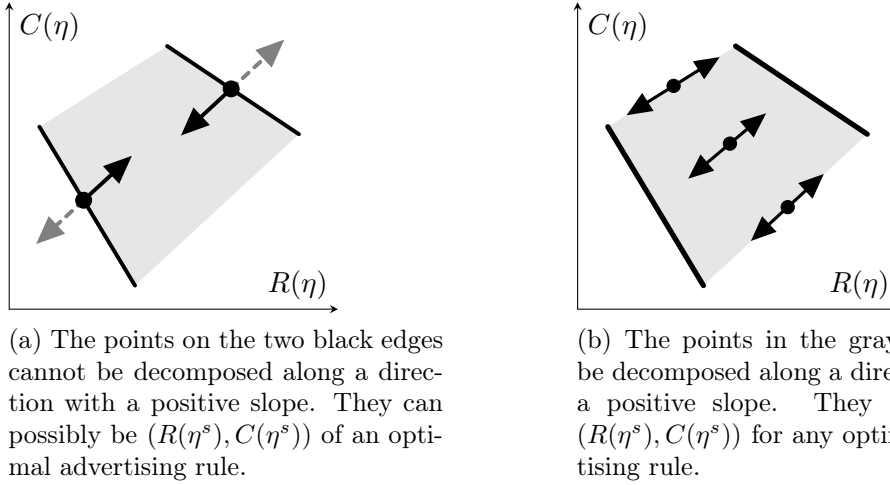
\begin{figure}
\centering

\begin{subfigure}[t]{0.35\textwidth}
\begin{tikzpicture}[line cap=round,line join=round,>=triangle 45,x=0.35cm,y=0.35cm]
\begin{axis}[
x=0.35cm,y=0.35cm,
axis lines=middle,
xmin=2,
xmax=14.489105128421295,
ymin=0,
ymax=11.669696573086298,
xtick={0},
ytick={0},
xlabel={$ R(\eta)$},
ylabel={$C(\eta)$}
]
\clip(-1.9393920529821438,-2.1918479237229467) rectangle (14.489105128421295,11.669696573086298);
\fill[line width=2pt,fill=black,fill opacity=0.1] (3.146917172159998,6.933627877420022) -- (8,10) -- (13.003138520540004,6.650674154500022) -- (6.7781566163,0.8972817884600154) -- cycle;
\draw [line width=1.5pt] (8,10)-- (13.003138520540004,6.650674154500022);
\draw [line width=1.5pt] (6.7781566163,0.8972817884600154)-- (3.146917172159998,6.933627877420022);
\draw [->,line width=1.5pt] (4.827476515624367,4.139970786985745) -- (6.659899440408722,5.829879215634255);
\draw [->, gray, line width=1.5pt,dashed] (4.827476515624367,4.139970786985745) -- (3.0201222620026544,2.473181343654313);
\draw [->,line width=1.5pt] (10.430253449298478,8.373079087170204) -- (8.520940136739583,6.599965021012547);
\draw [->, gray, dashed, line width=1.5pt] (10.430253449298478,8.373079087170204) -- (12.449244954950471,10.248047545503807);
\begin{scriptsize}
\draw [fill=black] (4.827476515624367,4.139970786985745) circle (3pt);
\draw [fill=black] (10.430253449298478,8.373079087170204) circle (3pt);
\end{scriptsize}
\end{axis}
\end{tikzpicture}
\caption{The points on the two black edges cannot be decomposed along a direction with a positive slope.  They can possibly be $(R(\eta^s), C(\eta^s))$ of an optimal advertising rule. }
\end{subfigure}\hspace{0.1\textwidth}
\begin{subfigure}[t]{0.35\textwidth}
\begin{tikzpicture}[line cap=round,line join=round,>=triangle 45,x=0.35cm,y=0.35cm]
\begin{axis}[
x=0.35cm,y=0.35cm,
axis lines=middle,
xmin=2,
xmax=14.232409859961876,
ymin=0,
ymax=11.669696573086298,
xtick={0},
ytick={0},
xlabel={ $ R(\eta)$},
ylabel={$C(\eta)$}
]
\clip(-1.9393920529821496,-2.1918479237229467) rectangle (14.232409859961876,11.669696573086298);
\fill[line width=2pt,fill=black,fill opacity=0.1] (3.146917172159998,6.933627877420022) -- (8,10) -- (13.003138520540004,6.650674154500022) -- (6.7781566163,0.8972817884600154) -- cycle;
\draw [line width=2pt] (8,10)-- (13.003138520540004,6.650674154500022);
\draw [line width=2pt] (6.7781566163,0.8972817884600154)-- (3.146917172159998,6.933627877420022);
\draw [->,line width=1.2pt] (5.645388579288745,8.487440839411118) -- (7.401249869851225,9.584854146012672);
\draw [->,line width=1.2pt] (7.785344527161768,6.29261422620801) -- (9.157111160413706,7.444898198139642);
\draw [->,line width=1.2pt] (5.645388579288745,8.487440839411118) -- (3.8599341209077966,7.3715318029230215);
\draw [->,line width=1.2pt] (7.785344527161768,6.29261422620801) -- (6.199140811010665,4.960203104641079);
\draw [->,line width=1.2pt] (9.980171140364869,3.8783049516845907) -- (11.351937773616806,5.140330254276378);
\draw [->,line width=1.2pt] (9.980171140364869,3.8783049516845907) -- (8.581064270921601,2.5911266317967794);
\begin{scriptsize}
\draw [fill=black] (7.785344527161768,6.29261422620801) circle (2.5pt);
\draw [fill=black] (5.645388579288745,8.487440839411118) circle (2.5pt);
\draw [fill=black] (9.980171140364869,3.8783049516845907) circle (2.5pt);
\end{scriptsize}
\end{axis}
\end{tikzpicture}
\caption{The points in the gray area can be decomposed along a direction with a positive slope.   They cannot be $(R(\eta^s), C(\eta^s))$ for any optimal advertising rule.}
\end{subfigure}
\caption{An illustration of Lemma~\ref{lem:opt_decompose}. We plot the ratio of uncertainty ($x$-coordinate) and the cost of uncertainty ($y$-coordinate) of the points $\eta$ inside polytope $\mathcal{P}_a$. The polygon in the pictures represents the region $\mathcal{Q}_a = \{(R(\eta), C(\eta)): \eta \in \mathcal{P}_a\}$. The two black edges represent the points that can possibly be $(R(\eta^s), C(\eta^s))$ for some $s\in S$ of an optimal advertising rule. The gray area represents the points that cannot be $(R(\eta^s), C(\eta^s))$ for any $s\in S$ of an optimal advertising rule.}
\label{fig:lem_decompose}
\end{figure}


Third, an optimal advertising rule $\langle S, \pi \rangle$ should not have two $\eta^s, \eta^t$ with $s,t\in S$ that can be merged into one signal 
\begin{align} \label{eqn:single_merge}
\eta^{r} = \frac{\phi_s}{\phi_s + \phi_t}\cdot \eta^{s} + \frac{\phi_t}{\phi_s + \phi_t} \cdot \eta^{t}	
\end{align}
 to strictly increase the expected revenue. We prove the follows.

\begin{lemma} \label{lem:opt_never}
The optimal advertising rule $\langle S, \pi \rangle$ should not send two signals $s, t\in S$ with $\phi_s, \phi_t >0$ that have 
\begin{align*}
(R(\eta^s) - R(\eta^t))(C(\eta^s) - C(\eta^t)) < 0.
\end{align*}
\end{lemma}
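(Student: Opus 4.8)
The plan is a merging argument, dual to the decomposition argument behind Lemma~\ref{lem:opt_decompose}, carried out in the $(R,C)$-plane. Suppose toward a contradiction that an optimal solution $\langle S,\pi\rangle$ of \eqref{prog:concave} contains signals $s,t\in S$ with $\phi_s,\phi_t>0$ and $(R(\eta^s)-R(\eta^t))(C(\eta^s)-C(\eta^t))<0$. I would merge $s$ and $t$ into a single signal: set $\lambda=\phi_s/(\phi_s+\phi_t)\in(0,1)$, introduce a new signal $r$ with posterior $\eta^r=\lambda\eta^s+(1-\lambda)\eta^t$ and weight $\phi_r=\phi_s+\phi_t$, and remove $s$ and $t$ from $S$. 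The first step is to check feasibility of the modified solution: $\eta^r\in\Delta\Omega$ since it is a convex combination of points of $\Delta\Omega$, $\phi_r\ge0$, and $\phi_r\eta^r=\phi_s\eta^s+\phi_t\eta^t$, so the constraint $\sum_{s'}\phi_{s'}\eta^{s'}=\theta$ of \eqref{prog:concave} still holds and nothing else in $S$ is touched.

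The heart of the argument is that the merged signal contributes strictly more to the objective than $s$ and $t$ together, i.e. $\phi_r R(\eta^r)C(\eta^r)>\phi_s R(\eta^s)C(\eta^s)+\phi_t R(\eta^t)C(\eta^t)$; dividing by $\phi_r$, it suffices to show $R(\eta^r)C(\eta^r)>\lambda R(\eta^s)C(\eta^s)+(1-\lambda)R(\eta^t)C(\eta^t)$. Because $R$ is linear (Definition~\ref{def:likelihood}), $R(\eta^r)=\lambda R(\eta^s)+(1-\lambda)R(\eta^t)$; because $C$ is concave (Definition~\ref{def:cost_of_unc}) and $R(\eta^r)\ge0$,
\[
R(\eta^r)C(\eta^r)\;\ge\;R(\eta^r)\big(\lambda C(\eta^s)+(1-\lambda)C(\eta^t)\big)\;=\;\big(\lambda R(\eta^s)+(1-\lambda)R(\eta^t)\big)\big(\lambda C(\eta^s)+(1-\lambda)C(\eta^t)\big).
\]
Now the chord joining $(R(\eta^s),C(\eta^s))$ and $(R(\eta^t),C(\eta^t))$ has direction $d=(d_x,d_y)$ with $d_xd_y=(R(\eta^s)-R(\eta^t))(C(\eta^s)-C(\eta^t))<0$, so by the Fact stated above the product $g(x,y)=xy$ is strictly concave along this chord; evaluating at the interior point $\lambda(R(\eta^s),C(\eta^s))+(1-\lambda)(R(\eta^t),C(\eta^t))$, with $\lambda\in(0,1)$ giving the strict inequality, yields $\big(\lambda R(\eta^s)+(1-\lambda)R(\eta^t)\big)\big(\lambda C(\eta^s)+(1-\lambda)C(\eta^t)\big)>\lambda R(\eta^s)C(\eta^s)+(1-\lambda)R(\eta^t)C(\eta^t)$. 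Chaining the two inequalities proves the claim, contradicting optimality of $\langle S,\pi\rangle$.

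I do not expect a serious obstacle here: the computation is short and essentially the mirror image of Lemma~\ref{lem:opt_decompose}. The two points that need care are (i) confirming the merged rule is still a feasible point of \eqref{prog:concave} (membership in the simplex and preservation of $\sum_{s'}\phi_{s'}\eta^{s'}=\theta$), and (ii) making sure the strict inequality is not lost when combining the concavity-of-$C$ slack on the $C$-coordinate with the strict concavity of $xy$ along a negative-slope direction — which is fine precisely because $\phi_s,\phi_t>0$ forces $\lambda\in(0,1)$.
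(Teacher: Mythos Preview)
Your proposal is correct and follows essentially the same merging argument as the paper: merge $s$ and $t$ into a single signal with weight $\phi_s+\phi_t$ and posterior $\lambda\eta^s+(1-\lambda)\eta^t$, use linearity of $R$ and concavity of $C$, and conclude via the strict concavity of $(x,y)\mapsto xy$ along a negative-slope chord. The paper carries out the same computation with an explicit algebraic expansion yielding the gain $-\frac{\phi_s\phi_t}{\phi_s+\phi_t}(R(\eta^s)-R(\eta^t))(C(\eta^s)-C(\eta^t))>0$, whereas you invoke the Fact about $g(x,y)=xy$; the two are equivalent.
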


The omitted full proof can be found in Appendix~\ref{app:single_opt_decompose} and~\ref{app:lem_never}.

\subsection{Optimal Mechanism by Convex Program} \label{sec:single_convex_exp}

With these observations, we reduce~\eqref{prog:concave} to a finite-size convex program. The key idea is to reduce the design space to a finite set by showing that there exists an optimal advertising rule with each $\eta^s$ lying on the segments between the vertices of $\mathcal{P}_a$. Then, fortunately, by defining variables associated with the segments, the expected revenue is convex.

	
As we show in  Figure~\ref{fig:lem_decompose}, point $(R(\eta^s), C(\eta^s))$ for an optimal advertising rule should lie on the boundary of region $\mathcal{Q}_a$. A reasonable conjecture is that $\eta^s$ also lies on the boundary of $\mathcal{P}_a$. We claim that there exists an optimal advertising rule with each $\eta^s$ lying on the segments between the vertices of $\mathcal{P}_a$. 

\paragraph{Vertices of $C(\theta)$.} 
Define $\mathcal{H}_a$ as the set of vertices of the polytope $\mathcal{P}_a$,
$$
\mathcal{H}_a = \big\{ \text{vertices of } \mathcal{P}_a\big\}.
$$
where $\mathcal{P}_a$ is defined in~\eqref{eqn:P_a}. We prove that there exists an optimal advertising rule with each $\eta^s$ lying on the segments between the vertices in $\mathcal{H}_a$ for some $a$.

\begin{lemma} \label{lem:point_pair}
 There exists an optimal advertising rule $\langle S, \pi\rangle$ that has each $\eta^s$ lying on the segments between the vertices in $\mathcal{H}_a$ for some $a$, i.e., for all $s \in S$,
\begin{align*}
&\eta^s = \beta \cdot i + (1-\beta) \cdot j \text{ with } \beta \in [0,1], \\
&i, j \in \mathcal{H}_a \text{ for some } a, 
\end{align*}
 and each $\eta^s = \beta \cdot i + (1-\beta) \cdot j$ must have
$$
(R(i) - R(j))(C(i) - C(j)) \le 0,
$$
and for each pair $i,j$, there is a unique $\eta^s$ that lies on the segment between $i,j$.
\end{lemma}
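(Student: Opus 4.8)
The plan is to start from an optimal advertising rule (by Lemma~\ref{lem:opt_size} we may assume $|S|\le n$) and post-process it one signal at a time, pushing each posterior onto a vertex-to-vertex segment of the relevant polytope $\mathcal{P}_a$ while leaving both the feasibility constraint $\sum_s\phi_s\eta^s=\theta$ and the total revenue unchanged (so optimality is preserved at every step). Fix a signal $s$ with $\phi_s>0$ and say $\eta^s\in\mathcal{P}_a$. On $\mathcal{P}_a$ we have $C(\eta)=C_a(\eta)$, so $L(\eta):=(R(\eta),C(\eta))$ is an affine map of $\mathcal{P}_a$ onto the convex polygon $\mathcal{Q}_a$. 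Put $q=L(\eta^s)$. By Lemma~\ref{lem:opt_decompose}, $q$ admits no decomposition along a positive-slope direction inside $\mathcal{Q}_a$; but any point in the relative interior of $\mathcal{Q}_a$, or of a positive-slope edge of $\mathcal{Q}_a$, does have a positive-slope chord through it. Hence $q$ is either a vertex of $\mathcal{Q}_a$ or lies in the relative interior of an edge $e$ of $\mathcal{Q}_a$ of non-positive slope.

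The heart of the argument is to spread $\eta^s$ only within the level set $S_q:=\{\eta\in\mathcal{P}_a:L(\eta)=q\}$, on which $R\cdot C\equiv q_xq_y$ is constant, and to show $S_q$ has enough structure. The edge $e$ (resp.\ the vertex $q$) is exposed by a linear functional $\ell$ on the plane, so $F:=L^{-1}(e)\cap\mathcal{P}_a$ (resp.\ $L^{-1}(q)\cap\mathcal{P}_a$) is the face of $\mathcal{P}_a$ minimizing $\ell\circ L$; its vertices lie in $\mathcal{H}_a$ and its edges are edges of $\mathcal{P}_a$, i.e.\ segments between two vertices of $\mathcal{H}_a$. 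Since $L$ maps $F$ onto the one-dimensional set $e$, at least one of $R,C_a$ is non-constant on $F$ and the other is an affine function of it there; consequently $S_q$ is the intersection of the face $F$ with a single hyperplane (or is all of $F$, when $q$ is a vertex of $\mathcal{Q}_a$). A routine polytope-slicing argument then shows that every vertex $w$ of $S_q$ is a vertex of $F$ or lies in the relative interior of an edge of $F$; in either case $w$ sits on a segment $[i,j]$ with $i,j\in\mathcal{H}_a$ whose $L$-image is contained in $e$, which forces $(R(i)-R(j))(C(i)-C(j))\le 0$. Now decompose $\eta^s=\sum_k\lambda_k w_k$ over the vertices of $S_q$ and replace signal $s$ (probability $\phi_s$) by signals with posteriors $w_k$ and probabilities $\phi_s\lambda_k$: the constraint $\sum\phi\eta=\theta$ is untouched, and since every $w_k\in S_q$ has $R(w_k)C(w_k)=q_xq_y$, the revenue contribution $\phi_s R(\eta^s)C(\eta^s)$ is preserved exactly. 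Iterating over all original signals yields an optimal rule in which every posterior lies on a segment $[i,j]$ with $i,j\in\mathcal{H}_a$ for some $a$ and $(R(i)-R(j))(C(i)-C(j))\le 0$.

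It remains to obtain uniqueness per pair, and here I would merge signals: if two signals have posteriors on a common segment $[i,j]\subseteq\mathcal{P}_a$ with $(R(i)-R(j))(C(i)-C(j))\le0$, then by the Fact above the restriction of $R\cdot C$ to $[i,j]$ is concave (its second directional derivative equals $2(R(i)-R(j))(C(i)-C(j))\le0$), so replacing the two signals by the single merged signal of~\eqref{eqn:single_merge}, whose posterior is their weighted average and hence again on $[i,j]$, does not decrease the revenue; optimality then forces the revenue to be unchanged and decreases $|S|$ by one. Iterating, we reach a rule with at most one posterior per pair $\{i,j\}$. (If the slope of some such segment were strictly negative the merge would strictly increase the revenue, contradicting optimality, so in fact no two posteriors can share a strictly-negative-slope segment; this also lets us assign each surviving $\eta^s$ to a well-defined pair.)

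The step I expect to be the main obstacle is the middle one. The naive move—pushing $\eta^s$ all the way to the endpoints of $e$, or to vertices of $\mathcal{P}_a$—is in general revenue-losing: when $e$ has strictly negative slope, $R\cdot C$ is strictly concave along $e$, so any genuine spread of $\eta^s$ along $e$ strictly lowers the revenue. The fix is to spread $\eta^s$ only within the constant-revenue slice $S_q$, and what makes that fix work is the structural observation that $S_q$ is a single-hyperplane cut of the \emph{face} $F$ (not of the full, possibly high-dimensional, $\mathcal{P}_a$), so its vertices land on the $1$-skeleton of $F$ and hence on honest vertex-to-vertex segments of $\mathcal{P}_a$ whose images inherit the non-positive slope of $e$. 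Pinning down the polytope facts used here—exposedness of $e$, ``faces of faces are faces'', and the location of the vertices of a single-hyperplane slice of a polytope—is the technical core of the proof.
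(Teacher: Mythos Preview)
Your first step—``by Lemma~\ref{lem:opt_decompose}, $q=L(\eta^s)$ admits no positive-slope decomposition in $\mathcal{Q}_a$, hence $q$ lies on a vertex or a nonpositive-slope edge of $\mathcal{Q}_a$''—does not follow. Lemma~\ref{lem:opt_decompose} rules out decompositions of $\eta^s$ \emph{inside $\mathcal{P}_a$}, not decompositions of $q$ inside $\mathcal{Q}_a$, and a positive-slope chord through $q$ in $\mathcal{Q}_a$ need not lift to a decomposition of the specific point $\eta^s$. Concretely, if $\eta^s$ sits in the relative interior of an edge $[i,j]$ of $\mathcal{P}_a$ whose image $L([i,j])$ is a negative-slope segment passing through the \emph{interior} of $\mathcal{Q}_a$, then every decomposition of $\eta^s$ in $\mathcal{P}_a$ is confined to that edge and hence has negative slope—yet $q=L(\eta^s)$ is interior to $\mathcal{Q}_a$. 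In that scenario your face $F:=L^{-1}(e)\cap\mathcal{P}_a$ never gets defined and the subsequent slicing argument has nothing to slice.

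The fix is short and in fact makes your route cleaner: take $F$ to be the smallest face of $\mathcal{P}_a$ containing $\eta^s$, so that $\eta^s$ is in its relative interior. Then Lemma~\ref{lem:opt_decompose} \emph{does} force $L(F)$ to be a point or a nonpositive-slope segment (a two-dimensional image would yield a positive-slope direction you could follow symmetrically about $\eta^s$), and your slice-and-spread step goes through verbatim, with the vertices of $S_q=F\cap L^{-1}(q)$ landing on edges of $F$, which are edges of $\mathcal{P}_a$. The paper avoids face machinery altogether: it fixes \emph{any} vertex representation $\eta^s=\sum_{i\in T}q_i\,i$ with all $q_i>0$, notes that $\eta^s$ is then in the relative interior of $\mathrm{conv}(T)$ so symmetric perturbations are available, concludes that the images $\{L(i):i\in T\}$ are collinear on a nonpositive-slope line, and then iteratively peels off extreme-pair combinations $\beta^*i^*+(1-\beta^*)j^*$ whose $L$-value matches $q$. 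Your corrected route is more polytope-geometric and actually yields the slightly stronger conclusion that the segments may be taken to be edges of $\mathcal{P}_a$; the paper's is more elementary. Your uniqueness-by-merging argument (concavity of $R\cdot C$ along a nonpositive-slope segment) is correct and a little slicker than the paper's, which splits into the strictly-negative and zero-slope cases via Lemma~\ref{lem:opt_never}.
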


\begin{figure}
\begin{subfigure}[t]{0.35\textwidth}
\begin{tikzpicture}[line cap=round,line join=round,>=triangle 45,x=0.35cm,y=0.35cm]
\begin{axis}[
x=0.35cm,y=0.35cm,
axis lines=middle,
xmin=2,
xmax=14,
ymin=0,
ymax=11,
xtick={0},
ytick={0},
xlabel={$ R(\eta)$},
ylabel={$C(\eta)$}
]
\clip(-1.2390112325856555,-0.4135013389416602) rectangle (14.99975892730601,12.905939354003209);
\draw [line width=1.5pt, color=gray] (2.902339879923165,1.8397004666703274) -- (9.951751724043808,9.925790523161664);
\draw [->,line width=1.5pt] (5.66926878915395,5.013530686082114) -- (7.2296251254594175,6.8033511894913286);
\draw [->,line width=1.5pt] (5.66926878915395,5.013530686082114) -- (4.082649029366113,3.1935844910313573);
\begin{scriptsize}
\draw [fill=black] (9.951751724043808,9.925790523161664) circle (2pt);
\draw [fill=black] (7.883893351623145,7.553835331267372) circle (2pt);
\draw [fill=black] (2.902339879923165,1.8397004666703274) circle (2pt);
\draw [fill=black] (5.66926878915395,5.013530686082114) circle (3pt);
\draw[color=black] (8,4) node {{$(R(\eta^s), C(\eta^s))$}};
\end{scriptsize}
\end{axis}
\end{tikzpicture}
\caption{$\{(R(i), C(i)): i \in T\}$ lie on a line with a positive slope. The gray segment is the region $\mathcal{Q}_a$.}
\end{subfigure}\hspace{0.1 \textwidth}
\begin{subfigure}[t]{0.35\textwidth}
\begin{tikzpicture}[line cap=round,line join=round,>=triangle 45,x=0.35cm,y=0.35cm]
\begin{axis}[
x=0.35cm,y=0.35cm,
axis lines=middle,
xmin=1,
xmax=13,
ymin=0,
ymax=11,
xtick={0},
ytick={0},
xlabel={$ R(\eta)$},
ylabel={$C(\eta)$}
]
\clip(-1.2390112325856555,-0.4743207028363856) rectangle (14.99975892730601,12.905939354003209);
\fill[line width=2pt,fill=black,fill opacity=0.10000000149011612] (2.167732242368716,5.894053032749113) -- (8.713461588084149,9.680308242525713) -- (10.125285564611007,1.3377120175942223) -- cycle;
\draw [gray, line width=1.5pt,dashed] (6.980768525983005,5.958226849863971) circle (0.68cm);
\draw [->,line width=1.5pt] (6.980768525983005,5.958226849863971) -- (8.26424486828015,6.985007923701692);
\draw [->,line width=1.2pt] (6.980768525983005,5.958226849863971) -- (5.748192729107704,4.972166212363726);
\begin{scriptsize}
\draw [fill=black] (2.167732242368716,5.894053032749113) circle (2pt);
\draw [fill=black] (8.713461588084149,9.680308242525713) circle (2pt);
\draw [fill=black] (10.125285564611007,1.3377120175942223) circle (2pt);
\draw [fill=black] (6.980768525983005,5.958226849863971) circle (3pt);
\draw[color=black] (9.5,5.1) node {$(R(\eta^s), C(\eta^s))$};
\end{scriptsize}
\end{axis}
\end{tikzpicture}
\caption{$\{(R(i), C(i)): i \in T\}$ do not lie on a line. The gray polygon is  $\mathcal{Q}_a$.}
\end{subfigure}

\caption{An illustration for Claim~\ref{clm:nonp_line}. Two cases when $\{(R(i), C(i)): i \in T\}$ do not lie on a line with a nonpositive slope. The big black point represents $(R(\eta^s), C(\eta^s))$, and the small black points are $(R(i), C(i))$ for $i\in T$. $(R(\eta^s), C(\eta^s))$ is a convex combination of the points $\{(R(i), C(i)): i \in T\}$, with positive coefficients. In both of the cases, $(R(\eta^s), C(\eta^s))$ can be decomposed along a direction with a positive slope.}
\label{fig:point_pair}
\end{figure}
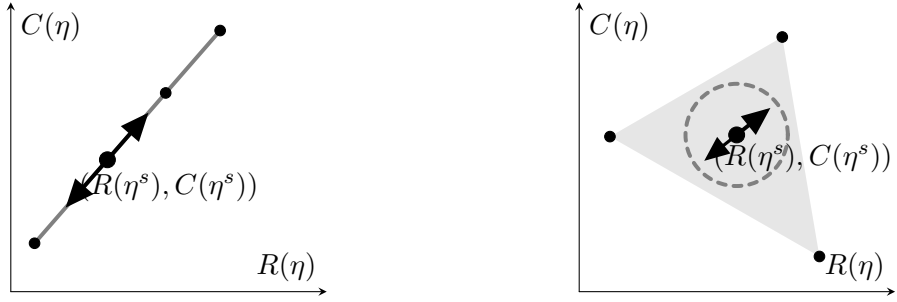

\begin{proof}[Proof Sketch]
Consider an optimal advertising rule $\langle S, \pi\rangle$ and a signal $s\in S$ with $\phi_s>0$. Suppose $\eta^s$ lies in $\mathcal{P}_a$, then $\eta^s$ can be represented as a convex combination of the vertices of the polytope, 
$$
\eta^s = \sum_{i \in \mathcal{H}_a} q_i \cdot i.
$$
Let $T\subseteq \mathcal{H}_a$ be the set of vertices $i$ that has $q_i > 0$. If $|T| \le 2$, then the lemma is proved. Otherwise we claim the follows 
\begin{claim} \label{clm:nonp_line}
The points $\{(R(i), C(i)): i \in T\}$ in two-dimensional space, which represent the likelihood ratio and the cost of uncertainty of $i \in T$, must lie on a line with a nonpositive slope. 
\end{claim}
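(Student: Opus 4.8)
The plan is to perturb the convex combination $\eta^s=\sum_{i\in T}q_i\, i$ inside the \emph{space of weights} and then invoke the optimality condition of Lemma~\ref{lem:opt_decompose}. Two facts make everything linear: $R$ is a linear functional on $\Delta\Omega$, and on $\mathcal{P}_a$ the cost of uncertainty coincides with the linear function $C_a$. Moreover each $i\in T$ is a vertex of $\mathcal{P}_a$, so any probability vector supported on $T$ yields a point of $\mathcal{P}_a$, where both $R$ and $C$ act linearly. Fix any $\delta\in\mathbb{R}^T$ with $\sum_{i\in T}\delta_i=0$. Since $q_i>0$ for every $i\in T$ and $T$ is finite, for all sufficiently small $\varepsilon>0$ the vectors $q\pm\varepsilon\delta$ are still probability vectors, hence $\eta^{\pm}:=\sum_{i\in T}(q_i\pm\varepsilon\delta_i)\, i$ are convex combinations of vertices of $\mathcal{P}_a$, so $\eta^{+},\eta^{-}\in\mathcal{P}_a$, and $\tfrac12\eta^{+}+\tfrac12\eta^{-}=\eta^s$.

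By linearity of $R$ and of $C\equiv C_a$ on $\mathcal{P}_a$,
\begin{align*}
R(\eta^{\pm})-R(\eta^s)=\pm\varepsilon\, L_R(\delta),\qquad C(\eta^{\pm})-C(\eta^s)=\pm\varepsilon\, L_C(\delta),
\end{align*}
where $L_R(\delta):=\sum_{i\in T}\delta_i R(i)$ and $L_C(\delta):=\sum_{i\in T}\delta_i C(i)$. If $L_R(\delta)L_C(\delta)>0$, then the decomposition $\eta^s=\tfrac12\eta^{+}+\tfrac12\eta^{-}$ with $\eta^{+},\eta^{-}\in\mathcal{P}_a$ gives $(R(\eta^s)-R(\eta^{+}))(C(\eta^s)-C(\eta^{+}))=\varepsilon^2 L_R(\delta)L_C(\delta)>0$, which (using $\phi_s>0$) contradicts Lemma~\ref{lem:opt_decompose}. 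Therefore optimality forces
\begin{align*}
L_R(\delta)\, L_C(\delta)\le 0\qquad\text{for all }\delta\in H:=\{\delta\in\mathbb{R}^T:\textstyle\sum_{i}\delta_i=0\}.
\end{align*}

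It remains to turn this into the collinearity statement by a short linear-algebra argument. Consider the linear map $\Phi:H\to\mathbb{R}^2$, $\Phi(\delta)=(L_R(\delta),L_C(\delta))$, and let $W=\Phi(H)$, a linear subspace of $\mathbb{R}^2$. If $\dim W=2$ then $W=\mathbb{R}^2$ contains a vector whose two coordinates have a positive product, contradicting the displayed inequality; hence $\dim W\le 1$. If $\dim W=1$, say $W=\mathrm{span}\{d\}$ with $d=(d_x,d_y)$, the same inequality forces $d_x d_y\le 0$, i.e.\ $d$ is a direction of nonpositive slope (including the degenerate horizontal and vertical directions). If $\dim W=0$, then $L_R\equiv L_C\equiv 0$ on $H$. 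Finally, for any $i,j\in T$ the vector $e_i-e_j$ lies in $H$ and $\Phi(e_i-e_j)=(R(i)-R(j),\,C(i)-C(j))\in W$; so every difference vector $(R(i)-R(j),C(i)-C(j))$ is parallel to the fixed direction $d$ (or all zero when $\dim W=0$). This says precisely that the points $\{(R(i),C(i)):i\in T\}$ lie on a common line of nonpositive slope (a single point in the degenerate case), proving the claim.

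The only delicate point is that the map $\eta\mapsto(R(\eta),C(\eta))$ need not be injective on $\mathcal{P}_a$, so one cannot simply perturb within the two-dimensional region $\mathcal{Q}_a$ and pull the perturbation back; the perturbation must be carried out on the weight vector $q$ itself so that $\eta^s$ stays exactly the midpoint of $\eta^{+}$ and $\eta^{-}$, which is what lets us apply Lemma~\ref{lem:opt_decompose}. Everything else — that $\eta^{\pm}\in\mathcal{P}_a$ for small $\varepsilon$, that $R$ and $C_a$ are linear there, and the final case analysis on $\dim W$ — is routine.
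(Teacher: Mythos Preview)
Your argument is correct and follows essentially the same route as the paper: perturb $\eta^s$ within $\mathcal{P}_a$ by moving the barycentric weights $q$, use linearity of $R$ and $C_a$ there, and invoke Lemma~\ref{lem:opt_decompose} to rule out any perturbation direction along which both $R$ and $C$ change with the same sign. The paper carries this out by a two-case analysis (the $(R(i),C(i))$ lie on a positive-slope line, or they are not collinear) and in each case exhibits an explicit decomposition of $\eta^s$; your version packages the same contradiction as the statement that the image $W=\Phi(H)\subseteq\mathbb{R}^2$ must be at most one-dimensional with a nonpositive-slope spanning direction, which is a tidier and slightly more uniform way to say the same thing.
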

If the points $\{(R(i), C(i)): i \in T\}$ does not lie on a line with a nonpositive slope, there are two possibilities,
\begin{enumerate}
\item the points lie on a line with a positive slope,
\item the points do not lie on a line.	
\end{enumerate}
In both of the cases, we can decompose $(R(\eta^s), C(\eta^s))$ along a direction $d$ with positive slope (as shown in Figure~\ref{fig:point_pair}).  Therefore according to Lemma~\ref{lem:opt_decompose}, both of the cases cannot be true for an optimal mechanism.
Based on Claim~\ref{clm:nonp_line}, we know that $\{(R(i), C(i)): i \in T\}$ must lie on a line with a nonpositive slope. Then we show that if $\eta^s$ does not lie on a segment between two vertices (in other words $|T|>2$), we can decompose signal $s$ to a bunch of signals that have $\eta$ lying on segments between vertices and have the same likelihood ratio and cost of uncertainty. 
Finally, by Lemma~\ref{lem:opt_never}, we  can have a unique $\eta^s$ lying on the segment between $i,j$ for each pair $i,j$.
The full proof is in Appendix~\ref{app:point_pair}.
\end{proof}

With Lemma~\ref{lem:point_pair}, we are ready to formulate a convex program to compute the optimal advertising rule. Let $\mathcal{G}$ be the set of all possible vertices pairs that can possibly have $\eta^s$ lying between them. 
$$
\mathcal{G} = \left\{ \{i,j\} : i,j \in \mathcal{H}_a \text{ for some } a, (R(i) - R(j))(C(i) - C(j)) \le 0\right\}.
$$
We represent the posterior $\eta^s$ lying on the segment between $i,j$ as $\phi_s \cdot \eta^s =\phi_s (\beta i + (1-\beta) j) = \gamma_{ij}\cdot i + \gamma_{ji} \cdot j$. Then $\phi_s\cdot R(\eta^s) (\eta^s)$ can be represented as
\begin{align}
&\phi_s\cdot R(\eta^s) C(\eta^s) \notag \\
= \ &  (\gamma_{ij} + \gamma_{ji}) \cdot\frac{\gamma_{ij} R(i) + \gamma_{ji} R(j)}{\gamma_{ij} + \gamma_{ji}}
	\cdot\frac{\gamma_{ij} C(i) + \gamma_{ji} C(j)}{\gamma_{ij} + \gamma_{ji}} \notag\\
= \ &  \gamma_{ij} R(i) C(i) + \gamma_{ji} R(j) C(j) - \frac{\gamma_{ij} \gamma_{ji}}{\gamma_{ij} +\gamma_{ji}} (R(i) - R(j))(C(i) - C(j)). \label{eqn:ojb_RC}
\end{align}
For $\{i,j\}$ that has $(R(i) -R(j))(C(i) - C(j))\le 0$, \eqref{eqn:ojb_RC} is a concave function of $(\gamma_{ij}, \gamma_{ji})$, because $\frac{\gamma_{ij} \gamma_{ji}}{\gamma_{ij} +\gamma_{ji}}$ with negative semidefinite Hessian 
\begin{align*}
H = \frac{1}{(\gamma_{ij}+\gamma_{ji})^3} \left[\begin{array}{ll} - \gamma_{ji}^2 & \gamma_{ij} \gamma_{ji} \\ \gamma_{ij} \gamma_{ji} & -\gamma_{ij}^2 \end{array} \right]
\end{align*}
is a concave function. 
So we can rewrite the optimization problem \eqref{prog:opt2} as a convex program as follows.
\begin{theorem} \label{thm:main_convex}
Define $
\mathcal{G} = \left\{ \{i,j\} : i,j \in \mathcal{H}_a \text{ for some } a, (R(i) - R(j))(C(i) - C(j)) \le 0\right\}.
$
The following convex program finds an optimal advertising rule
\begin{align}
    \max & \quad \sum_{\{i,j\} \in \mathcal{G}} \gamma_{ij} R(i) C(i) + \gamma_{ji} R(j) C(j) - \frac{\gamma_{ij} \gamma_{ji}}{\gamma_{ij} +\gamma_{ji}} (R(i) -R(j))(C(i) - C(j)) \label{prog:opt_convex}\\
    \text{s.t.} & \quad \sum_{\{i,j\} \in \mathcal{G}} \gamma_{ij}\cdot i + \gamma_{ji} \cdot j = \theta \notag\\
    & \quad \gamma \ge 0 \notag
\end{align}
\end{theorem}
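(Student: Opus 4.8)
The plan is to obtain \eqref{prog:opt_convex} from the concave-closure formulation \eqref{prog:concave} in two moves: (i) use Lemma~\ref{lem:point_pair} to restrict attention to an optimal rule whose posteriors all sit on the segments indexed by $\mathcal{G}$, using at most one posterior per pair; and (ii) perform the change of variables $\gamma_{ij}=\phi_s\beta$, $\gamma_{ji}=\phi_s(1-\beta)$ for the signal $s$ with $\eta^s=\beta i+(1-\beta)j$ on the segment $\{i,j\}$, and check that under it the objective and constraint of \eqref{prog:concave} turn into those of \eqref{prog:opt_convex} and that the new objective is concave. Since each $\mathcal{H}_a$ is the (finite) vertex set of the polytope $\mathcal{P}_a$ and there are $|A|$ such polytopes, $\mathcal{G}$ is finite, so \eqref{prog:opt_convex} has finitely many variables; and because Lemma~\ref{lem:point_pair} lets us index signals by pairs of $\mathcal{G}$, the reparametrization is well defined.

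First I would record the two linearity facts behind the substitution. The likelihood ratio $R$ is linear on $\Delta\Omega$, so $R(\eta^s)=\beta R(i)+(1-\beta)R(j)$. The cost of uncertainty $C$ coincides with the linear function $C_a$ on the convex polytope $\mathcal{P}_a$; since $i,j\in\mathcal{H}_a\subseteq\mathcal{P}_a$, the whole segment $[i,j]$ lies in $\mathcal{P}_a$, hence $C(\eta^s)=\beta C(i)+(1-\beta)C(j)$. Plugging $\phi_s\eta^s=\gamma_{ij}i+\gamma_{ji}j$ and $\phi_s=\gamma_{ij}+\gamma_{ji}$ into $\phi_s R(\eta^s)C(\eta^s)$ and expanding yields exactly the identity \eqref{eqn:ojb_RC}, with the convention that the summand equals $0$ when $\gamma_{ij}=\gamma_{ji}=0$ (its continuous extension). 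Likewise $\sum_s\phi_s\eta^s=\theta$ becomes $\sum_{\{i,j\}\in\mathcal{G}}\gamma_{ij}i+\gamma_{ji}j=\theta$, and $\phi_s\ge 0$, $\eta^s\in\Delta\Omega$, $\beta\in[0,1]$ all collapse to $\gamma\ge 0$. Conversely, any $\gamma\ge 0$ feasible for \eqref{prog:opt_convex} defines a legitimate rule: for each pair with $\gamma_{ij}+\gamma_{ji}>0$ take a signal $s$ with $\phi_s=\gamma_{ij}+\gamma_{ji}$ and $\eta^s=(\gamma_{ij}i+\gamma_{ji}j)/(\gamma_{ij}+\gamma_{ji})\in\mathcal{P}_a$, whose revenue, by the same computation, equals the objective value. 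Hence the optimal value of \eqref{prog:opt_convex} equals that of \eqref{prog:concave}, i.e.\ the optimal expected revenue, and the optimal $\gamma$ yields an optimal rule via $\pi(\omega,s)=\phi_s\eta^s_\omega/\theta_\omega$ and $p_s=C(\eta^s)$, as in the discussion after Proposition~\ref{thm:concave_f}.

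It then remains to verify that \eqref{prog:opt_convex} is a concave maximization: the feasible set is a polytope, so only the objective matters. Each summand is the linear term $\gamma_{ij}R(i)C(i)+\gamma_{ji}R(j)C(j)$ minus $(R(i)-R(j))(C(i)-C(j))$ times $\tfrac{\gamma_{ij}\gamma_{ji}}{\gamma_{ij}+\gamma_{ji}}$. The map $(x,y)\mapsto xy/(x+y)$ on the positive quadrant has the Hessian displayed before the theorem, which equals $-\tfrac{1}{(x+y)^3}$ times the rank-one PSD matrix $(y,-x)^{\!\top}(y,-x)$, hence is negative semidefinite, so $xy/(x+y)$ is concave and extends concavely to the closed quadrant with value $0$ on the axes. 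Since $\mathcal{G}$ only contains pairs with $(R(i)-R(j))(C(i)-C(j))\le 0$, each summand is a linear term minus a nonpositive multiple of a concave function, hence concave, and so is the whole objective.

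The main obstacle I anticipate is not a computation but making the equivalence airtight at the boundary and in the reverse direction: confirming that restricting to one posterior per pair in $\mathcal{G}$ (as licensed by Lemma~\ref{lem:point_pair}) loses no revenue, that the degenerate values $\gamma_{ij}=\gamma_{ji}=0$ are absorbed by continuous extension without affecting optimality, and that every feasible $\gamma$ — in particular one the solver might return with several pairs sharing a vertex — still induces a feasible signaling scheme whose posteriors lie in the correct $\mathcal{P}_a$. These are exactly the places where the structural lemmas~\ref{lem:point_pair} and~\ref{lem:opt_never} do the real work; the concavity check and the identity \eqref{eqn:ojb_RC} are routine once they are in hand.
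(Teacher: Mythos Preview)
Your proposal is correct and follows essentially the same route as the paper: invoke Lemma~\ref{lem:point_pair} to restrict to one posterior per segment $\{i,j\}\in\mathcal{G}$, reparametrize via $\gamma_{ij}=\phi_s\beta$, $\gamma_{ji}=\phi_s(1-\beta)$, use linearity of $R$ and of $C=C_a$ on $\mathcal{P}_a$ to derive the identity~\eqref{eqn:ojb_RC}, and check concavity of $\gamma_{ij}\gamma_{ji}/(\gamma_{ij}+\gamma_{ji})$ via its Hessian together with the sign condition defining $\mathcal{G}$. Your additional attention to the reverse direction and the boundary $\gamma_{ij}=\gamma_{ji}=0$ is a welcome bit of care that the paper leaves implicit.
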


\subsection{Extreme points as basic feasible solutions} \label{sec:single_convex_poly}

We have known how to find the optimal advertising rule by a convex program given the vertices $\mathcal{H}_a$ of the polytopes $\mathcal{P}_a$. The vertices can be found by finding the \emph{basic feasible solutions} of the linear equations that defines $\mathcal{P}_a$. We show how to find these basic feasible solutions  in Appendix~\ref{app:extreme}.

When $|A|$ is a constant or $|\Omega|$ is a constant, the linear equations that defines $\mathcal{P}_a$ have polynomially many basic feasible solutions. So the convex program \eqref{prog:opt_convex} will have a polynomial size and the optimal advertising rule can be solved in polynomial time.
\begin{theorem} \label{cor:poly}
	When the number of actions $|A|$ is a constant or $|\Omega|$ is a constant, we can find an optimal advertising rule $\langle S, \pi\rangle$ with $|S|\le n$ within polynomial time.
\end{theorem}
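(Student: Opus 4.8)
The plan is to invoke Theorem~\ref{thm:main_convex} and show that in each of the two regimes the convex program~\eqref{prog:opt_convex} has size polynomial in the input, that all of its data can be produced in polynomial time, and that an optimal advertising rule with $|S|\le n$ can be read off from an optimal solution. The program~\eqref{prog:opt_convex} has variables $(\gamma_{ij},\gamma_{ji})$ indexed by $\mathcal{G}$, a concave objective (already checked via the Hessian of $\gamma_{ij}\gamma_{ji}/(\gamma_{ij}+\gamma_{ji})$), $n$ linear equalities, and nonnegativity; summing its equality over coordinates forces $\sum\gamma=1$, so the feasible region is a bounded polytope. Hence it suffices to bound $|\mathcal{G}|$, to compute $\mathcal{H}_a$, the values $R(\cdot),C(\cdot)$ at the vertices, and $\mathcal{G}$ efficiently, and to postprocess the optimal solution.

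The core step is the vertex count. Each region $\mathcal{P}_a$ from~\eqref{eqn:P_a} lies in the $(n-1)$-dimensional hyperplane $\{\sum_\omega\eta_\omega=1\}$ and is cut out there by the $n$ inequalities $\eta_\omega\ge 0$ together with the $|A|-1$ linear inequalities $C_a(\eta)\le C_{a'}(\eta)$, hence is a polytope with at most $n+|A|-1$ facets in dimension $n-1$. Since a polytope with $m$ facets in dimension $d$ has at most $\binom{m}{d}$ vertices, $|\mathcal{H}_a|\le\binom{n+|A|-1}{n-1}=\binom{n+|A|-1}{|A|}$. When $n=|\Omega|$ is a constant this is $O(|A|^{\,n-1})$, and when $|A|$ is a constant it is $O(n^{|A|})$; polynomial in both cases. (For constant $|A|$ one may instead observe that among the $n-1$ tight independent constraints defining a vertex at most $|A|-1$ are of the form $C_a=C_{a'}$, so at least $n-|A|$ are of the form $\eta_\omega=0$, whence the vertex has support of size $\le|A|$; this re-derives the $O(n^{|A|})$ bound and matches Theorem~\ref{thm:inf_insight}.) There being exactly $|A|$ regions, $\sum_a|\mathcal{H}_a|$ and therefore $|\mathcal{G}|\le\sum_a\binom{|\mathcal{H}_a|}{2}$ are polynomial in both regimes, so~\eqref{prog:opt_convex} has polynomial size.

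For the remaining computational steps: the vertices of $\mathcal{P}_a$ are its basic feasible solutions, obtained by turning $n-1$ of the $n+|A|-1$ defining inequalities into equalities, solving the linear system, and discarding solutions that violate a constraint (the enumeration of Appendix~\ref{app:extreme}); there are at most $\binom{n+|A|-1}{n-1}$ such systems, which is polynomial. The numbers $R(i)=\sum_\omega\mu_\omega i_\omega/\theta_\omega$ and $C(i)=\min_a C_a(i)$ are evaluated directly, and $\mathcal{G}$ is formed by testing $(R(i)-R(j))(C(i)-C(j))\le 0$ for every within-region pair. Then~\eqref{prog:opt_convex} is a polynomial-size concave maximization over a bounded polytope and is solved in polynomial time (to optimality, or to any prescribed accuracy) by standard methods such as interior-point or the ellipsoid method. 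From an optimal $\gamma^\ast$, introduce one signal per pair with $\gamma^\ast_{ij}+\gamma^\ast_{ji}>0$, of probability $\phi=\gamma^\ast_{ij}+\gamma^\ast_{ji}$ and posterior $(\gamma^\ast_{ij}i+\gamma^\ast_{ji}j)/\phi$; the equality constraint says $\sum_s\phi_s\eta^s=\theta$ and summing over coordinates gives $\sum_s\phi_s=1$, so this is a feasible, hence optimal, solution of~\eqref{prog:concave} with at most $|\mathcal{G}|$ signals. Finally, apply the constructive reduction behind Lemma~\ref{lem:opt_size} — repeatedly use an affine dependence among the $\eta^s$ to zero out one $\phi_s$ while keeping the barycenter at $\theta$ and the rule optimal — to cut the signal set down to size $n$ in at most $|\mathcal{G}|$ linear-algebra steps, and output $\langle S,\pi,\{p_s\}\rangle$ via $\pi(\omega,s)=\phi_s\eta^s_\omega/\theta_\omega$ and $p_s=C(\eta^s)$.

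I expect the main obstacle to be the vertex-count step: verifying that $|\mathcal{H}_a|$ (and hence the program) is genuinely polynomial in each regime — in particular the support-size observation for constant $|A|$ — and checking that passing from the $\mathcal{H}_a$'s to $\mathcal{G}$ and then to~\eqref{prog:opt_convex} preserves polynomial size. The enumeration of basic feasible solutions, the evaluation of $R$ and $C$, the polynomial-time solvability of a polynomial-size concave program, and the final reduction to $|S|\le n$ are all routine.
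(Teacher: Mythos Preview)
Your proposal is correct and follows essentially the same approach as the paper: bound $|\mathcal{H}_a|$ by the binomial coefficient $\binom{n+|A|-1}{|A|}=\binom{n+|A|-1}{n-1}$ (the paper phrases this as counting basic feasible solutions of the standard-form system~\eqref{prog:basic_solution}, which is the same count as your facet-based vertex bound), conclude that the convex program~\eqref{prog:opt_convex} has polynomial size in both regimes, and then apply Lemma~\ref{lem:opt_size} to trim the signal set to size $n$. Your write-up is considerably more detailed than the paper's two-sentence proof (it spells out vertex enumeration, evaluation of $R,C$, solving the concave program, and the postprocessing), but the logical skeleton is identical.
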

\begin{proof}
	We can first find an optimal advertising rule using the convex program in Theorem~\ref{thm:main_convex}, which has a polynomial size according to Lemma~\ref{lem:num_vertices}. Then we can reduce the size of $S$  to at most $n$ by the method in Lemma~\ref{lem:opt_size}.
\end{proof}

As we show in Appendix~\ref{app:extreme}, the basic solutions of the linear equations that define $\mathcal{P}_a$ have at most $|A|$ non-zero entries.  According to Lemma~\ref{lem:point_pair}, there exists an optimal advertising rule that has the buyer's posteriors lying on the segments between the vertices. Therefore, there exists an optimal advertising rule which guarantees that $\eta^s$ has no more than $2 |A|$ non-zero entries for any signal realization $s$. This means that there exists an optimal advertising rule which will allow the buyer to reduce the range of $\omega$ to a set of size $\le 2|A|$ after seeing the partial information provided by the seller.
\begin{theorem}
	There exists an optimal advertising rule $\langle S, \pi \rangle$ with $|S| \le n$ that guarantees for all $s \in S$, the buyer's posterior $\eta^s$ has no more than $2 |A|$ non-zero entries,
	$$
	\Vert \eta^s \Vert_0 \le 2|A|.
	$$
\end{theorem}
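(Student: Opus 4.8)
The plan is to chain together three ingredients already established: the design-space reduction of Lemma~\ref{lem:point_pair}, the sparsity of the vertices of the cells $\mathcal{P}_a$ recorded in Appendix~\ref{app:extreme}, and the signal-count bound of Lemma~\ref{lem:opt_size}. I would start from an optimal advertising rule $\langle S,\pi\rangle$ of the form guaranteed by Lemma~\ref{lem:point_pair}, so that every posterior can be written $\eta^s=\beta_s\, i_s+(1-\beta_s)\, j_s$ with $i_s,j_s\in\mathcal{H}_a$ for some action $a$; then bound the support of each $\eta^s$ using the structure of the vertices; and finally invoke the construction behind Lemma~\ref{lem:opt_size} to trim $|S|$ down to at most $n$ without disturbing the posteriors that survive.

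For the support bound, recall that $\mathcal{H}_a$ is the vertex set of the polytope $\mathcal{P}_a$ from \eqref{eqn:P_a}, which is cut out inside $\mathbb{R}^n$ by the $n$ nonnegativity constraints $\eta_\omega\ge 0$, the normalization $\sum_\omega \eta_\omega=1$, and the $|A|-1$ linear inequalities $C_a(\eta)\le C_{a'}(\eta)$. At a vertex, $n$ linearly independent of these constraints are tight; at most $|A|$ of them can be the normalization together with the $|A|-1$ comparison inequalities, so at least $n-|A|$ of the tight constraints are of the form $\eta_\omega\ge 0$, forcing that many coordinates to vanish. Hence every $i\in\mathcal{H}_a$ has $\Vert i\Vert_0\le|A|$, which is precisely what Appendix~\ref{app:extreme} shows. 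Since the support of a convex combination is contained in the union of the supports, $\mathrm{supp}(\eta^s)\subseteq\mathrm{supp}(i_s)\cup\mathrm{supp}(j_s)$, and therefore $\Vert\eta^s\Vert_0\le\Vert i_s\Vert_0+\Vert j_s\Vert_0\le 2|A|$ for every $s\in S$.

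The rule from Lemma~\ref{lem:point_pair} may, however, use as many signals as $|\mathcal{G}|$, so the last step is to cut $|S|$ to at most $n$. I would freeze the collection of posteriors $\{\eta^s\}_{s\in S}$ and note that choosing the probabilities is a linear program: maximize $\sum_{s}\phi_s R(\eta^s)C(\eta^s)$ subject to $\sum_{s}\phi_s\eta^s=\theta$ and $\phi\ge 0$ (the constraint $\sum_s\phi_s=1$ is implied by summing coordinates), exactly as in \eqref{prog:concave}. This LP has $n$ equality constraints, hence a basic optimal solution supported on at most $n$ of the signals; deleting the rest gives an optimal rule with $|S|\le n$, which is the mechanism behind Lemma~\ref{lem:opt_size}. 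I expect the only delicate point to be the compatibility of the two reductions: the step just described merely discards signals and keeps a sub-collection of the posteriors we already had, so the bound $\Vert\eta^s\Vert_0\le 2|A|$ carries over intact, whereas a crude reduction that replaced two signals by their weighted average could produce a posterior on up to $4|A|$ coordinates and destroy the claim. Making sure the trimming step keeps existing posteriors (equivalently, picking a basic optimal solution of the weight LP) is what lets both bounds hold simultaneously; everything else is bookkeeping with the cited lemmas.
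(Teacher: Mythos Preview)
Your proposal is correct and follows essentially the same approach as the paper: invoke Lemma~\ref{lem:point_pair} to put each posterior on a segment between two vertices of some $\mathcal{P}_a$, use the basic-feasible-solution description in Appendix~\ref{app:extreme} to bound each vertex's support by $|A|$, and then apply the reduction behind Lemma~\ref{lem:opt_size} to cut $|S|$ down to $n$. Your explicit observation that the trimming step only deletes signals (equivalently, passes to a basic optimal solution of the weight LP) rather than averaging posteriors---and hence preserves the $2|A|$ support bound---is exactly the compatibility point the paper leaves implicit.
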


Furthermore, when $\theta = \mu$, the number of possibilities can be further reduced to $|A|$.
\begin{proposition} \label{prop:common_reveal}
When $\theta = \mu$, there exists an optimal advertising rule $\langle S, \pi \rangle$ with $|S| \le n$, that reveals $\le |A|$ possibilities of the realized state of the world $\omega$ to the buyer before selling the (remaining) information, i.e., for all $s\in S$,
$$
	\Vert \eta^s \Vert_0 \le |A|.
	$$
\end{proposition}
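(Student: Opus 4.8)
I would prove Proposition~\ref{prop:common_reveal} by constructing, explicitly, one optimal advertising rule whose posteriors are all vertices of a single region $\mathcal{P}_a$, and then reading off the support bound from the fact (established in Appendix~\ref{app:extreme}) that every vertex of $\mathcal{P}_a$ has at most $|A|$ nonzero coordinates. First I would specialize Proposition~\ref{thm:concave_f} to $\theta = \mu$: as already noted, the likelihood ratio is then identically $1$, so program~\eqref{prog:concave} reduces to maximizing $\sum_{s} \phi_s\, C(\eta^s)$ over all splittings $\sum_s \phi_s \eta^s = \mu$, $\phi_s \ge 0$, $\eta^s \in \Delta\Omega$. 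Since $C$ is concave, Jensen's inequality gives $\sum_s \phi_s C(\eta^s) \le C(\mu)$ for every feasible splitting, so the optimal revenue is exactly $C(\mu)$; it therefore suffices to exhibit a feasible splitting that attains $C(\mu)$ and whose every posterior is supported on at most $|A|$ states.

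Next I would choose an action $a^\star$ minimizing $C_a(\mu)$, so that $\mu \in \mathcal{P}_{a^\star}$ and $C(\mu) = C_{a^\star}(\mu)$, where $C_{a^\star}$ is the linear map that agrees with $C$ on all of $\mathcal{P}_{a^\star}$. Because $\mu$ lies in the polytope $\mathcal{P}_{a^\star}$, it is a convex combination of the vertices $\mathcal{H}_{a^\star}$, and by Carath\'eodory's theorem (the ambient simplex $\Delta\Omega$ has dimension $n-1$) we may use at most $n$ of them: $\mu = \sum_{i \in T} q_i\, i$ with $T \subseteq \mathcal{H}_{a^\star}$, $|T|\le n$, $q_i > 0$, $\sum_i q_i = 1$. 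Define the candidate rule by $S = T$, $\eta^i = i$, $\phi_i = q_i$ (which, as in Section~\ref{sec:single_ccf}, determines $\pi$ via $\pi(\omega,i) = \phi_i \eta^i_\omega/\theta_\omega$ and the prices $p_i = C(\eta^i)$). Its objective in~\eqref{prog:concave} is $\sum_{i \in T} q_i\, C(i) = \sum_{i \in T} q_i\, C_{a^\star}(i) = C_{a^\star}\!\bigl(\sum_{i} q_i\, i\bigr) = C_{a^\star}(\mu) = C(\mu)$, where the second equality uses $i \in \mathcal{P}_{a^\star}$ and the third uses linearity of $C_{a^\star}$; hence the rule is optimal with $|S|\le n$. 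Finally, $\Vert\eta^i\Vert_0 = \Vert i\Vert_0 \le |A|$ for every $i \in T$, because (Appendix~\ref{app:extreme}) each vertex of $\mathcal{P}_a$ is a basic feasible solution of a linear system whose only constraints beyond $\eta \in \Delta\Omega$ are the $|A|-1$ inequalities $C_a(\eta) \le C_{a'}(\eta)$, which forces at least $n - |A|$ coordinates of the vertex to vanish.

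The one point that needs care is the equality case of Jensen's inequality: the construction works precisely because all chosen posteriors lie in the single region $\mathcal{P}_{a^\star}$, on which the concave function $C$ coincides with the affine function $C_{a^\star}$ that supports it at $\mu$. Choosing $a^\star$ to be an optimal action at the prior $\mu$ is exactly what guarantees both that $\mu \in \mathcal{P}_{a^\star}$ (so there are vertices of $\mathcal{P}_{a^\star}$ to decompose $\mu$ into) and that $C_{a^\star}$ is a genuine supporting hyperplane of $C$ at $\mu$ (so the splitting loses no revenue). Everything else — Carath\'eodory for the bound $|S|\le n$, and the counting of tight constraints at a vertex for the bound $\Vert\eta^s\Vert_0 \le |A|$ — is routine bookkeeping already done earlier in the paper.
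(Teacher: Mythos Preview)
Your proposal is correct and follows essentially the same route as the paper: pick an action $a^\star$ with $\mu\in\mathcal{P}_{a^\star}$, decompose $\mu$ into vertices of $\mathcal{P}_{a^\star}$, use linearity of $C=C_{a^\star}$ on that polytope to keep the objective at $C(\mu)$, and invoke the basic-feasible-solution bound from Appendix~\ref{app:extreme} for $\Vert\eta^s\Vert_0\le |A|$. The only cosmetic difference is that you obtain $|S|\le n$ directly via Carath\'eodory, whereas the paper first decomposes into arbitrarily many vertices and then trims down to $n$ signals by appealing to Lemma~\ref{lem:opt_size}.
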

Recall that in Proposition~\ref{prop:common_prior}, we show that when $\theta = \mu$,  it is optimal for the seller to not reveal anything and directly charge the buyer his expected gain. Here Proposition~\ref{prop:common_reveal} implies that there is another optimal strategy of revealing some information to buyers that achieves the same revenue as revealing nothing and helps the buyer to narrow down the set of possible states. This is because $\theta$ can be decomposed into points in $\mathcal{H}_a$ for some $a\in A$ without changing the expected revenue. 
The proofs can be found in Appendix~\ref{app:extreme}.

\subsection{Optimal Mechanism for Binary State} \label{sec:single_binary}
In this section, we use the results in the previous sections to give some characterizations of the optimal advertising rule for the case when the state of the world is binary, i.e. $|\Omega|=2$. First, according to Lemma~\ref{lem:opt_size}, there exists an optimal mechanism that only sends two possible signals to the buyer. In addition, the cost of uncertainty $C(\eta) = C((\eta_1, \eta_2)) =  C((\eta_1, 1-\eta_1))$ can be represented as a function of $\eta_1$, 
\begin{align*}
C(\eta_1) & = \min_a C_a(\eta)\\ 
& = \min_{a} \eta_1 \cdot \Delta u(1, a) + (1 - \eta_1) \Delta u(2,a)\\
& = \min_a \Delta u(2,a) + \eta_1 (\Delta u(1, a) - \Delta u(2,a))
\end{align*}
where $\Delta u(\omega, a) = \max_{a'} u(\omega, a') - u(\omega, a)$. So $C(\eta_1)$ is the minimum of $|A|$ linear functions of $\eta_1$, as shown in Figure~\ref{fig:single_binary}. We define vertices of $C(\eta_1)$ to be the turning points of function $C(\eta_1)$, and let $\eta^*_1 = \arg \max_{\eta_1} C(\eta_1)$ be the maximum point of $C(\eta_1)$. Then we claim the follows.

\begin{figure}
\centering
\begin{subfigure}[t]{0.4\textwidth}
\begin{tikzpicture}[line cap=round,line join=round,>=triangle 45,x=0.3cm,y=0.35cm]
\begin{axis}[
x=0.3cm,y=0.35cm,
axis lines=middle,
xmin=-0.8,
xmax=15.5,
ymin=-1.5,
ymax=9.8,
xtick={0},
ytick={0},
xlabel={$\eta_1$},
ylabel={$C(\eta_1)$}]
\clip(-1.2390112325856555,-3.028733986414866) rectangle (14.99975892730601,10.108248614845827);
\draw [line width=2pt, dashed, color=gray] (0,0)-- (2.592608692782041,4.695325228215276);
\draw [line width=2pt, dashed, color=gray] (2.592608692782041,4.695325228215276)-- (7.032422257096991,6.763183600635941);
\draw [line width=2pt] (7.032422257096991,6.763183600635941)-- (11.47223582141194,4.026312225373297);
\draw [line width=2pt] (11.47223582141194,4.026312225373297)-- (13.479274829937877,0);
\draw [line width=1.5pt, color=gray] (7.032422257096991,-3.028733986414866) -- (7.032422257096991,10.108248614845827);
\begin{scriptsize}
\draw [fill=black] (0,0) circle (3pt);
\draw[color=black] (0.1,-0.8) node {\Large $0$};
\draw [fill=black] (2.592608692782041,4.695325228215276) circle (3pt);
\draw [fill=black] (7.032422257096991,6.763183600635941) circle (3pt);
\draw[color=black] (13.357636102148426,-0.7) node {\Large $1$};
\draw[color=black] (7.82307398772842,-0.7) node {\Large $\eta^*_1$};
\draw[color=black] (4,7.2) node {\Large $\eta^s_1$};
\draw[color=black] (10,7.2) node {\Large $\eta^t_1$};
\end{scriptsize}
\end{axis}
\end{tikzpicture}
\caption{An example with $4$ actions.}
\end{subfigure}\hspace{0.08\textwidth}
~
\begin{subfigure}[t]{0.35\textwidth}
\begin{tikzpicture}[line cap=round,line join=round,>=triangle 45,x=0.3cm,y=0.35cm]
\begin{axis}[
x=0.3cm,
y=0.35cm,
axis lines=middle,
xmin=-0.8,
xmax=15,
ymin=-1.5,
ymax=9.8,
xtick={0},
ytick={0},
xlabel={$\eta_1$},
ylabel={$C(\eta_1)$}]
\clip(-1.2390112325856555,-3.028733986414868) rectangle (14.99975892730601,10.108248614845827);
\draw [line width=1.5pt,color=gray] (5.08620261246578,-3.028733986414868) -- (5.08620261246578,10.108248614845827);
\draw [line width=1.5pt,dashed, color=gray] (0,0.012234208321404033)-- (5.08620261246578,6.641544872846478);
\draw [line width=2pt] (5.08620261246578,6.641544872846478)-- (12.871081190990623,0);
\begin{scriptsize}
\draw [fill=black] (0,0.012234208321404033) circle (3pt);
\draw [fill=black] (5.08620261246578,6.641544872846478) circle (3pt);
\draw[color=black] (0.1,-0.8) node {\Large $0$};
\draw[color=black] (5.876854343097209,-0.7) node {\Large $\eta^*_1$};
\draw[color=black] (12.749442463201174,-0.7) node {\Large $1$};
\draw[color=black] (3,6.2) node {\Large $\eta^s_1$};
\draw[color=black] (8,6.2) node {\Large $\eta^t_1$};
\end{scriptsize}
\end{axis}
\end{tikzpicture}
\caption{An example with binary action.}
\end{subfigure}
\caption{Characterization of the optimal advertising rule when the state of the world is binary. We plot the function $C(\eta_1)$, which is the minimum of $|A|$ linear functions of $\eta_1$. The optimal advertising rule must have $\eta^s_1$ being one of the vertices on one side of $\eta^*_1$, plotted as black dots in the pictures. And $\eta^t_1$ must lie on the other side of $\eta^*_1$, plotted as black segments in the pictures.}
\label{fig:single_binary}
\end{figure}

\begin{theorem}\label{thm:single_binary}
When $|\Omega| = 2$, 
there exists an optimal advertising rule $\langle S, \pi\rangle$ with $S= \{ s, t\}$. The optimal advertising rule has $\eta^s_1$ being a vertex of $C(\eta_1)$ on one side of $\eta^*_1$, and $\eta^t_1$ lying on the other side of $\eta^*_1$, as illustrated in Figure~\ref{fig:single_binary}. The optimal advertising rule can be solved in $O(|A|^2)$ time.
\end{theorem}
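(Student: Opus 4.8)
Parametrize every belief by its first coordinate $\eta_1\in[0,1]$, so $C(\eta_1)=\min_a C_a(\eta_1)$ is a concave piecewise-linear function with $C(0)=C(1)=0$, at most $|A|$ linear pieces, hence $O(|A|)$ vertices (turning points), and maximum at $\eta^*_1$; the likelihood ratio $R(\eta_1)=\mu_1\eta_1/\theta_1+\mu_2(1-\eta_1)/\theta_2$ is affine. By swapping the two states we may assume $R$ is non-decreasing (equivalently $\theta_1\le\mu_1$); if $\theta=\mu$ then $R\equiv 1$, $f=C$ is concave, and Proposition~\ref{prop:common_prior} already gives the claim, so assume $\theta\ne\mu$. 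By Lemma~\ref{lem:opt_size} (with $n=2$) there is an optimal rule with $S=\{s,t\}$; writing $\eta^s_1\le\eta^t_1$, the constraint $\phi_s\eta^s_1+\phi_t\eta^t_1=\theta_1$ forces $\eta^s_1\le\theta_1\le\eta^t_1$, and by Proposition~\ref{thm:concave_f} the optimal revenue equals the concave closure $\overline f(\theta_1)$ of $f=R\cdot C$.

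\textbf{Curvature of $f$.} On a single linear piece $\mathcal P_a$, $f=R\cdot C_a$ is a product of two affine functions, so by the Fact in Section~\ref{sec:single_prop} its second derivative is $2R'C_a'$: $f$ is convex on the pieces with $C_a'\ge 0$ (i.e.\ on $[0,\eta^*_1]$) and concave on the pieces with $C_a'\le 0$ (i.e.\ on $[\eta^*_1,1]$). Since $C$ is concave its slope strictly drops at every vertex, so $f$ acquires a concave kink at each vertex of $C$; combined with the within-piece curvature, $f$ is concave on all of $[\eta^*_1,1]$, and on $[0,\eta^*_1]$ it is convex within each piece with concave kinks at the vertices. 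A second-order argument then gives that $\overline f$ restricted to $[0,\eta^*_1]$ is piecewise linear with breakpoints only among the vertices of $C$ (a concave majorant cannot touch $f$ at an interior point where $f$ is strictly convex).

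\textbf{Pinning down $\eta^s$ and $\eta^t$.} By Lemma~\ref{lem:opt_never}, $R(\eta^s)\le R(\eta^t)$ (which holds since $\eta^s_1\le\eta^t_1$ and $R$ is non-decreasing) forces $C(\eta^s)\le C(\eta^t)$; and by Lemma~\ref{lem:point_pair} in the binary case, a posterior strictly inside a linear piece of $C$ contained in $[0,\eta^*_1]$ would make $(R(i)-R(j))(C(i)-C(j))>0$ for the endpoints $i<j$ of that piece, which is excluded, so every posterior $<\eta^*_1$ must sit at a vertex of $C$. It then remains to show the two posteriors can be taken to straddle the peak, $\eta^s_1\le\eta^*_1\le\eta^t_1$; granting this, $\eta^s_1$ is a vertex of $C$ on one side of $\eta^*_1$ and $\eta^t_1$ lies on the other side, as in Figure~\ref{fig:single_binary}. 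One rules out the two ``same side'' configurations by a local exchange: if $\eta^*_1<\eta^s_1\le\eta^t_1$, concavity of $f$ on $[\eta^*_1,1]$ together with $C(\eta^s)\le C(\eta^t)$ forces $\eta^s=\eta^t$ (a single signal, covered by the $\eta^t_1=\theta_1$ candidate below); if $\eta^s_1\le\eta^t_1<\eta^*_1$, both posteriors are at vertices of $C$, and one slides $\eta^t_1$ rightward along the piecewise-linear $\overline f|_{[0,\eta^*_1]}$ (re-solving $\phi_s,\phi_t$ to keep the mean at $\theta_1$), which is weakly revenue-improving by concavity of $\overline f$ there, until $\eta^t_1$ reaches the right knot of the segment of $\overline f$ through $\theta_1$. \emph{This is the main obstacle:} the concave kinks of $f$ at intermediate vertices let $\overline f$ be tangent to $f$ strictly below $\eta^*_1$, so proving that this right knot is actually $\ge\eta^*_1$ (or otherwise reconciling the statement with such cases) requires a careful comparison of chord slopes of $f$ across the peak, and is the one place where the one-dimensional geometry must be worked out in full.

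\textbf{The $O(|A|^2)$ algorithm.} Compute the lower envelope $C=\min_a C_a$, its $O(|A|)$ vertices, and $\eta^*_1$ in $O(|A|\log|A|)$ time; form a list of $O(|A|)$ candidate values of $\eta_1$ (the vertices of $C$, including $0$ and $1$, plus $\theta_1$ for the single-signal case). For a fixed candidate $v$ with $v\le\theta_1$ (resp.\ $v\ge\theta_1$), the best companion $w$ on the opposite side of $\eta^*_1$ maximizes $\lambda f(v)+(1-\lambda)f(w)$ subject to $\lambda v+(1-\lambda)w=\theta_1$, i.e.\ maximizes the value at $\theta_1$ of the line through $(v,f(v))$ and $(w,f(w))$; since $f$ on that side is piecewise quadratic with $O(|A|)$ pieces, on each piece this is a one-variable optimization of a (quadratic)/(affine) ratio solved in $O(1)$, so the best $w$ is found in $O(|A|)$ time. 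Taking the best over all $O(|A|)$ candidates $v$ (and, conservatively, over all $O(|A|^2)$ vertex-pairs, which costs no more) together with the no-advertising value $C(\theta_1)$ yields the optimal revenue in $O(|A|^2)$ time; the chosen tuple $(\eta^s_1,\eta^t_1,\phi_s,\phi_t)$ is converted into $\langle S,\pi,\{p_s\}\rangle$ by the formulas in Section~\ref{sec:single_ccf}.
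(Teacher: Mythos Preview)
Your overall route is the paper's: Lemma~\ref{lem:opt_size} for $|S|=2$, Lemma~\ref{lem:opt_decompose} (equivalently Lemma~\ref{lem:point_pair}) to pin any posterior strictly left of $\eta^*_1$ to a vertex of $C$, and Lemma~\ref{lem:opt_never} to rule out both posteriors lying strictly to the right. The paper's proof then simply writes ``Therefore one is a left vertex and the other is on the right,'' skipping exactly the configuration you flag as the main obstacle: both posteriors at distinct vertices strictly left of $\eta^*_1$.

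You are right to flag it, and in fact the obstacle cannot be removed: the straddling conclusion is not always true. With $|A|=3$, take $C$ piecewise linear through $(0,0)$, $(0.1,1)$, $(0.9,1.1)$, $(1,0)$ (so $\eta^*_1=0.9$), and set $\theta_1=0.05$, $\mu_1=0.5$ (so $R$ is strictly increasing). A direct check shows the line through $(0,0)$ and $(0.1,f(0.1))$ majorizes $f=R\cdot C$ on all of $[0,1]$, touching it only at $0$ and $0.1$; hence the unique optimal two-signal rule has $\eta^s_1=0$ and $\eta^t_1=0.1$, both strictly left of $\eta^*_1$, and no optimal rule places a posterior in $[\eta^*_1,1]$. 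So the step you could not complete in fact fails, and the paper's ``Therefore'' is a non sequitur on this instance. What the three lemmas do yield is the weaker structure ``one posterior is a left vertex, the other is either a left vertex or a point of $[\eta^*_1,1]$''; this still supports an $O(|A|^2)$ algorithm, and your parenthetical enumeration over all $O(|A|^2)$ vertex pairs is precisely the missing ingredient---not a conservative add-on but a necessary one---whereas the paper's enumeration (left vertices paired only with right segments) would miss the optimum above.
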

\begin{proof}[Proof sketch]
First by Lemma~\ref{lem:opt_size}, there exists an optimal advertising rule that has $|S| = 2$. Let $S = \{s, t\}$.
Without loss of generality assume $R((1,0) > R((0,1))$. Then $R(\eta_1)$ is an increasing function of $\eta_1$.  By Lemma~\ref{lem:opt_decompose}, the optimal advertising rule should not have $\eta^s_1$ or $\eta^t_1$ lying on the left of $\eta_1^*$ and not being a vertex. By Lemma~\ref{lem:opt_never}, the optimal advertising rule should not have both $\eta^s_1$ and $\eta^t_1$ on the right of $\eta_1^*$.  Therefore we must have one of $\eta^s_1$ and $\eta^t_1$ being a vertex on the left of $\eta^*_1$, and the other one on the right of $\eta^*_1$.

The optimal advertising rule can be solved in $O(|A|^2)$ time by enumerating all possibilities of $\eta^s_1$ and $\eta^t_1$, and then compute a tangent line of $R(\eta_1) C(\eta_1)$. Details can be found in Appendix~\ref{app:single_binary}.
\end{proof}

\begin{corollary}
When $|\Omega| = 2$ and $|A| = 2$, there exists an optimal advertising rule $\langle S, \pi \rangle$ that sends two possible signals $S = \{ s, t \}$. When the buyer receives signal $s$, he knows the value of $\omega$ exactly, 
\begin{align*}
\pi(1, 	s) = 0, \pi(2, s) \in [0,1] \quad \text{ or } \quad \pi(2, 	s) = 0, \pi(1, s) \in [0,1], 
\end{align*}
and the seller charges nothing,
$$
p_s = 0.
$$
\end{corollary}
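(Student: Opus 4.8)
The plan is to specialize the analysis behind Theorem~\ref{thm:single_binary} to $|A|=2$ and pin down what a ``vertex of $C(\eta_1)$ on one side of $\eta_1^*$'' can actually be. With two actions, $C(\eta_1)$ is the lower envelope of (at most) two lines; setting aside the degenerate case in which a single action is optimal at both states (then $C\equiv 0$, every rule earns revenue $0$, and the rule $S=\{s,t\}$ with $\pi(\cdot,s)\equiv 0$, $\pi(\cdot,t)\equiv 1$, $p_s=p_t=0$ already witnesses the claim), the cost of uncertainty is affine and increasing on one of the two regions $\mathcal{P}_a$ — the interval $[0,\eta_1^*]$ in the $\eta_1$-coordinate — and affine and decreasing on the other, $[\eta_1^*,1]$, so its only vertices are $0,\eta_1^*,1$. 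After relabelling the two states we may assume $R(\eta_1)$ is nondecreasing (if $R$ is constant then $\theta=\mu$ and Proposition~\ref{prop:common_prior} together with the no-disclosure rule below already finishes). Since $R$ is affine, $f(\eta)=R(\eta)C(\eta)$ is then a quadratic that is convex on $[0,\eta_1^*]$ and concave on $[\eta_1^*,1]$.

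Next I would invoke Lemma~\ref{lem:opt_size} to get an optimal rule with $S=\{s,t\}$ (possibly with one of the two probabilities zero) and then apply Lemma~\ref{lem:opt_decompose}: because $f$ is strictly convex on $[0,\eta_1^*]$, any posterior of the rule lying strictly inside $[0,\eta_1^*]$ could be split into the pure belief $(0,1)$ and the peak $(\eta_1^*,1-\eta_1^*)$ along a direction of positive slope in the $(R,C)$-plane, which Lemma~\ref{lem:opt_decompose} forbids; hence every posterior $\eta^r$ is either $(0,1)$, or $(\eta_1^*,1-\eta_1^*)$, or a point of $(\eta_1^*,1]$. (This is exactly the structure asserted by Theorem~\ref{thm:single_binary}, with the left-hand vertex forced to be $0$ or $\eta_1^*$.)

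The proof then splits into two cases. If some posterior equals $(0,1)$, call that signal $s$: the induced rule has $\pi(1,s)=\phi_s\eta^s_1/\theta_1=0$ and $\pi(2,s)=\phi_s/\theta_2\in[0,1]$, so receiving $s$ tells the buyer $\omega=2$, and the optimal price is $p_s=C\big((0,1)\big)=0$; adding such an $s$ with $\pi(\cdot,s)\equiv 0$ if the rule used only one signal, we obtain a rule of the required form. Otherwise all posteriors lie in $[\eta_1^*,1]$, where $f$ is concave, so the revenue is $\sum_r\phi_r f(\eta^r)\le f\big(\sum_r\phi_r\eta^r\big)=f(\theta)=R(\theta)C(\theta)=C(\theta)$; thus selling with no disclosure at price $C(\theta)$ is also optimal, and the rule $S=\{s,t\}$, $\pi(\cdot,s)\equiv 0$, $\pi(\cdot,t)\equiv 1$, $p_s=0$, $p_t=C(\theta)$ is an optimal rule with the stated property (here $\pi(1,s)=\pi(2,s)=0$, so $\pi(1,s)=0$ and $\pi(2,s)\in[0,1]$, and receiving $s$ — which never happens — trivially pins down $\omega$). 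If instead $R$ is nonincreasing, the symmetric argument produces a signal under which the buyer learns $\omega=1$, i.e. $\pi(2,s)=0$, $\pi(1,s)\in[0,1]$, $p_s=0$.

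The step I expect to require the most care is the second case: one must observe that the corollary's ``two possible signals'' is honoured even when disclosure is worthless, by exhibiting the optimal rule that formally contains a full-revelation signal $s$ sent with probability zero, and one must be sure — which is exactly what the convex/concave split of $f$ across $\eta_1^*$ delivers — that a genuinely two-signal optimal rule can avoid a $(0,1)$ (resp. $(1,0)$) posterior only when no disclosure is already optimal.
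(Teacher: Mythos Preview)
Your argument is correct and follows exactly the route the paper intends: specialize Theorem~\ref{thm:single_binary} to $|A|=2$, where the only vertices of $C(\eta_1)$ are $0$, $\eta_1^*$, and $1$, so the ``vertex on one side'' is forced to be an endpoint or the peak. The paper states the corollary without proof, and you are in fact more careful than the paper in one respect: Theorem~\ref{thm:single_binary} by itself still allows $\eta^s_1=\eta_1^*$, and your concavity argument on $[\eta_1^*,1]$ (together with the dummy zero-probability signal) cleanly disposes of that residual case, which the paper leaves implicit.
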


\section{General Problem}
We then move to the general problem when the seller faces buyers of different types (i.e. $|\Theta|>1$) and the types are drawn from distribution $\mu(\theta|\omega)$. It turns out that the general problem  is not only NP-hard to solve, but also NP-hard to approximate within a constant factor. We thus turn to some special cases of the problem and give a linear program approximation algorithm that finds an $\varepsilon$-suboptimal mechanism when it is not too hard to predict the possible type of buyers who will make the purchase. This algorithm can also be used when a single type of buyers is targeted and find a $\varepsilon$-suboptimal mechanism.

\subsection{Hardness of General Problem}

In the general problem of optimal  advertising, the seller first advertises the information by sending a signal. Then for each possible realization of the signal, a buyer's valuation of the remaining information will follow a distribution that can be inferred by the seller. Based on this distribution, the seller chooses the best price that will maximize her expected revenue, that is, the price times the probability that the buyer will make the purchase. 
The problem is hard in general. 
\begin{theorem} \label{thm:general_hardness}
Given the support of the state of the world $\Omega = \{1, \dots, n\}$, the support of the buyers' personal beliefs of the state of the world $\Theta \subseteq \Delta \Omega$ and a joint distribution over the two $\mu(\omega, \theta)$, as well as the buyers' utility function $u(\omega, a)$ for $\omega \in \Omega, a\in A$, it is NP-hard to find a constant-factor approximation of the optimal advertising rule that maximizes the seller's revenue in expectation.
\end{theorem}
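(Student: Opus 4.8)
Since the single‑type problem is already NP‑hard (Theorem~\ref{thm:single_hardness}), exact computation is ruled out; the point of this theorem is inapproximability, and that must come from the pricing tension created by having many types. The plan is a gap‑producing reduction from a hard combinatorial \emph{selection} problem — Maximum Independent Set (or, equivalently for our purposes, a constraint‑satisfaction / set‑packing variant) — engineered so that in any near‑optimal advertising rule the set of buyer types that actually purchase encodes a feasible combinatorial solution, and the collected revenue is, up to normalization, proportional to its size. This is exactly the regime that the positive result cannot handle: there the algorithm is efficient precisely because the number of candidate ``purchasing sets'' is polynomial, so making that number exponential by embedding a hard selection problem is the natural source of hardness. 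For the ``every constant factor'' form one either starts from a problem that already has an arbitrarily large gap, or applies the standard product‑amplification on the underlying graph.

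Concretely, I would first fix a small state space $\Omega$ and a decision problem (utility $u(\omega,a)$) so that the cost‑of‑uncertainty function $C$ has a controlled piecewise‑linear shape with a single ``peak'', place the prior $\mu$ in a region where $C$ is tiny, and identify a family of posteriors where $C$ is bounded below by a constant. Then introduce one buyer type $\theta_v$ per element $v$ of the combinatorial instance, choosing the conditionals $\mu(\theta\mid\omega)$ so that the likelihood‑ratio‑weighted mass of each type is uniform. The gadget should be arranged so that a signaling scheme $\pi$ corresponds to a combinatorial object (a selection, coloring, or assignment), and for each signal $s$ the set of types $\theta$ with $C(\eta^s(\theta))\ge p_s$ — i.e. the types that can be served simultaneously at the common price $p_s$ — is exactly a ``feasible'' set (an independent set / a set of satisfied constraints), so that the revenue summed over signals is proportional to the combinatorial value. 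The three formal steps are then: (i) \emph{completeness}: from a good combinatorial solution build a signaling scheme and price menu with revenue $\ge V$; (ii) \emph{soundness}: from any advertising rule with revenue $> V/c$ extract a combinatorial solution of value $> \mathrm{OPT}/c'$; (iii) \emph{amplification}: boost the gap to the desired constant.

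The hard part will be (ii): the seller is free to use many signals with idiosyncratic prices and to \emph{partially} serve many types, which need not correspond to one clean feasible set. I would control this by first invoking Lemma~\ref{lem:opt_size} to assume $|S|\le |\Omega|$, then using the local‑concavity and no‑profitable‑merge conditions (Lemma~\ref{lem:opt_decompose}, Lemma~\ref{lem:opt_never}) to pin each pair $(R(\eta^s),C(\eta^s))$ to a vertex‑like configuration, and finally proving a ``canonicalization'' statement: any rule can be converted, at the cost of only a constant factor in revenue, into one in the intended normal form, after which serving $\theta_v$ at signal $s$ genuinely forces the combinatorial constraints encoded by the gadget. A secondary obstacle is that the trivial strategy of revealing nothing already guarantees the seller $\max_p p\cdot\Pr_{\theta}[C(\theta)\ge p]$ (a Proposition~\ref{prop:common_prior}–style baseline); this must not come within the target factor of the YES‑value. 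I would neutralize it by making $C(\theta_v)$ vanishingly small for every type while the engineered posteriors keep $C$ bounded below, so that any rule ignoring the combinatorial structure is provably suboptimal by the required factor, and this separation survives the amplification step.
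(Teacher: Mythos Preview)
Your high-level strategy --- reduce from a graph problem so that for each signal the set of purchasing types is forced to be an independent set, and revenue then tracks independent-set size --- is exactly what the paper does. But your technical plan for soundness is off in two concrete ways.

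First, Lemmas~\ref{lem:opt_size}, \ref{lem:opt_decompose}, and \ref{lem:opt_never} are \emph{single-type} results: they analyze the posteriors $\eta^s$ of one fixed buyer and rely on the objective having the form $\sum_s \phi_s\, R(\eta^s)C(\eta^s)$. In the multi-type setting there is no single posterior per signal, the revenue involves a price chosen against a distribution of valuations, and none of these lemmas transfers. In particular you cannot assume $|S|\le |\Omega|$; combined with your proposed ``small $\Omega$'' this would limit you to constantly many signals, which cannot separate the many types your reduction introduces.

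Second, the paper's soundness needs no canonicalization at all --- it comes straight from the \emph{construction of the priors}, and that construction requires $\Omega=V$ (large, not small). The paper sets the prior $\theta^{(v)}$ of a type-$v$ buyer to put overwhelming mass on the \emph{neighbors} of $v$. Then for any signal $s$ and any edge $(i,j)$, if both types $i$ and $j$ were willing to pay $p_s$, a two-line Bayes computation shows that whichever of $i,j$ has the smaller $\pi(\cdot,s)$ ends up with posterior probability at most $1/(qn)$ on its own vertex, hence cost of uncertainty at most $M/(qn)$, forcing $p_s\le M/(qn)$. So every signal falls into a clean dichotomy: either the purchasing set $V^{(s)}$ is independent (hence tiny in the NO case), or the price is tiny. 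No bound on $|S|$, no merge/decompose arguments, no normal form. The arbitrary-constant gap comes for free by reducing from \textsc{AlmostColoring} (which already separates near-$q$-colorability from no independent set of density $q^{-(k+1)}$), so your amplification step (iii) is also unnecessary.

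Your instinct about suppressing the ``reveal nothing'' baseline is correct and matches the paper: every prior $\theta^{(v)}$ is nearly certain (concentrated on the neighbor set), so $C(\theta^{(v)})$ is negligible. But the engine is the neighbor-mass prior on a state space equal to the vertex set, not a small-$\Omega$ gadget combined with single-type structural lemmas.
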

 
\subsection{Approximation for Special Cases}

Due to the hardness of the general problem, we investigate some special cases. 
We show that for some special cases when it is not too hard to predict the possible types of buyers who will make the purchase, it is possible to find an advertising rule  with revenue arbitrarily close to the optimal mechanism within poly$(1/\varepsilon, |A|, |\Omega|, |\Theta|)$ running time, where $\varepsilon$ is the upper bound of the difference between our mechanism and the optimal mechanism. 

Suppose now the seller sends a signal $s$ and charges a price $p_s$. Let  $\Lambda(s, p_s) \subseteq \Theta$ be the set of buyer types that would pay for the full revelation of $\omega$, i.e.,
\begin{align} \label{def:lambda_s}
\Lambda(s, p_s) = &\{\theta: C(\eta^s(\theta)) \ge p_s\} 
\end{align}
Let $\mathbf{\Lambda}$ be the set of all possible $\Lambda(s, p_s)$,
\begin{align*}
\mathbf{\Lambda} = \left\{\Lambda(s, p_s): s \text{ is a signal sent by a signaling scheme } \pi, p_s \in \mathbb{R}\right\} 
\end{align*}
For some special cases, the number of possible $\Lambda(s, p_s)$, i.e. $|\mathbf{\Lambda}|$ will not be too large.
\begin{itemize}
	\item When the type space is relatively small, $|\Theta| = O(\log N)$ where $N = \max\{ |A|, |\Omega|\}$. Then the number of all possible subsets of $\Theta$ is $O(N)$.
	\item When there is a binary state of the world, i.e. $|\Omega| = 2$.  In this case,  $|\mathbf{\Lambda}|$ is no more than $|\Theta|^2$. This is because $\Lambda(s, p_s)$ must be  a convex set, due to its definition~\eqref{def:lambda_s} and the concavity of the cost of uncertainty function $C(\cdot)$. Therefore, denoting $\theta = (\theta_1, \theta_2)$, the types $\theta$ in $\Lambda(s, p_s)$ must have $\theta_1$ lying in an interval $[L,R]$.  Since the type space is discrete, we only need to consider the intervals with endpoints $\{\theta_1: \theta \in \Theta\}$ to include all $\Lambda(s, p_s)$. The number of such intervals is no more than $|\Theta|^2$. We give the full proof in Appendix~\ref{app:eff_bi}.
\end{itemize}
We show that if there are only polynomially many possible $\Lambda(s, p_s)$, i.e., $|\mathbf{\Lambda}| = \text{poly} (|A|, |\Omega|, |\Theta|)$, there exists an approximation algorithm that can approximate the optimal revenue arbitrarily close.
\begin{theorem}
	Given $\Omega=\{1, \dots, n\}$, $\Theta \subseteq \Delta \Omega$ and a joint distribution over the two $\mu(\omega, \theta)$, as well as the buyers' utility function $u(\omega, a)$ for $\omega \in \Omega, a\in A$.
	If there are only polynomially many possible $\Lambda(s, p_s)$, i.e., $|\mathbf{\Lambda}| = \text{poly} (|A|, |\Omega|, |\Theta|)$, then there exists an algorithm that for any $\varepsilon \in (0,1)$, finds an advertising rule that achieves expected revenue at least $OPT - \varepsilon$ within poly$(1/\varepsilon, |A|, |\Omega|, |\Theta|)$ running time, where $OPT$ is the expected revenue of the optimal advertising rule.
\end{theorem}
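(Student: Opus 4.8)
The plan is to reduce the problem, after two simplifications, to a polynomial-size linear program, by ``guessing'' for each signal both the set of buyer types it is meant to sell to and a discretized price. I would first rewrite the revenue of a rule $\langle S,\pi,\{p_s\}\rangle$ as $\sum_s\sum_\omega \mu(\omega)\pi(\omega,s)\,p_s\sum_{\theta\in\Lambda(s,p_s)}\mu(\theta\mid\omega)$ and then make the following decoupling observation: if we commit in advance that signal $s$ is aimed at clientele $\Lambda\in\mathbf{\Lambda}$ at price $p$, then it suffices, for a revenue \emph{lower bound}, to require merely $\Lambda\subseteq\Lambda(s,p)$, i.e.\ $C(\eta^s(\theta))\ge p$ for every $\theta\in\Lambda$; the revenue collected from $s$ is then at least $p\sum_\omega\mu(\omega)\pi(\omega,s)\sum_{\theta\in\Lambda}\mu(\theta\mid\omega)$, which is linear in $\pi(\cdot,s)$. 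Since $C=\min_a C_a$ with $C_a(\eta)=\sum_\omega\eta_\omega\Delta u(\omega,a)$ linear, and since the normalizing denominator of the Bayesian update $\eta^s(\theta)$ is nonnegative, clearing it turns the requirement $C(\eta^s(\theta))\ge p$ into the linear inequalities $\sum_\omega\theta_\omega\pi(\omega,s)(\Delta u(\omega,a)-p)\ge 0$ for all $\theta\in\Lambda,a\in A$. For a fixed $(\Lambda,p)$ the feasible set is a cone and the objective linear, so keeping a single signal $s_{\Lambda,p}$ per pair is without loss (merging two same-labeled signals just adds their $\pi$'s).

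Next I would discretize prices to the grid $\{0,\varepsilon,2\varepsilon,\dots\}\cap[0,1]$ --- legitimate because $C(\eta)\in[0,1]$ --- introduce one signal $s_{\Lambda,p}$ for each pair, and write the LP in the joint-probability variables $x_{\omega,\Lambda,p}=\mu(\omega)\pi(\omega,s_{\Lambda,p})$:
\begin{align*}
\text{maximize}\quad & \sum_{\Lambda,p}\, p \sum_{\omega} x_{\omega,\Lambda,p} \sum_{\theta\in\Lambda}\mu(\theta\mid\omega) \\
\text{subject to}\quad & \sum_{\Lambda,p} x_{\omega,\Lambda,p} = \mu(\omega)\qquad\forall\,\omega,\\
& \sum_{\omega} \frac{\theta_\omega}{\mu(\omega)}\, x_{\omega,\Lambda,p}\bigl(\Delta u(\omega,a)-p\bigr) \ge 0\qquad\forall\,(\Lambda,p),\ \theta\in\Lambda,\ a\in A,\\
& x \ge 0,
\end{align*}
where states $\omega$ with $\mu(\omega)=0$ are discarded. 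This LP has $\mathrm{poly}(|\Omega|,|\mathbf{\Lambda}|,1/\varepsilon)$ variables and $\mathrm{poly}(|\Omega|,|\mathbf{\Lambda}|,|\Theta|,|A|,1/\varepsilon)$ constraints, hence is solvable in the claimed time, provided one can enumerate $\mathbf{\Lambda}$ or a polynomial superset of it --- which holds in the two special cases noted above: all subsets of $\Theta$ when $|\Theta|=O(\log N)$, and all intervals with endpoints in $\{\theta_1:\theta\in\Theta\}$ when $|\Omega|=2$ (including ``unrealizable'' extra $\Lambda$'s can only help, since the LP only ever requires $\Lambda\subseteq\Lambda(s_{\Lambda,p},p)$).

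Then I would establish the two inequalities that pin down the guarantee. From an optimal LP solution $x^*$, output the rule with $\pi(\omega,s_{\Lambda,p})=x_{\omega,\Lambda,p}^{*}/\mu(\omega)$ (a valid signaling scheme by the marginal constraints) and price $p$ on $s_{\Lambda,p}$; its \emph{actual} revenue is at least the LP objective, because the inequality constraints force $\Lambda\subseteq\Lambda(s_{\Lambda,p},p)$, so at least the types in $\Lambda$ pay $p$. Conversely, given an optimal rule $\langle S^*,\pi^*,\{p_s^{*}\}\rangle$ (with each $p_s^{*}\le1$), label $s\in S^*$ by $\Lambda_s:=\Lambda(s,p_s^{*})\in\mathbf{\Lambda}$ and $p'_s:=\varepsilon\lfloor p_s^{*}/\varepsilon\rfloor$, and aggregate $\pi^*$ into the corresponding $x$-variables; this is LP-feasible (the marginal constraints reduce to $\sum_s\pi^*(\omega,s)=1$, and each summand obeys the inequality constraints since $C(\eta^s(\theta))\ge p_s^{*}\ge p'_s$ for $\theta\in\Lambda_s$), and its LP objective is at least $OPT-\sum_{s\in S^*}(p_s^{*}-p'_s)\Pr[s\text{ is sent}]\ge OPT-\varepsilon\sum_{s\in S^*}\Pr[s\text{ is sent}]=OPT-\varepsilon$. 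Hence the LP optimum, and therefore the output rule's revenue, is at least $OPT-\varepsilon$.

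The step I expect to require the most care is the reformulation in the first paragraph: one must verify that clearing the (possibly zero) normalizing denominator does not change the feasible set --- in particular handling signals sent with zero probability and the restriction to $\mathrm{supp}(\mu)$ --- and justify using the containment $\Lambda\subseteq\Lambda(s,p)$ rather than equality. The containment is what keeps the program linear (an ``exactly $\Lambda$ purchases'' requirement would need strict inequalities and would not be convex), and it is harmless precisely because, in the converse direction, choosing $\Lambda_s=\Lambda(s,p_s^{*})$ \emph{exactly} recovers the optimal revenue up to the price-rounding slack, which telescopes over signals to a single global $\varepsilon$ since the signal probabilities sum to one.
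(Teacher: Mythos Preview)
Your proposal is correct and follows essentially the same approach as the paper: discretize prices to an $\varepsilon$-grid, introduce one signal per pair $(\Lambda,p)\in\mathbf{\Lambda}\times\{0,\varepsilon,\dots,1\}$, write the linear program with the cleared-denominator constraints $\sum_\omega \theta_\omega\pi(\omega,s)(\Delta u(\omega,a)-p)\ge 0$, and argue feasibility of the optimal rule after rounding prices down. Your change of variables $x_{\omega,\Lambda,p}=\mu(\omega)\pi(\omega,s_{\Lambda,p})$ is cosmetic, and your explicit treatment of the containment $\Lambda\subseteq\Lambda(s,p)$ (rather than equality) and of the two directions of the approximation bound is in fact more careful than the paper's own exposition, which leaves these points implicit.
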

To prove the theorem, we first show that there exists an optimal advertising rule that has each signal $s$ mapping to a unique $(p_s, \Lambda(s, p_s))$.
\begin{lemma} \label{lem:general_eff}
There exists an optimal advertising rule $\langle S, \pi, \{p_s : s \in S\} \rangle$ that satisfy the follows: for any two different signals $s, t \in S$, either $p_s \neq p_t$ or  $\Lambda(s, p_s)) \neq \Lambda(t, p_t))$. In other words, each $s\in S$ has a unique $(p_s, \Lambda(s, p_s))$.
\end{lemma}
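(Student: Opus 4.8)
The plan is to take any optimal advertising rule and repeatedly \emph{merge} pairs of signals that carry an identical price and an identical purchasing set, showing that each such merge weakly increases the seller's revenue while strictly decreasing the number of signals. Since the signal set is finite, the process terminates, and the terminal rule is an optimal advertising rule with the required property.

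Here is the single merge step I would analyze. Suppose $\langle S,\pi,\{p_s\}\rangle$ is optimal and two distinct signals $s,t\in S$ satisfy $p_s=p_t=:p$ and $\Lambda(s,p_s)=\Lambda(t,p_t)=:\Lambda$. I would replace them by one signal $r$ with $\pi(\omega,r)=\pi(\omega,s)+\pi(\omega,t)$ for all $\omega$ and price $p_r=p$; this is still a valid advertising rule because $\sum_{s'\in S}\pi(\omega,s')$ is unchanged. Writing $\phi_\theta(s')=\sum_\omega\theta_\omega\pi(\omega,s')$, the posterior of a type-$\theta$ buyer after receiving $r$ is the convex combination
\begin{align*}
\eta^r(\theta)=\frac{\phi_\theta(s)}{\phi_\theta(s)+\phi_\theta(t)}\,\eta^s(\theta)+\frac{\phi_\theta(t)}{\phi_\theta(s)+\phi_\theta(t)}\,\eta^t(\theta),
\end{align*}
where the case $\phi_\theta(s)+\phi_\theta(t)=0$ is vacuous since that type never receives $r$. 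By concavity of the cost of uncertainty $C(\cdot)$ (Definition~\ref{def:cost_of_unc}), every $\theta\in\Lambda$ has $C(\eta^s(\theta))\ge p$ and $C(\eta^t(\theta))\ge p$, hence $C(\eta^r(\theta))\ge p$, so $\Lambda(r,p_r)\supseteq\Lambda$. For the revenue I would observe that the combined contribution of $s$ and $t$ in the old rule equals $p\sum_{\theta\in\Lambda}\sum_\omega\mu(\omega)\mu(\theta|\omega)\pi(\omega,r)$, whereas the contribution of $r$ in the new rule equals $p\sum_{\theta\in\Lambda(r,p_r)}\sum_\omega\mu(\omega)\mu(\theta|\omega)\pi(\omega,r)$, which is at least as large because $\Lambda(r,p_r)\supseteq\Lambda$ and every summand is nonnegative; no other signal is touched, so optimality is preserved.

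Finally I would iterate: each merge lowers $|S|$ by one, so after finitely many steps no two signals share a common pair $(p_s,\Lambda(s,p_s))$, which is exactly the claim. The one point that needs care — and essentially the only content beyond substitution into the revenue expression — is that a merge may \emph{strictly} enlarge the purchasing set ($\Lambda(r,p_r)$ can be larger than $\Lambda$), which could create a fresh pair of signals with a common price and purchasing set; this is harmless since we simply repeat the merge, and termination is still forced by the monotone decrease of $|S|$. I expect the concavity step (which is what lower-bounds $\Lambda(r,p_r)$) and this termination bookkeeping to be the main things to get right; the revenue comparison itself is a one-line computation once the merged signal has been set up.
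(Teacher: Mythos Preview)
Your proposal is correct and follows essentially the same approach as the paper: merge two signals with a common $(p,\Lambda)$ by setting $\pi(\omega,r)=\pi(\omega,s)+\pi(\omega,t)$ and $p_r=p$, use concavity of $C$ to show every $\theta\in\Lambda$ still purchases, and conclude that revenue does not decrease. You are in fact slightly more careful than the paper in spelling out the termination argument and noting that $\Lambda(r,p_r)$ may strictly enlarge.
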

We give the proof of the lemma in Appendix~\ref{app:general_eff}. The idea is that if there are two signals with the same $(p_s, \Lambda(s, p_s))$, we can merge them into one.

Now we formulate an LP to compute an approximately optimal advertising rule. 
Since we assume $u(\omega,a) \in [0,1]$, the prices charged by the optimal mechanism must lie in $[0,1]$. Then we can approximate the prices by choosing  $p_s$ from a finite set $P= \{0, \varepsilon, 2\varepsilon, \dots, 1 \}$ with size $\le \lfloor 1/\varepsilon \rfloor + 1$. Since the optimal mechanism only needs one signal for each price and each possible $\Lambda(s, p_s)$,  we assign one signal $s_{p, \Lambda}$ for each  pair of $p \in P$ and $\Lambda \in \mathbf{\Lambda}$, so that $S = \{s_{p, \Lambda}: p \in P, \Lambda \in \mathbf{\Lambda}\}$.
Then let the variables of the LP  be the probability transition function of the signaling scheme $\pi(\omega, s_{p, \Lambda})$ for all $\omega \in \Omega$ and $s_{p, \Lambda} \in S$. We add the constraints so that when $s_{p, \Lambda}$ is sent and the price is set to $p$, all the types in $\Lambda$ will be willing to make the purchase,
$$
C\big(\eta^{s_{p, \Lambda}}(\theta)\big) \ge p, \quad \forall p\in P, \Lambda \in \mathbf{\Lambda}, \theta\in \Lambda.
$$
By the definition of the cost of uncertainty function (Definition~\ref{def:cost_of_unc}) and the posterior~\eqref{eqn:posterior_eta},  
this can be equivalently represented by linear constraints
\begin{align} \label{eqn:gnrl:constraint}
\sum_{\omega \in \Omega} \theta_\omega \cdot \pi(\omega, s_{p, \Lambda})(u^*(\omega) - u(\omega, a)) \ge p  \sum_\omega \theta_\omega \cdot \pi(\omega, s_{p, \Lambda}), \quad \forall p\in P, \Lambda \in \mathbf{\Lambda}, \theta\in \Lambda.
\end{align}
  Then the expected revenue is a linear function of the variables
\begin{align} \label{eqn:gnrl:obj}
\sum_{p\in P, \Lambda \in \mathbf{\Lambda}} p \sum_{\omega, \theta \in \Lambda} \mu(\omega, \theta) \pi(\omega, s_{p, \Lambda}).
\end{align}
Finally we add constraints so that $\pi(\cdot)$ is a valid signaling scheme
\begin{align} \label{eqn:gnrl:con_sig}
\sum_{s \in S} \pi(\omega, s) = 1, \ \forall \omega.	\qquad \pi(\omega, s) \ge 0, \ \forall \omega, s.
\end{align}
The LP with objective~\eqref{eqn:gnrl:obj} and constraints~\eqref{eqn:gnrl:constraint} and~\eqref{eqn:gnrl:con_sig} computes an advertising rule with expected revenue at least $OPT-\varepsilon$ because by rounding the prices of an optimal advertising rule down to its closest price in $P$, we get a feasible solution of the LP, and this will not decrease the expected revenue by more than $\varepsilon$.

\section{Conclusion and Future Work}
In this work, we study the problem of optimal advertising for information products. We prove the hardness of the problem and present positive results in both the simple setting and the general setting. There are many directions left open for future work.
\begin{itemize}
\item The most 	appealing open problem would probably be how to get around the strong impossibility results. In this work, we have considered general decision problems and arbitrary distributions. Can one come up with some special but non-trivial utility functions or distributions so that the problem will be tractable? What are the necessary assumptions we need to add for the problem to be easy? 
\item For our model of general decision problems and arbitrary distributions, there are also some interesting open questions. What are the best approximation algorithms we can find for the general problem? In particular, our hardness result assumes that there is no common prior between the seller and the buyer. Will the problem still be hard when the buyer and the seller share a common prior? Does the common prior assumption matter?
\item It would also be interesting to extend our model to other problems. For example, in this work, the seller only cares about the revenue. What if the seller also cares about the buyer's action? We have studied the design of the optimal advertising rule when it is decided by the seller. What if the advertisement is instead provided by a third-party agent? What would be the best advertising strategy of this third-party advertisement provider, and how would it affect the social welfare? We're also not considering the advertising cost. A natural extension is to incorporate the cost into the model.
\end{itemize}

%
%
%
%
%

\bibliographystyle{plainnat}
\bibliography{ref}

\newpage
\appendix
\section{Single Buyer Type}
\subsection{Proof of Proposition~\ref{thm:concave_f}} \label{app:proof_concave}

We first formulate our optimal advertising problem as an optimization problem. When the seller targets a single buyer type $\theta$, 
 the optimal advertising rule can fully extract the expected surplus from the buyers of that type after sending the signal, i.e., an optimal advertising rule $\langle S, \pi, \{p_s : s \in S\} \rangle$ must have $p_s = C(\eta^s(\theta))$ for the targeted type $\theta$. To simplify the notation, we denote by $\langle S, \pi \rangle$ an advertising rule, and use $\eta^s$ to represent the posterior $\eta^s(\theta)$.
\paragraph{Optimal mechanism formulation.}
Recall that when the signal is realized to $s$, the posterior belief of the buyer is
\begin{align}\label{eqn:posterior}
\eta^s = \frac{\big( \theta_1 \pi(1, s), \dots, \theta_n \pi(n, s)\big)}{\sum_{\omega=1}^n \theta_\omega \pi(\omega,s)}.
\end{align}
The optimal mechanism charges the buyer his cost of uncertainty $p_s = C(\eta^s)$ when $s$ is realized. Define $\phi_\mu(s) = \sum_\omega \mu_\omega \pi(\omega, s)$ to be the probability of sending $s$. Then the seller's expected revenue is equal to $\sum_{s\in S} \phi_\mu(s) \cdot C(\eta^s)$ and the seller's optimization problem can be formulated as
\begin{align} 
\max_{S, \pi} &\quad \sum_{s\in S} \phi_\mu(s) \cdot C(\eta^s) \label{prog:opt1}\\
\text{s.t.} & \quad \sum_{s\in S} \pi(\omega, s) = 1, \quad \forall \omega \notag\\
& \quad \pi(\omega, s) \ge 0, \quad \forall \omega, s. \notag
\end{align}
Observe that the probability of sending a signal $\phi_\mu(s) = \sum_\omega \mu_\omega \pi(\omega, s)$ depends on the true underlying distribution $\mu$ but not $\theta$,  while $C(\eta^s)$ depends on the buyer's belief $\theta$. We show that we can rewrite  $\phi_\mu(s)$ as well as the constraints as functions of $\theta$, so that the whole optimization can be viewed as finding the concave closure of a function $f(x)$ at point $\theta$.
\paragraph{Concave closure representation.} 
Let $\phi_\theta(s) = \sum_\omega \theta_\omega \pi(\omega, s)$ be the probability of receiving $s$ based on the buyer's personal belief. The ratio $\phi_\mu(s)/\phi_\theta(s)$ can be determined as long as we know the posterior $\eta^s$, i.e., we can define  the ratio $\phi_\mu(s)/\phi_\theta(s)$ as a function of $\eta^s$,
\begin{align*}
    R(\eta^s) = \frac{\phi_\mu(s)}{\phi_\theta(s)} = \frac{\sum_\omega \mu_\omega \pi(\omega, s)}{\sum_\omega \theta_\omega \pi(\omega, s)}  = \sum_\omega \mu_\omega \cdot \frac{\eta^s_\omega}{\theta_\omega}.
\end{align*}
The last equality is because $\pi(\omega, s)/\left(\sum_\omega \theta_\omega \pi(\omega, s)\right) = \eta^s_\omega/\theta_\omega$ according to~\eqref{eqn:posterior}. We call $R(\eta^s)$ \emph{the likelihood ratio function}. Note that $R(\eta^s)$ is a linear function of $\eta^s$ with coefficients $\mu_\omega/\theta_\omega$.
Then the seller's expected revenue can be represented as the expected product of the likelihood ratio and the cost of uncertainty,
$$
\sum_{s\in S} \phi_\theta(s) \cdot R(\eta^s) C(\eta^s).
$$
According to~\eqref{eqn:posterior} and $\phi_\theta(s) = \sum_\omega \theta_\omega \pi(\omega, s)$, we have $\phi_\theta(s) \cdot \eta^s = \big( \theta_1 \pi(1, s), \dots, \theta_n \pi(n, s)\big)$. So the constraints $\sum_{s\in S} \pi(\omega, s) = 1, \pi(\omega, s) \ge 0$ can be equivalently written as
$$
\sum_{s\in S} \phi_\theta(s) \cdot \eta^s = \theta, \qquad \phi_\theta(s) \ge 0, \ \eta^s \in \Delta \Omega, \ \forall s.
$$
Therefore the seller's problem~\eqref{prog:opt1} can be equivalently represented as 
\begin{align} 
\max_{\eta,\,\phi_\theta} &\quad \sum_{s\in S} \phi_\theta(s) \cdot R(\eta^s) C(\eta^s) \label{prog:opt2}\\
\text{s.t.} & \quad \sum_{s\in S} \phi_\theta(s) \cdot \eta^s = \theta \notag \\
& \quad \phi_\theta(s) \ge 0, \ \eta^s \in \Delta \Omega, \ \forall s \notag
\end{align}
Observe that the optimal objective value of~\eqref{prog:opt2} is just the value of the \emph{concave closure} of the product of the likelihood ratio and the cost of uncertainty $f(x) = R(x)\cdot C(x)$ at position $x = \theta$.

\subsection{Proof of Theorem~\ref{thm:single_hardness}} \label{app:single_hardness}
As shown in Section~\ref{sec:single_ccf}, the seller's problem
\begin{align*} 
\max_{\eta,\,\phi_\theta} &\quad \sum_{s\in S} \phi_\theta(s) \cdot R(\eta^s) C(\eta^s) \\
\text{s.t.} & \quad \sum_{s\in S} \phi_\theta(s) \cdot \eta^s = \theta \\
& \quad \phi_\theta(s) \ge 0, \ \eta^s \in \Delta \Omega, \ \forall s 
\end{align*}
is equivalent to find the concave closure of $f(x) = R(x)\cdot C(x)$ at a point $\theta$, where $R(x)$ is the likelihood ratio, and $C(x) = \min_a C_a(y)$ is the cost of uncertainty. The concave closure of $f(x)$, denoted by $\overline{f}(x)$, is equal to
\begin{align} \label{eqn:CC_dual}
    \overline{f}(x) = \min_{\alpha, \beta} \{ \alpha^T x + \beta~|~\alpha^T y + \beta \ge f(y), \forall y\in \Delta_n\}.
\end{align}
For simplicity, in this section we allow $C_a(y) = c_a^T y$ to have negative coefficients (by definition, $C_a(y)$ should always has non-negative coefficients). This is without loss of generality because we can always equivalently consider $\widetilde{C}_a(y) = \sum_{i=1}^n y_i \big(\max_{a'} C_{a'}(e_i) - C_{a}(e_i) \big)$, which is a valid cost of uncertainty function with non-negative coefficients. 

We will introduce a new problem that is closely related to~\eqref{eqn:CC_dual}. We will prove the hardness of this new problem and then use it to prove the hardness of \eqref{eqn:CC_dual}.

Since the feasible solution $(\alpha, \beta)$ of \eqref{eqn:CC_dual} forms a convex set, the ellipsoid method can be applied to solve~\eqref{eqn:CC_dual} if there is a cutting-plane oracle that, for any point $(\alpha, \beta)$, returns a $y$ such that $\alpha^T y + \beta < f(y)$ if there exists one and returns ``feasible'' if $\alpha^T y + \beta \ge f(y), \forall y$. To have such a cutting-plane oracle, it suffices to solve
\begin{align*}
    \max_{y\in \Delta_n} f(y) - (\alpha^T y + \beta) = R(y)\cdot C(y) -  \alpha^T y,
\end{align*}
where $R(x) =  \sum_{i=1}^n x_i \cdot \frac{\mu_i}{\theta_i}$ is a linear function and $C(x) = \min_a C_a(x)$ in which $C_a(x)$ is a linear function of $x$. This can be solved by solving 
\begin{align} \label{eqn:dual_oracle}
    \max_{y\in \Delta_n} \{ R(y)\cdot C_a(y) -  \alpha^T y ~|~ C_a(y) \le C_{a'}(y), \forall a'\}.
\end{align}

We first show that~\eqref{eqn:dual_oracle} is hard to solve for specific $R(y), C_a(y), \alpha$.
\begin{lemma}
There exist fixed $R(y) = r^T y$, $C_a(y) = c_a^T y$, fixed $\alpha$, and partially fixed $C(y)$, such that deciding whether the solution of~\eqref{eqn:dual_oracle} is greater or equal to $0$ is NP-complete, and the maximum must be achieved at point $y$ with
$$
y_j \in \{0, 1/n\} \text{ for } 1\le j \le n-1, \quad y_n = 1 - \sum_{j=1}^{n-1} y_j.
$$
\end{lemma}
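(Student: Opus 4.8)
I would prove this by reducing from an NP-complete number problem --- $k$-\textsc{Subset-Sum} (``is there $S\subseteq[m]$ with $|S|=k$ and $\sum_{j\in S}w_j=t$?''), or the strongly NP-hard \textsc{3-Partition} if one prefers to hide the numbers in the dimension rather than in coefficients --- realizing \eqref{eqn:dual_oracle} as the maximization of one \emph{fixed} quadratic objective over an instance-dependent polytope that the comparison inequalities $C_a(y)\le C_{a'}(y)$ carve out of $\Delta_n$. The point is that $R$, the distinguished piece $C_a$, and $\alpha$ can all be taken universal (their entries depend only on $n$), while the instance data goes entirely into the remaining pieces $C_{a'}$ --- the ``partially fixed $C(y)$'' of the statement --- and into $n$. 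Since the excerpt has already reduced to the case where $C_a$ may carry negative coefficients, no positivity constraints need be maintained during the construction.

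The construction has two halves. First, choose $R$, $C_a$, $\alpha$ so that, on the affine hull $\{\mathbf 1^Ty=1\}$ of $\Delta_n$, the objective $R(y)C_a(y)-\alpha^Ty$ collapses to a one-dimensional parabola in the functional $\sum_{j<n}y_j$ (the ``generalized cardinality''); this happens precisely when $r$ equals $c_a$ up to an affine rescaling on the simplex, and I would pick the rescaling and $\alpha$ so that the parabola's extremum sits at $\sum_{j<n}y_j=k/n$, with value there positive while its values at $(k\pm1)/n$ are negative. After a standard scaling of the weights, the sign of the objective at a $0/1$-scaled point $y^S$ then records exactly the predicate $|S|=k$. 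Second, build the $C_{a'}$: use $n-1$ universal ones whose defining inequalities read $y_j\le 1/n$, forcing $y$ into the scaled hypercube whose vertices are precisely the points $y^S$ with $y^S_j\in\{0,1/n\}$ for $j<n$ and $y^S_n=1-\sum_{j<n}y^S_j$; and use one instance-dependent piece whose inequality reads $\sum_{j<n}w_jy_j=t/n$ (written as $C_a(y)\le C_{a''}(y)$), which on the integral points is exactly $\sum_{j\in S}w_j=t$. Then the extreme points of $\mathcal{P}_a\cap\Delta_n$ that can be optimal are the indicators $y^S$ of \textsc{Subset-Sum}-feasible sets, and ``the value of \eqref{eqn:dual_oracle} is $\ge 0$'' becomes ``$\exists$ such $S$ with $|S|=k$'', i.e.\ the $k$-\textsc{Subset-Sum} decision.

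The step I expect to be the crux is verifying that the maximum of the objective over the \emph{entire} polytope $\mathcal{P}_a\cap\Delta_n$, not merely over the $y^S$'s, is attained at some $y^S$ --- which is exactly the structural conclusion the lemma also asserts. Concavity of the objective on $\Delta_n$ does not by itself localize the maximum at a vertex, so one has to match the polytope's geometry to the level sets of the objective, e.g.\ arrange that the face of the hypercube on which the parabola is peaked, intersected with the weight hyperplane, has only integral vertices; making this argument clean while keeping $R,C_a,\alpha$ universal is where the real work lies. A safer but less crisp fallback is the convex choice $r\propto\lambda c_a$ with $\lambda>0$, for which maximization at a vertex of $\mathcal{P}_a\cap\Delta_n$ is automatic, at the cost of turning the decision into a two-sided cardinality condition that is still NP-complete after a scaling trick. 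Membership in NP is immediate --- a guessed optimum is a vertex of an explicitly presented polytope, hence of polynomial bit-size, and both feasibility and the sign of the objective are checkable in polynomial time --- so the reduction gives NP-completeness, and the vertex argument gives the claimed form of the maximizer.
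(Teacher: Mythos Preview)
Your architectural instinct --- keep $R$, $C_a$, $\alpha$ universal and push all instance data into the remaining pieces $C_{a'}$ --- matches the paper's. But the concrete reduction you sketch has a genuine gap at exactly the place you flag as ``the crux'', and I do not see how to close it along the lines you suggest. In your concave construction the objective depends only on the scalar $\sigma=\sum_{j<n}y_j$, so ``is the optimum of \eqref{eqn:dual_oracle} at least $0$?'' becomes ``does the feasible polytope meet the slab $\{y:\sigma(y)\in[\sigma_-,\sigma_+]\}$ where your parabola is nonnegative?''. That is an LP feasibility question, decidable in polynomial time; fractional points with $\sigma=k/n$ satisfying the single subset-sum hyperplane exist generically (e.g.\ whenever $t/n$ lies between the min and max of $\sum_j w_j y_j$ over the box face $\sigma=k/n$), so YES and NO instances of $k$-\textsc{Subset-Sum} are not separated. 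Your convex fallback does not rescue this: with $r,c_a,\alpha$ all in the span of $(1,\dots,1,0)$ and $\mathbf 1$, the objective at a hypercube vertex $y^S$ depends only on $|S|$ and hence carries no instance information, while adding the subset-sum hyperplane creates fractional vertices at which a convex objective may well be maximized.

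The paper avoids this difficulty by not proving the ``binary-maximizer'' property from scratch. It invokes a result of Pardalos and Vavasis (1991): there is a family of box-constrained rank-one quadratic programs $\max\{(\beta^Ty)^2-\gamma^Ty:\;Ay\le b,\;0\le y\le 1\}$ with \emph{fixed} $\beta,\gamma$ and \emph{partially fixed} $A,b$ for which (i) deciding whether the optimum is $\ge 0$ is NP-complete, and (ii) the optimum is always attained at a binary $y$. The paper then scales $x=y/n$, appends a slack coordinate $z=1-\sum_i x_i$ to land in $\Delta_{n+1}$, homogenizes the constants via $\mathbf 1^Tx+z=1$, and reads off $R=C_a=(\beta,0)^T\cdot$, $\alpha=(\gamma,0)$, with the linear constraints of the Pardalos--Vavasis instance becoming the comparison inequalities $C_a\le C_{a'}$. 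Both the NP-completeness and the $\{0,1/n\}$-structure of the maximizer are thus inherited wholesale from the cited theorem; the point is that this structural guarantee already requires a nontrivial reduction (with several instance-dependent constraints in $A$), which a single subset-sum hyperplane cannot replicate.
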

\begin{proof}
We use reduction from the following problem, which is proved to be NP-complete in~\cite{pardalos1991quadratic}.
\begin{description}
\item[\cite{pardalos1991quadratic}.] There exist fixed non-negative vectors $\gamma, \beta$ with length $n$, and partially fixed $A,b$, so that it is NP-complete to decide whether 
\begin{align} 
\max \quad & (\beta^T y)^2 - \gamma^T y \label{eqn:single_hard_eq1}\\
\text{s.t.} \quad & Ay \le b \notag \\
& 0 \le y_i \le 1, \quad \forall i\in[n]. \notag
\end{align}
And the maximum of \eqref{eqn:single_hard_eq1} must be achieved at binary $y$, i.e., $y \in \{0,1\}^n$.
\end{description}

We construct an instance of~\eqref{eqn:dual_oracle} that is equivalent to an instance of \eqref{eqn:single_hard_eq1}. We first scale the variables so that the feasible region is a subset of $\Delta_n$. Define $x = \frac{1}{n} y$, we have the following NP-hard problem,
\begin{align} \label{eqn:quadratic}
\max \quad & (\beta^T x)^2 - \gamma^T x\\
\text{s.t.} \quad & Ax \le b \notag\\
& 0 \le x_i \le \frac{1}{n}, \quad \forall i\in[n] \notag
\end{align}
where $\gamma, \beta$ are non-negative vectors with length $n$. We add a variable $z = 1 - \sum_i x_i$, so that \eqref{eqn:quadratic} is equivalent to 
\begin{align*}
\max \quad & (\beta^T x)^2 - \gamma^T x \notag\\
\text{s.t.} \quad & Ax \le b \notag\\
& 0 \le x_i \le \frac{1}{n}, \quad \forall i\in[n] \notag\\
&\bm{1}^T x + z = 1. \notag
\end{align*}
Then we replace all the constants by their products with $\bm{1}^T x + z$, which is equal to $1$,
\begin{align} \label{eqn:quadratic2}
\max_{(x,z) \in \Delta_{n+1}} & \quad (\beta^T x)^2 - \gamma^T x\\
\text{s.t.} & \quad Ax - b \left(\bm{1}^T x+ z \right)\le 0 \notag\\
& \quad x_i - \frac{1}{n} \left(\bm{1}^T x + z \right) \le 0, \quad \forall i\in[n]. \notag
\end{align}
\eqref{eqn:quadratic2} is an instance of \eqref{eqn:dual_oracle} by letting $y=(x,z)$, $R(y) = C_a(y) = \beta^T x + 0 \cdot z$, $\alpha = (\gamma, 0)$, and defining a bunch of $C_{a'}(y)$ so that linear constraints $C_a(y) - C_{a'}(y) \le 0$ equal the linear constraints in~\eqref{eqn:quadratic2}. 
\end{proof}

We then show that it is hard to solve 
\begin{align} \label{eqn:dual_prob}
    \max_{y\in \Delta_n} \{ R(y)\cdot C(y) -  \alpha^T y \}.
\end{align}
\begin{lemma}
There exist fixed $R(y) = r^T y$, fixed $\alpha$ and partially fixed $C(y)$, such that deciding whether the solution of~\eqref{eqn:dual_prob} is greater or equal to $0$ is NP-complete, and the maximum must be achieved at point $y$ with
$$
y_j \in \{0, 1/n\} \text{ for } 1\le j \le n-1, \quad y_n = 1 - \sum_{j=1}^{n-1} y_j.
$$
\end{lemma}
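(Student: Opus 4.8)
The plan is to reduce from the constrained problem \eqref{eqn:dual_oracle}, whose NP-hardness was just established, by \emph{absorbing} its linear constraints into the minimum that defines the cost of uncertainty. Recall the instance produced in the previous lemma: writing $y=(x,z)$ for a point of $\Delta_n$ with $z$ the slack coordinate, one has the fixed data $R(y)=C_a(y)=\beta^Tx$ and $\alpha=(\gamma,0)$ with $\gamma\ge 0$, together with auxiliary actions $a'$ carrying the linear functions $C_{a'}(y)=C_a(y)-\ell_{a'}(y)$, where $\{\,\ell_{a'}(y)\le 0\,\}$ are the homogenised constraints of the quadratic program. By construction the cell $\mathcal P_a=\{\,y:C_a(y)\le C_{a'}(y)\ \forall a'\,\}$ is exactly the feasible polytope, and on it $R(y)C_a(y)-\alpha^Ty$ equals the quadratic objective, whose optimum I call $v^\star$. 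Keeping $R$ and $\alpha$ the same, set $C(y):=\min_{a'}C_{a'}(y)$, the minimum over \emph{all} these actions ($a$ included); this is a legitimate cost-of-uncertainty function, instance-dependent only through its auxiliary pieces, which is what ``partially fixed $C(y)$'' refers to.

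First I would tile $\Delta_n=\bigcup_{a'}\mathcal P_{a'}$ and write
\[
\max_{y\in\Delta_n}\ R(y)C(y)-\alpha^Ty \;=\; \max_{a'}\Big(\ \max_{y\in\mathcal P_{a'}} R(y)C_{a'}(y)-\alpha^Ty\ \Big).
\]
The $a'=a$ term is precisely \eqref{eqn:dual_oracle}, hence the left side is at least $v^\star$; the real work is to show equality, i.e.\ that no auxiliary cell beats $v^\star$. Fix an auxiliary $a'$ with constraint $\ell:=\ell_{a'}$. On $\mathcal P_{a'}$ one has $C_{a'}(y)\le C_a(y)$, i.e.\ $\ell(y)\ge 0$, so the objective there is $g(y)=\beta^Tx\big(\beta^Tx-\ell(y)\big)-\gamma^Tx$. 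If $\ell(y)=0$ then $C_{a'}(y)=C_a(y)$ is minimal, so $y\in\mathcal P_a$ and $g(y)=(\beta^Tx)^2-\gamma^Tx\le v^\star$. To control the points with $\ell(y)>0$ I would rescale the auxiliary piece to $C_{a'}(y)=C_a(y)-M\,\ell(y)$ for a large multiplier $M$ (this changes neither $\mathcal P_a$ nor $R,C_a,\alpha$): then $g(y)=\beta^Tx\big(\beta^Tx-M\ell(y)\big)-\gamma^Tx$ is negative once $M\ell(y)>\beta^Tx$, while on the thin slab where $\ell(y)$ is too small for that the value $g(y)$ is within $o(1)$ of $(\beta^Tx)^2-\gamma^Tx$ near the face $\{\ell=0\}\subseteq\mathcal P_a$, hence at most $v^\star+o(1)$; a compactness argument over the finitely many cells then fixes a single $M$ so that every auxiliary $a'$ satisfies $\max_{\mathcal P_{a'}}g\le \max(v^\star,0)$, the extra $0$ coming from points where $R=\beta^Tx$ vanishes. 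The part demanding the most care --- and the main obstacle --- is precisely this rescaling: one must show the multipliers can be taken large enough to neutralise the auxiliary cells yet remain polynomially bounded in the input, so that the reduction is still polynomial; the crude bound $\beta^Tx/\ell(y)$ is not uniformly finite, and one has to exploit that exactly where it blows up the relevant point already sits near $\mathcal P_a$.

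Granting this, $\max_{y\in\Delta_n}R(y)C(y)-\alpha^Ty$ equals $\max(v^\star,0)$, so the threshold decision for \eqref{eqn:dual_prob} inherits NP-hardness from the previous lemma. Membership in NP comes from the structural claim: the previous lemma produces an optimizer with $y_j\in\{0,1/n\}$ for $1\le j\le n-1$ and $y_n=1-\sum_{j<n}y_j$ (the binary optimum of the quadratic program), and such a point lies in $\mathcal P_a$, hence also maximizes $R(y)C(y)-\alpha^Ty$ over $\Delta_n$; it has polynomial bit length, and the objective --- $R(y)$ times a minimum of polynomially many explicit linear forms, minus $\alpha^Ty$ --- is evaluable in polynomial time, so one can guess and verify. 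This yields NP-completeness together with the asserted form of the maximizer, with $R$, $C_a$, $\alpha$ the same fixed objects as in the previous lemma and only the (rescaled) auxiliary pieces of $C$ depending on the instance.
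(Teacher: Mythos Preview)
Your high-level strategy matches the paper's: keep $R$, $C_a$, and $\alpha$ from the previous lemma and rescale the auxiliary pieces $C_{a'}=C_a-M\,\ell_{a'}$ so that the constrained maximum over $\mathcal P_a$ survives as the unconstrained maximum over $\Delta_n$. The paper executes this via a directional-derivative argument (at any point $y_0$ in an auxiliary region, the derivative toward the projection onto $\{(c_{a'}-c_a)^Ty=0\}$ can be made favorable once the scaling is large, so no maximizer can sit there), whereas you argue by direct value bounds on each cell. Both are reasonable implementations of the same idea, and your flagging of the polynomial-size issue for the multipliers is a point the paper glosses over.

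There is, however, a genuine gap in your final step. You conclude, correctly under your own bounds, that the unconstrained maximum equals $\max(v^\star,0)$: the value $0$ is attained at $y=e_n$ (where $R=\beta^Tx=0$ and $\alpha^Ty=0$), so the unconstrained maximum is always $\ge 0$. But then the threshold decision ``is the maximum $\ge 0$?'' is trivially \textsc{Yes} for every instance, and nothing is inherited from the previous lemma's hardness of deciding $v^\star\ge 0$. The paper, by contrast, aims via its derivative argument to rule out \emph{any} maximizer in the auxiliary regions, so that the unconstrained maximum equals $v^\star$ exactly and the $\ge 0$ threshold remains nontrivial. Your value-bound route, as written, cannot reach that stronger conclusion; you would need either a different threshold (which the lemma statement does not allow) or a perturbation of the construction that makes $R$ strictly positive on $\Delta_n$ so the ``extra $0$'' disappears.

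A smaller point: the assertion $\{\ell=0\}\subseteq\mathcal P_a$ is not correct as stated. A point with $\ell_{a'}(y)=0$ may still violate some other constraint $\ell_{a''}(y)\le 0$ and hence lie outside $\mathcal P_a$, so the step $g(y)\le(\beta^Tx)^2-\gamma^Tx\le v^\star+o(1)$ needs an additional argument that \emph{all} the $\ell_{a''}(y)$ are small on the slab. This is fixable (being in $\mathcal P_{a'}$ forces $M_{a''}\ell_{a''}(y)\le M_{a'}\ell_{a'}(y)\le\beta^Tx$, so every constraint is nearly satisfied when all multipliers are large), but it should be said.
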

\begin{proof}
	We construct an instance of \eqref{eqn:dual_prob} that is equivalent to \eqref{eqn:dual_oracle}. Suppose we have an instance of \eqref{eqn:dual_oracle}
	\begin{align*} 
    \max_{y\in \Delta_n} \{ R(y)\cdot C_a(y) -  \alpha^T y ~|~ C_a(y) \le C_{a'}(y), \forall a'\}.
\end{align*}

We show that we can change $C_{a'}(y)$ for $a'\neq a$ so that the maximum of function $R(y)\cdot C(y) -  \alpha^T y$ cannot lie within region 
$$
C_{a'}(y) < C_a(y), \text{  i.e.,  } (c_{a'}- c_a)^T y < 0
$$
for any $a'$.
The gradient of $R(y)\cdot C(y) -  \alpha^T y$ is equal to
\begin{align*}
\nabla 	(R(y)\cdot C(y) -  \alpha^T y ) = C(y)\cdot r + R(y) \cdot \nabla C(y)  - \alpha.
\end{align*}
Consider a point $y_0$ with $C(y_0) = C_{a'}(y_0)$, so $\nabla C(y_0) = c_{a'}$ and $(c_{a'}- c_a)^T y_0 < 0$. Consider the projection of $y_0$ to the line $(c_{a'}- c_a)^T y = 0$. Let the projection be $y^*$. Then we should have $y^* - y_0 \propto c_{a'}- c_a$.
The directional derivative of $R(y)\cdot C(y) -  \alpha^T y$ along direction $d \propto c_{a'} -c_{a}$ at this point is equal to 
\begin{align}
d^T(C(y_0)\cdot r + R(y_0) \cdot c_{a'}  - \alpha) & = d^T(C_{a'}(y_0)\cdot r + R(y_0) \cdot c_{a} + R(y_0)(c_{a'} - c_a) - \alpha)\notag\\
& = (C_{a'}(y_0)d^T r + R(y_0)d^T c_a - d^T \alpha ) - R(y_0) \Vert c_{a'} - c_a\Vert \notag\\
& \le (C_{a}(y_0)d^T r + R(y_0)d^T c_a - d^T \alpha ) - R(y_0) \Vert c_{a'} - c_a\Vert. \label{eqn:single_hard_dd}
\end{align}
Note that we can increase  $\Vert c_{a'} - c_a\Vert$ by a factor of $k$ without changing $c_a$ and region $\{y: C_{a'}(y) < C_a(y)\}$ (or equivalently the direction of $c_{a'}- c_a$) by replacing $c_{a'}$ with
$$
\widetilde{c}_{a'} = k (c_{a'}- c_a) + c_a
$$
so that $\widetilde{c}_{a'}- c_a = k (c_{a'}- c_a)$. Since $R(y_0) = r^T y_0 \ge \min_i \{r_i\}$ is bounded from below by a positive constant, we can choose $k$ that is large enough so that  the directional derivative~\eqref{eqn:single_hard_dd} is negative for all the points between $y_0$ and $y^*$, which means that $R(y_0)C(y_0) - \alpha^T y_0 < R(y^*)C(y^*) - \alpha^T y^*$. So $y_0$ cannot be the maximum point. 

So we replace $c_{a'}$ with $\widetilde{c}_{a'}$  for all $a' \neq a$ to get the new cost of uncertainty function $\widetilde{C}(y)$  so that the maximum 
\begin{align*} 
    \max_{y\in \Delta_n} \{ R(y)\cdot \widetilde{C}(y) -  \alpha^T y \}
\end{align*}
cannot lie in region
$$
C_{a'}(y) < C_a(y), \text{  i.e.,  } (c_{a'}- c_a)^T y < 0
$$
for any $a'$, which is thus equal to 
	\begin{align*} 
    \max_{y\in \Delta_n} \{ R(y)\cdot C_a(y) -  \alpha^T y ~|~ C_a(y) \le C_{a'}(y), \forall a'\}
\end{align*}

\end{proof}

Finally we prove that Problem \eqref{eqn:dual_prob}
\begin{align*} 
    \max_{y\in \Delta_n} \{ R(y)\cdot C(y) -  \alpha^T y \} = \max_{y\in \Delta_n} \{ f(y) -  \alpha^T y \}
\end{align*}
whose maximum must be achieved at point $y$ with
\begin{align} \label{eqn:single_binary_p}
y_j \in \{0, 1/n\} \text{ for } 1\le j \le n-1, \quad y_n = 1 - \sum_{j=1}^{n-1} y_j
\end{align}
 is polynomial-time reducible to the seller's problem of finding concave closure $\overline{f}$ at a point \eqref{eqn:CC_dual}.
First by the definition of concave closure, it holds that  
$$
\max_{y\in \Delta_n} \{ f(y) -  \alpha^T y \} = \max_{y\in \Delta_n} \{ \overline{f}(y) -  \alpha^T y \}.
$$
So to solve \eqref{eqn:dual_prob}, we only need to find the maximum of concave function $\overline{f}(y) -  \alpha^T y$ at points that satisfy~\eqref{eqn:single_binary_p}. If we have an oracle that solves the concave closure  $\overline{f}(y)$ defined in \eqref{eqn:CC_dual}, then we can use the ellipsoid method to find the maximum of $\overline{f}(y) -  \alpha^T y$. When we know the maximum point must have~\eqref{eqn:single_binary_p} and only need to find the maximum at points with~\eqref{eqn:single_binary_p}, the ellipsoid method can terminate within polynomially many iterations, by the same arguments as in~\cite{goemans2005lecture}. Therefore the NP-hard problem \eqref{eqn:dual_prob} is polynomial-time reducible to finding the concave closure $\overline{f}$. Therefore the seller's optimal information advertising problem is NP-hard.



\subsection{Proof of Lemma~\ref{lem:opt_size}} \label{app:single_opt_size}

Let $\langle S, \pi \rangle$ be an optimal advertising rule. Suppose  $S = \{s_1, \dots, s_k\}$ with $k > n$. For simplicity, we write $\eta^{(i)} = \eta^{s_i}$ as the posterior when $s_i$ is received, and $\phi^{(i)} = \phi_\theta(s_i)$ as the probability of receiving $s$ based on the buyer's belief $\theta$. WLOG assume $\phi^{(i)} > 0$ for all $1\le i \le k$. Since $k>n$ we must have $\eta^{(1)}, \dots, \eta^{(k)}$ linearly dependent. So there exists non-zero vector $\alpha$ with 
$$
\alpha_1 \eta^{(1)}+ \cdots + \alpha_k \eta^{(k)} = 0.
$$
WLOG assume $\alpha_1 \neq 0$.
Then 
$$
\eta^{(1)} = - \frac{\alpha_2}{\alpha_1} \eta^{(2)} - \cdots - \frac{\alpha_k}{\alpha_1} \eta^{(k)}. 
$$
We can then try to reduce the size of $S$ by substituting $\eta^{(1)}$ with $- \frac{\alpha_2}{\alpha_1} \eta^{(2)} - \cdots - \frac{\alpha_k}{\alpha_1} \eta^{(k)}$, that is, reducing $\phi^{(1)}$ by $\delta$ and increasing other  $\phi^{(i)}$ by $- \frac{\alpha_i}{\alpha_1}\delta$. This will not violate the constraints of~\eqref{prog:opt2}. We increase the value of $\delta$ until one of $\phi^{(i)}$ reaches $0$ and that signal can be removed from $S$.
Since we are considering an optimal advertising rule, we must have the value of $f(x) = R(x) C(x)$ satisfying
$$
f(\eta^{(1)}) = - \frac{\alpha_2}{\alpha_1} f(\eta^{(2)}) - \cdots - \frac{\alpha_k}{\alpha_1} f(\eta^{(k)}),
$$
otherwise we can substitute one of $\eta^{(1)}$ and $- \frac{\alpha_2}{\alpha_1} \eta^{(2)} - \cdots - \frac{\alpha_k}{\alpha_1} \eta^{(k)}$ with another to strictly increase the objective value without violating the constraints.
Therefore as long as $k>n$, we can reduce the size of $S$ by one without violating the constraints or changing the objective value. 

\subsection{Proof of Lemma~\ref{lem:opt_decompose} }\label{app:single_opt_decompose}
Suppose there  exist $\eta^{(1)},\eta^{(2)} \in \mathcal{P}_a$ with 
	 $
	 \eta^s = \alpha \eta^{(1)} + (1-\alpha) \eta^{(2)}
	 $ 
	 and 
	 $
	 (R(\eta^s) - R(\eta^{(1)})) (C(\eta^s) - C(\eta^{(1)}))>0.
	 $
Since both $R(\eta)$ and $C(\eta)$ are linear functions within $\mathcal{P}_a$, we should have 
$$
(R(\eta^s), C(\eta^s)) = \alpha (R(\eta^{(1)}), C(\eta^{(1)})) + (1-\alpha) (R(\eta^{(2)}), C(\eta^{(2)})),
$$
and $(R(\eta^s), C(\eta^s))$, $(R(\eta^{(1)}), C(\eta^{(1)}))$, $(R(\eta^{(2)}), C(\eta^{(2)}))$ lying on a line with a positive slope. Since function $g(x,y) = xy$ is strictly convex along a direction with positive slope, decomposing $\eta^s$ into $\alpha \eta^{(1)} + (1-\alpha) \eta^{(2)}$ should strictly increase the objective value of~\eqref{prog:opt2}, which contradicts the optimality of the advertising rule.

\subsection{Proof of Lemma~\ref{lem:opt_never}} \label{app:lem_never}

The idea is as follows.
	Consider merging $s$ and $t$ into one single $r$ so that 
	$$\eta^{r} = \alpha \eta^{s} + (1-\alpha) \eta^{t}, \quad \alpha\in(0,1) $$
	as in~\eqref{eqn:single_merge}. First assuming that $C(\eta)$ is a linear function of $\eta$, then we should have $\big(R(\eta^s), C(\eta^s)\big)$, $\big(R(\eta^{r}), C(\eta^{r})\big)$, $\big(R(\eta^{t}), C(\eta^{t})\big)$ lying on a line with a negative slope. Since $g(x,y) = xy$ is a strictly concave function along a direction with a negative slope, merging $s$ and $t$ should lead to a higher objective value assuming $C(\cdot)$ is linear,
	$$
	R(\eta^r) (\alpha  C(\eta^{s}) + (1-\alpha) C(\eta^{t})) > \alpha R(\eta^{s}) C(\eta^{s}) + (1-\alpha) R(\eta^{t})C(\eta^{t}).
	$$
	When $C(\cdot)$ is concave but not linear, the gap will only be larger as $C(\eta^r)\ge  
	\alpha  C(\eta^{s}) + (1-\alpha) C(\eta^{t})$. So merging $s$ and $t$ will still increase the objective value,
	$$
	R(\eta^r) C(\eta^r)\ge  
	R(\eta^r) (\alpha  C(\eta^{s}) + (1-\alpha) C(\eta^{t})) > \alpha R(\eta^{s}) C(\eta^{s}) + (1-\alpha) R(\eta^{t})C(\eta^{t}),
	$$
	which contradicts the optimality of the advertising rule.

One can compute the follows. Consider  merging two signals, that is, having a new signal $v$ with 
\begin{align*}
\pi(\omega, v) = \pi(\omega, s) + \pi(\omega, t), \quad \forall \omega.
\end{align*}
It is easy to verify that 
\begin{align*}
& \phi_\theta(v) = \phi_\theta(s) + \phi_\theta(t),\\
& \eta^v = \frac{\phi_\theta(s)}{\phi_\theta(v)} \cdot \eta^s + \frac{\phi_\theta(t)}{\phi_\theta(v)} \cdot \eta^t,\\
& R(\eta^v) = \frac{\phi_\theta(s)}{\phi_\theta(v)} \cdot R(\eta^s) + \frac{\phi_\theta(t)}{\phi_\theta(v)} \cdot R(\eta^t).
\end{align*}
 Then the change of the expected revenue when merging $s$ and $t$ is equal to
\begin{align}
&\phi_\theta(v) R(\eta^v) C(\eta^v) - \phi_\theta(s) R(\eta^s) C(\eta^s) - \phi_\theta(t) R(\eta^t) C(\eta^t) \notag \\
= & \big(\phi_\theta(s) \cdot R(\eta^s) + \phi_\theta(t) \cdot R(\eta^t) \big) C(\eta^v) - \phi_\theta(s) R(\eta^s) C(\eta^s) - \phi_\theta(t) R(\eta^t) C(\eta^t). \label{eqn:chrc_diff_eq1}
\end{align}\label{eqn:chrc_diff_eq2}
Since $C(\cdot)$ is a concave function, we have 
\begin{align}
C(\eta^v) = C \left( \frac{\phi_\theta(s)}{\phi_\theta(v)} \cdot \eta^s + \frac{\phi_\theta(t)}{\phi_\theta(v)} \cdot \eta^t \right) \ge \frac{\phi_\theta(s)}{\phi_\theta(v)} \cdot C(\eta^s) + \frac{\phi_\theta(t)}{\phi_\theta(v)} \cdot C(\eta^t).
\end{align}
Combining \eqref{eqn:chrc_diff_eq1} and \eqref{eqn:chrc_diff_eq2}, we know that the change of the expected revenue is no less than
\begin{align*}
&\big(\phi_\theta(s) \cdot R(\eta^s) + \phi_\theta(t) \cdot R(\eta^t) \big) \left( \frac{\phi_\theta(s)}{\phi_\theta(v)} \cdot C(\eta^s) + \frac{\phi_\theta(t)}{\phi_\theta(v)} \cdot C(\eta^t)\right) - \phi_\theta(s) R(\eta^s) C(\eta^s) - \phi_\theta(t) R(\eta^t) C(\eta^t)\\
= & \frac{ \left(\phi_\theta(s)  R(\eta^s) + \phi_\theta(t) R(\eta^t) \right) \left( \phi_\theta(s)  C(\eta^s) + \phi_\theta(t) C(\eta^t)\right) - \phi_\theta(s) \phi_\theta(v) R(\eta^s) C(\eta^s) - \phi_\theta(t) \phi_\theta(v) R(\eta^t) C(\eta^t) }{\phi_\theta(v)}\\
= & \frac{\phi_\theta(s) \phi_\theta(t)}{\phi_\theta(v)} \cdot \left( R(\eta^s) C(\eta^t) +  R(\eta^t)  C(\eta^s) - R(\eta^s) C(\eta^s) - R(\eta^t) C(\eta^t) \right) \\
= & - \frac{\phi_\theta(s) \phi_\theta(t)}{\phi_\theta(v)} \cdot \big( R(\eta^s) - R(\eta^t)\big) \big(C(\eta^s) - C(\eta^t)\big)\\
> & 0.
\end{align*}

\subsection{Proof of Lemma~\ref{lem:point_pair}} \label{app:point_pair}

We first restate some notations.
$$
\mathcal{P}_a = \{ \eta: C_a(\eta) \le C_{a'}(\eta), \forall a'\}
$$
is the polytope in which action $a$ is always the best action and $C(\theta) = C_a(\theta)$ is linear. $\mathcal{H}_a$ is the set of vertices of the polytope $\mathcal{P}_a$,
Define $\mathcal{Q}_a = \{(R(\eta), C(\eta)): \eta \in \mathcal{P}_a\}$.

We prove that there exists an optimal advertising rule with each $\eta^s$ lying on the segments between the vertices in $\mathcal{H}_a$ for some $a$.

Consider an optimal advertising rule $\langle S, \pi\rangle$ and a signal $s\in S$ with $\phi_\theta(s)>0$. Suppose $\eta^s$ lies in $\mathcal{P}_a$, then $\eta^s$ can be represented as a convex combination of the vertices of the polytope, 
$$
\eta^s = \sum_{i \in \mathcal{H}_a} q_i \cdot i.
$$
Let $T\subseteq \mathcal{H}_a$ be the set of vertices $i$ that has $q_i > 0$. If $|T| \le 2$, then the lemma is proved. Otherwise we claim the follows. 
\begin{claim} \label{clm:nonp_line_app}
The points $\{(R(i), C(i)): i \in T\}$ in two-dimensional space, which represent the likelihood ratio and the cost of uncertainty of $i \in T$, must lie on a line with a nonpositive slope. 
\end{claim}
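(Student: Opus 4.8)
The plan is to prove Claim~\ref{clm:nonp_line_app} by contradiction, with Lemma~\ref{lem:opt_decompose} as the only ingredient. Write $P_i=(R(i),C(i))$ for $i\in T$. Since $R(\cdot)$ is linear on $\Delta\Omega$ and $C(\cdot)$ agrees with the linear function $C_a$ on $\mathcal{P}_a$, the point $(R(\eta^s),C(\eta^s))$ equals $\sum_{i\in T}q_i\,P_i$, a convex combination of $\{P_i:i\in T\}$ with \emph{strictly positive} weights; and for any $\lambda$ in the simplex $\Delta_T$ over $T$, the point $\sum_{i\in T}\lambda_i\,i$ lies in $\mathcal{P}_a$ (being a convex combination of vertices of $\mathcal{P}_a$) and has $(R,C)$-image $\sum_{i\in T}\lambda_i P_i$. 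If the conclusion fails, then either (a) the $P_i$ are collinear on a line of strictly positive slope, or (b) the $P_i$ are not collinear; these are the two cases in Figure~\ref{fig:point_pair}. In each case I will produce $\eta^{(1)},\eta^{(2)}\in\mathcal{P}_a$ and $\alpha\in(0,1)$ with $\eta^s=\alpha\eta^{(1)}+(1-\alpha)\eta^{(2)}$ and $(R(\eta^s)-R(\eta^{(1)}))(C(\eta^s)-C(\eta^{(1)}))>0$, contradicting the optimality of $\langle S,\pi\rangle$ via Lemma~\ref{lem:opt_decompose}.

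For case (a), note that since the line has finite nonzero slope the $P_i$, $i\in T$, are pairwise distinct and totally ordered by their $R$-coordinate; let $i^\star\in T$ be the one minimizing $R$. Set $\eta^{(1)}=i^\star$, $\alpha=q_{i^\star}\in(0,1)$ (here $|T|\ge 3$ guarantees $q_{i^\star}<1$), and $\eta^{(2)}=(\eta^s-q_{i^\star}i^\star)/(1-q_{i^\star})=\sum_{i\in T,\,i\ne i^\star}\tfrac{q_i}{1-q_{i^\star}}\,i\in\mathcal{P}_a$. Then $R(\eta^{(2)})>R(i^\star)=R(\eta^{(1)})$, and since the line has positive slope also $C(\eta^{(2)})>C(\eta^{(1)})$; as $\eta^s$ is a strict convex combination of $\eta^{(1)}$ and $\eta^{(2)}$, both $R(\eta^s)-R(\eta^{(1)})$ and $C(\eta^s)-C(\eta^{(1)})$ are strictly positive, which is the desired inequality.

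For case (b), $\mathrm{conv}\{P_i:i\in T\}$ is two-dimensional, so the linear part of the affine map $\lambda\mapsto\sum_{i\in T}\lambda_i P_i$, restricted to the tangent space $\{w:\sum_i w_i=0\}$ of $\Delta_T$, is surjective onto $\mathbb{R}^2$. Hence there is $v$ with $\sum_i v_i=0$ mapped to the positive-slope direction $(1,1)$, and for small $\varepsilon>0$ the perturbations $q\pm\varepsilon v$ still lie in $\Delta_T$ since $q$ has all coordinates positive. Taking $\eta^{(1)}=\sum_i(q_i+\varepsilon v_i)\,i$ and $\eta^{(2)}=\sum_i(q_i-\varepsilon v_i)\,i$, both in $\mathcal{P}_a$, gives $\tfrac12\eta^{(1)}+\tfrac12\eta^{(2)}=\eta^s$ and $(R(\eta^{(1)}),C(\eta^{(1)}))=(R(\eta^s),C(\eta^s))+\varepsilon(1,1)$, so $(R(\eta^s)-R(\eta^{(1)}))(C(\eta^s)-C(\eta^{(1)}))=\varepsilon^2>0$, again contradicting Lemma~\ref{lem:opt_decompose}. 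Intuitively, $(R(\eta^s),C(\eta^s))$ is in the (topological) interior of the two-dimensional region $\mathcal{Q}_a$, so it can be split along \emph{any} direction, in particular one of positive slope, as Figure~\ref{fig:point_pair}(b) suggests.

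The step I expect to be the main obstacle is case (b): decomposing the planar \emph{image} $(R(\eta^s),C(\eta^s))$ along a positive-slope direction is not by itself enough, since Lemma~\ref{lem:opt_decompose} demands a genuine convex decomposition of $\eta^s$ \emph{inside} $\mathcal{P}_a$; the surjectivity-of-the-linear-part argument above is exactly what lifts the planar perturbation back to such a decomposition of $\eta^s$. Once the claim holds, the remainder of Lemma~\ref{lem:point_pair} follows as sketched: when $|T|>2$ one peels $\eta^s$ into signals supported on two-vertex segments of $\mathcal{H}_a$ that share the same $(R,C)$ value, using that $g(x,y)=xy$ is concave along the nonpositive-slope line through $\{P_i:i\in T\}$ so that this splitting does not decrease the objective; and finally Lemma~\ref{lem:opt_never} collapses duplicate segments so that each pair $\{i,j\}$ carries a unique $\eta^s$.
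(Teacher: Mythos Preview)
Your proof is correct and follows essentially the same approach as the paper: argue by contradiction via Lemma~\ref{lem:opt_decompose}, splitting into the same two cases (collinear with positive slope, or non-collinear) illustrated in Figure~\ref{fig:point_pair}. The executions differ only cosmetically---in case~(a) you split off the extremal vertex $i^\star$ directly whereas the paper uses a symmetric $\varepsilon$-perturbation $\eta^s\pm\varepsilon(l-\eta^s)$, and in case~(b) your surjectivity-of-the-linear-part argument is a cleaner formalization of the paper's somewhat informal ``there must exist $l=\sum w_i i$ with $(R(l)-R(\eta^s))(C(l)-C(\eta^s))>0$''; one minor slip is the claim that the $P_i$ are \emph{pairwise} distinct in case~(a), which need not hold, but your argument only needs that not all $P_i$ coincide, which is guaranteed since a single point lies on a nonpositive-slope line.
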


If the points $\{(R(i), C(i)): i \in T\}$ does not lie on a line with a nonpositive slope, there are two possibilities,
\begin{enumerate}
\item the points lie on a line with a positive slope,
\item the points do not lie on a line.	
\end{enumerate}
In both of the cases, we can decompose $(R(\eta^s), C(\eta^s))$ along a direction $d$ with positive slope (as shown in Figure~\ref{fig:point_pair_app}).  
\begin{itemize}
\item In Case (1), when $\{(R(i), C(i)): i \in T\}$ lie on a line with a positive slope, suppose $(R(l), C(l))$ with $l \in T$ is one of the endpoints of segment $\mathcal{Q}_a$. Since $\eta^s = \sum_{i \in \mathcal{H}_a} q_i \cdot i$ and $q_i \in (0,1)$ for all i,  there exists small enough $\varepsilon$ so that $$\eta^s + \varepsilon (l-\eta^s) \in \mathcal{P}_a$$
 $$\eta^s - \varepsilon (l-\eta^s) = (q_l + (q_l-1) \varepsilon) l + \sum_{i \neq l} (1+\varepsilon)q_i \cdot i  \in \mathcal{P}_a $$
Therefore $\eta^s$ can be decomposed as 
$$
\eta^s = \frac{1}{2} (\eta^s + \varepsilon (l - \eta^s)) + \frac{1}{2} (\eta^s - \varepsilon (l - \eta^s)).
$$
\item In Case (2), there must exist a convex combination $l = \sum_{i\in T} w_i \cdot i$ so that $(R(l) - R(\eta^s))(C(l) -C(\eta^s)) > 0$. Again there must exist small enough $\varepsilon$ such that 
 $$
 \eta^s + \varepsilon (l-\eta^s) \in \mathcal{P}_a
 $$
 $$
 \eta^s - \varepsilon (l-\eta^s) = \sum_{i\in T} (q_i + \varepsilon( q_i - w_i)) \cdot i \in \mathcal{P}_a
 $$
 Then $\eta^s$ can be decomposed as 
$$
\eta^s = \frac{1}{2} (\eta^s + \varepsilon (l - \eta^s)) + \frac{1}{2} (\eta^s - \varepsilon (l - \eta^s)).
$$
\end{itemize}

Therefore according to Lemma~\ref{lem:opt_decompose}, both of the cases cannot be true for an optimal mechanism.

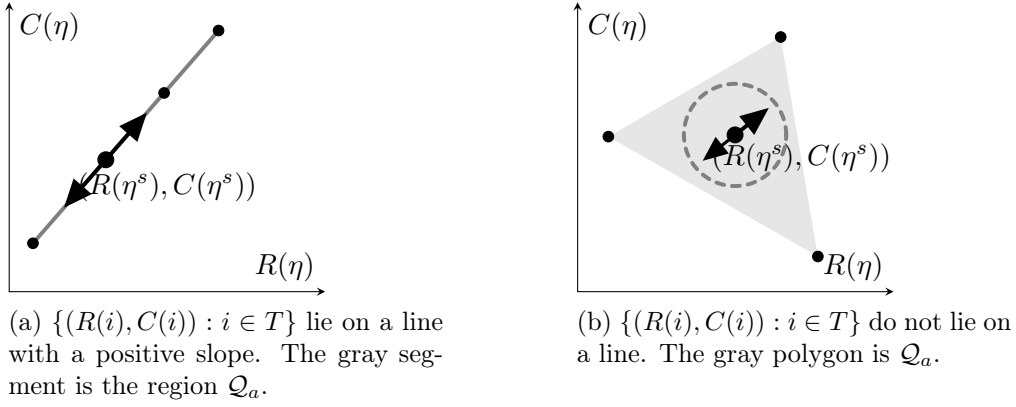
\begin{figure}
\centering
\begin{subfigure}[t]{0.35\textwidth}
\begin{tikzpicture}[line cap=round,line join=round,>=triangle 45,x=0.35cm,y=0.35cm]
\begin{axis}[
x=0.35cm,y=0.35cm,
axis lines=middle,
xmin=2,
xmax=14,
ymin=0,
ymax=11,
xtick={0},
ytick={0},
xlabel={$ R(\eta)$},
ylabel={$C(\eta)$}
]
\clip(-1.2390112325856555,-0.4135013389416602) rectangle (14.99975892730601,12.905939354003209);
\draw [line width=1.5pt, color=gray] (2.902339879923165,1.8397004666703274) -- (9.951751724043808,9.925790523161664);
\draw [->,line width=1.5pt] (5.66926878915395,5.013530686082114) -- (7.2296251254594175,6.8033511894913286);
\draw [->,line width=1.5pt] (5.66926878915395,5.013530686082114) -- (4.082649029366113,3.1935844910313573);
\begin{scriptsize}
\draw [fill=black] (9.951751724043808,9.925790523161664) circle (2pt);
\draw [fill=black] (7.883893351623145,7.553835331267372) circle (2pt);
\draw [fill=black] (2.902339879923165,1.8397004666703274) circle (2pt);
\draw [fill=black] (5.66926878915395,5.013530686082114) circle (3pt);
\draw[color=black] (8,4) node {{$(R(\eta^s), C(\eta^s))$}};
\end{scriptsize}
\end{axis}
\end{tikzpicture}
\caption{$\{(R(i), C(i)): i \in T\}$ lie on a line with a positive slope. The gray segment is the region $\mathcal{Q}_a$.}
\end{subfigure}\hspace{0.1 \textwidth}
\begin{subfigure}[t]{0.35\textwidth}
\begin{tikzpicture}[line cap=round,line join=round,>=triangle 45,x=0.35cm,y=0.35cm]
\begin{axis}[
x=0.35cm,y=0.35cm,
axis lines=middle,
xmin=1,
xmax=13,
ymin=0,
ymax=11,
xtick={0},
ytick={0},
xlabel={$ R(\eta)$},
ylabel={$C(\eta)$}
]
\clip(-1.2390112325856555,-0.4743207028363856) rectangle (14.99975892730601,12.905939354003209);
\fill[line width=2pt,fill=black,fill opacity=0.10000000149011612] (2.167732242368716,5.894053032749113) -- (8.713461588084149,9.680308242525713) -- (10.125285564611007,1.3377120175942223) -- cycle;
\draw [gray, line width=1.5pt,dashed] (6.980768525983005,5.958226849863971) circle (0.68cm);
\draw [->,line width=1.5pt] (6.980768525983005,5.958226849863971) -- (8.26424486828015,6.985007923701692);
\draw [->,line width=1.2pt] (6.980768525983005,5.958226849863971) -- (5.748192729107704,4.972166212363726);
\begin{scriptsize}
\draw [fill=black] (2.167732242368716,5.894053032749113) circle (2pt);
\draw [fill=black] (8.713461588084149,9.680308242525713) circle (2pt);
\draw [fill=black] (10.125285564611007,1.3377120175942223) circle (2pt);
\draw [fill=black] (6.980768525983005,5.958226849863971) circle (3pt);
\draw[color=black] (9.5,5.1) node {$(R(\eta^s), C(\eta^s))$};
\end{scriptsize}
\end{axis}
\end{tikzpicture}
\caption{$\{(R(i), C(i)): i \in T\}$ do not lie on a line. The gray polygon is  $\mathcal{Q}_a$.}
\end{subfigure}

\caption{An illustration for Claim~\ref{clm:nonp_line_app}. Two cases when $\{(R(i), C(i)): i \in T\}$ do not lie on a line with a nonpositive slope. The big black point represents $(R(\eta^s), C(\eta^s))$, and the small black points are $(R(i), C(i))$ for $i\in T$. $(R(\eta^s), C(\eta^s))$ is a convex combination of the points $\{(R(i), C(i)): i \in T\}$, with positive coefficients. In both of the cases, $(R(\eta^s), C(\eta^s))$ can be decomposed along a direction with a positive slope.}
\label{fig:point_pair_app}
\end{figure}

Based on Claim~\ref{clm:nonp_line_app}, we know that $\{(R(i), C(i)): i \in T\}$ must lie on a line with a nonpositive slope. Then we show that if $|T|>2$, we can decompose signal $s$ to a bunch of signals that have $\eta$ lying on segments between vertices. More specifically, we can decompose $\eta^s$ into a convex combination of some points with the same likelihood ratio and cost of uncertainty, 
$$
\eta^s = \alpha_1 \eta^{(1)} + \alpha_2 \eta^{(2)} + \cdots + \alpha_k \eta^{(k)}.
$$
with each point $\eta^{(l)} = \beta \cdot i + (1-\beta) \cdot j$ for some $i,j \in T$ and $(R(\eta^{(l)}), C(\eta^{(l)})) = (R(\eta^s), C(\eta^s))$, for $1\le l \le k$. We find $\eta^{(1)}, \eta^{(2)},  \dots,  \eta^{(k)}$ by repeating the following process
\begin{itemize}
	\item At step $l$, let $i^*,j^* \in T$ be the two endpoints on the segment of $\{(R(i), C(i)): i \in T\}$. Then there exists $\beta^* \in (0,1)$ such that $ \beta^* (R(i^*), C(i^*)) + (1- \beta^*) (R(j^*), C(j^*)) = (R(\eta^s), C(\eta^s))$. 
	\item Let $\eta^{(l)} = \beta^* \cdot i^* + (1-\beta^*) \cdot j^*$, $\alpha_l = \min\{ q_{i^*}/ \beta^*, q_{j^*}/(1-\beta^*)\}$. WLOG assume $\alpha_l = q_{i^*}/ \beta^*$.
	\item $\eta^s \gets \eta^s -  \alpha_l \eta^{(l)}$, $T \gets T \setminus \{i^*\}$.
\end{itemize}
At each step, we find an $\eta^{(l)}$ and reduce the size of $T$ at least by one. Repeat this process until $|T| \le 2$, we find an advertising rule that has each posterior lying on the segments between $i,j \in T \subseteq \mathcal{H}_a$. 
All the points $i \in T$ have $(R(i), C(i))$ lying on a line with a nonpositive slope, so we have $(R(i^*) - R(j^*))(C(i^*) - C(j^*)) \le 0$ for all $i^*$, $j^*$.

It remains to prove that for each pair $i,j$, there is a unique $\eta^s$ lying on the segment between $i,j$. Suppose there are two posteriors $\eta^s, \eta^t$ lying on the segment between $i,j$ with $(R(i) - R(j))(C(i) - C(j)) \le 0$. If $(R(i) - R(j))(C(i) - C(j)) < 0$, then by Lemma~\ref{lem:opt_never}, we must have $(R(\eta^s), C(\eta^s)) = (R(\eta^t), C(\eta^t))$. So we can merge $s$ and $t$ into one signal without changing the objective value. If $(R(i) - R(j))(C(i) - C(j)) = 0$, then either $R(\eta^s) = R(\eta^t)$ or $C(\eta^s) = C(\eta^t)$. In both of the cases, merging $s$ and $t$ will not decrease the objective value as $R(\cdot)$ is a linear function and $C(\cdot)$ is a concave function.

\subsection{Finding the extreme points} \label{app:extreme}

To find the vertices of the polytope $\mathcal{P}_a$, we start with the linear constraints that specify  $\mathcal{P}_a$. Recall that $\mathcal{P}_a \subseteq \Delta \Omega$ is the set of posterior beliefs based on which action $a$ is the best action. $\mathcal{P}_a$ can be defined by linear equations with non-negative variables as
	\begin{align}
	&  \Vert \eta \Vert_1 = 1 \notag\\
	& C_a(\eta) - C_{a'}(\eta) + s_{a'} =  0, \quad \forall a' \neq a \label{prog:basic_solution}\\
	& \eta\ge 0, \quad s_{a'} \ge 0 \notag
	\end{align}
Variables $s_{a'}$ are the slack variables that are added to convert inequality constraints $C_a(\eta) - C_{a'}(\eta) \le 0$ into equality constraints. Then the vertices of $\mathcal{P}_a$ can be found by solving the \emph{basic feasible solutions} of \eqref{prog:basic_solution}. 
\begin{definition} \label{def:basic_fs}
Let $\mathcal{P}$ be a polytope defined by $\mathcal{P} = \{ x: A x = b, x\ge 0\}$, where $A$ is a $m\times n$ matrix with $m\le n$. Without loss of generality we  assume rank$(A) = m$.\footnote{If rank$(A) < m$, then there exist redundant constraints that can be identified and removed.} Then a basic feasible solution is a solution $\in \mathcal{P}$ with $n-m$ variables set to zero. These $n-m$ zero variables are called non-basic variables of the solution, and the other $m$ variables are called basic variables.  
\end{definition}
\begin{lemma}[Theorem 2.3 in~\cite{bertsimas1997introduction}] \label{lem:basic_fs}
    Let $\mathcal{P}$ be a polytope defined by $\mathcal{P} = \{ x: A x = b, x\ge 0\}$. Then $x$ is a vertex of $\mathcal{P}$ if and only if $x$ is a basic feasible solution of $\{ x: A x = b, x\ge 0\}$.
\end{lemma}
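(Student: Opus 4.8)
The plan is to recognize this as the classical equivalence, in linear programming, between vertices of a polyhedron in standard form and basic feasible solutions; it is verbatim Theorem~2.3 of~\cite{bertsimas1997introduction}, so the shortest route is simply to cite that reference. If a self-contained argument is preferred, I would give the usual two-direction proof, phrased through the equivalent characterization of a basic feasible solution as a point $x\in\mathcal{P}$ whose support columns of $A$ are linearly independent.

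First I would fix notation and the working definitions. Write $A_j$ for the $j$-th column of $A$ and, for $x\in\mathcal{P}$, let $J(x)=\{j:x_j>0\}$ be the support of $x$. Under the stated hypothesis $\mathrm{rank}(A)=m$ (which the paper's footnote on removing redundant rows justifies assuming), Definition~\ref{def:basic_fs} is equivalent to: $x\in\mathcal{P}$ is a basic feasible solution iff $\{A_j:j\in J(x)\}$ is linearly independent --- linear independence forces $|J(x)|\le m$, and any such set extends to $m$ linearly independent columns, the remaining $n-m$ coordinates being the non-basic zeros. I would also recall that $x$ is a vertex of $\mathcal{P}$ exactly when it cannot be written as $\tfrac12(y+z)$ with $y,z\in\mathcal{P}$ distinct.

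For the direction ``basic feasible solution $\Rightarrow$ vertex'', I would take $x=\tfrac12(y+z)$ with $y,z\in\mathcal{P}$ and deduce $y=z=x$: for $j\notin J(x)$, nonnegativity of $y,z$ together with $x_j=0$ forces $y_j=z_j=0$, so $y-x$ is supported on $J(x)$; since $A(y-x)=b-b=0$, linear independence of $\{A_j:j\in J(x)\}$ gives $y=x$, and symmetrically $z=x$. For the converse I would argue the contrapositive: if $\{A_j:j\in J(x)\}$ is linearly dependent, pick $w\neq 0$ supported on $J(x)$ with $Aw=0$; then $x\pm\varepsilon w\ge 0$ for all small $\varepsilon>0$ (because $x_j>0$ on $J(x)$) and $A(x\pm\varepsilon w)=b$, so $x\pm\varepsilon w\in\mathcal{P}$, they are distinct, and $x$ is their midpoint --- hence $x$ is not a vertex.

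I do not expect a genuine obstacle; this is a textbook fact and each step is a one-line check on the defining constraints $Ax=b$, $x\ge 0$. The only place that needs a moment of care is matching the counting phrasing in Definition~\ref{def:basic_fs} (``$n-m$ variables set to zero'') with the linear-independence characterization in the degenerate case $|J(x)|<m$, which the $\mathrm{rank}(A)=m$ assumption and the extension-to-a-basis remark above handle.
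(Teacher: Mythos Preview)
Your proposal is correct and matches the paper's approach exactly: the paper does not give a proof of this lemma but simply cites it as Theorem~2.3 in~\cite{bertsimas1997introduction}, which is precisely what you suggest as the shortest route. Your optional self-contained argument is the standard textbook proof and is sound; it goes beyond what the paper does, but is entirely appropriate if a self-contained treatment is desired.
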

We can find all the basic feasible solutions of $\mathcal{P} = \{ x: A x = b, x\ge 0\}$ as follows. Let $B \subseteq [n]$ be a set of indices that correspond to $m$ linearly independent columns of the matrix $A$. We can then represent matrix $A$ as the concatenation of two matrices $A=[A_B\,|\,A_N]$ where $A_B$ is the $m \times m$ matrix whose columns are indexed by the indices in $B$,  and $A_N$ is the $m \times (n-m)$ matrix whose columns are indexed by the indices in $[n] \setminus B$. Then we have the following lemma.
\begin{lemma}[Theorem 2.3 in~\cite{bertsimas1997introduction}]
For any basic feasible solution $x$, we have a set $B \subseteq [n]$ of $m$ indices that correspond to a linearly independent set of columns of $A$ such that:	(1) basic variables $x_B = A_B^{-1} b$; (2) non-basic variables $x_N = 0$ where $N = [n] \setminus B$. In addition, for any set $B \subseteq [n]$ of $m$ indices that correspond to a linearly independent set of columns, if $x_B = A^{-1}_B b \ge 0$ then $(x_B, x_N=0)$ is a basic feasible solution.
\end{lemma}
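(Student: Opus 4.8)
The plan is to prove the two asserted directions separately, leaning on the already-established equivalence between vertices of $\mathcal{P} = \{x : Ax = b,\ x \ge 0\}$ and basic feasible solutions (Lemma~\ref{lem:basic_fs}) together with elementary linear algebra organized around the block decomposition $A = [A_B \mid A_N]$. The standing hypothesis $\operatorname{rank}(A) = m$ will be used exactly once, to extend a linearly independent family of columns to a full basis.

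For the forward direction I would start from a basic feasible solution $x$ and look at its support $J = \{ i \in [n] : x_i > 0\}$. Since $n-m$ coordinates of $x$ are set to zero, $|J| \le m$. The crucial step is to show $\{A_i : i \in J\}$ is linearly independent: if it were dependent, there would be a nonzero $d \in \mathbb{R}^n$ supported on $J$ with $Ad = 0$, and then $x \pm \varepsilon d \in \mathcal{P}$ for all sufficiently small $\varepsilon > 0$ (equality $Ax = b$ is preserved, and nonnegativity is preserved because $d$ moves only coordinates where $x$ is strictly positive), so $x = \tfrac12(x+\varepsilon d) + \tfrac12(x-\varepsilon d)$ would be a proper convex combination of two distinct feasible points, contradicting that $x$ is a vertex by Lemma~\ref{lem:basic_fs}. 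Given linear independence, $\operatorname{rank}(A)=m$ lets me greedily adjoin $m-|J|$ further columns of $A$ to obtain an index set $B \supseteq J$ with $|B| = m$ and $A_B$ invertible; then $N := [n]\setminus B \subseteq [n]\setminus J$ consists only of coordinates where $x$ vanishes, so $x_N = 0$, and $Ax = b$ collapses to $A_B x_B = b$, i.e.\ $x_B = A_B^{-1} b$.

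For the backward direction I would fix any $B$ with $|B|=m$ and $A_B$ invertible, assume $A_B^{-1} b \ge 0$, and simply define $x$ by $x_B = A_B^{-1} b$ and $x_N = 0$ for $N = [n]\setminus B$; then $Ax = A_B x_B = b$ and $x \ge 0$, so $x \in \mathcal{P}$, and $x$ has its $n-m$ coordinates in $N$ set to zero, hence is a basic feasible solution by Definition~\ref{def:basic_fs}. The only genuinely nontrivial point is the linear-independence claim in the forward direction — this is exactly where the vertex structure, rather than mere membership in $\mathcal{P}$, is used — while everything else is routine bookkeeping with $A = [A_B \mid A_N]$; I would also note that the degenerate case where some basic variable happens to vanish is harmless, since the argument uses only $|J| \le m$ and never $|J| = m$.
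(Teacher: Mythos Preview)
Your proof is correct and is the standard linear-algebra argument for this result. Note, however, that the paper does not supply its own proof of this lemma: it is simply quoted from Bertsimas and Tsitsiklis (Theorem~2.3) and used as a black box, so there is no ``paper's proof'' to compare against. Your argument is exactly what one finds in that reference: the forward direction hinges on the observation that linear dependence among the active columns would let you perturb $x$ along a null-space direction supported on $J$, contradicting extremality (which you import via Lemma~\ref{lem:basic_fs}), and the backward direction is immediate from the definitions. One small remark: your use of Lemma~\ref{lem:basic_fs} to pass from ``basic feasible solution'' to ``vertex'' in the forward direction is slightly circular in spirit, since in Bertsimas--Tsitsiklis both lemmas are really parts of the same theorem; a cleaner self-contained route is to argue directly from Definition~\ref{def:basic_fs} that if the columns over $J$ were dependent then, since $|J|\le m$, the $m$ designated basic columns (which contain $J$) could not all be forced to be independent regardless of which $n-m$ zeros you declare non-basic --- but given how the paper has split the statements, your invocation of Lemma~\ref{lem:basic_fs} is entirely legitimate.
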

Therefore we can find all the basic feasible solutions by enumerating $B\subseteq [n]$ and computing $x_B = A^{-1}_B b$ and checking whether $x_B\ge 0$. 
In general, there are exponentially many possible $B\subseteq [n]$ and thus finding all the vertices would take exponential time. However when $|A|$ is a constant or $|\Omega|$ is a constant, the number of possible $B$ would not be very large.  
\begin{lemma} \label{lem:num_vertices}
	When $|A|$ is a constant or $|\Omega|$ is a constant, $|\mathcal{H}_a| = poly(|\Omega|, |A|)$ for all $a$. 
\end{lemma}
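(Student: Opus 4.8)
The plan is to combine the vertex/basic-feasible-solution correspondence (Lemma~\ref{lem:basic_fs}) with a crude count of the possible bases of the linear system~\eqref{prog:basic_solution} that defines $\mathcal{P}_a$. First I would record the dimensions of that system: it has $|\Omega| + (|A|-1)$ nonnegative variables --- the coordinates of $\eta$ together with one slack variable $s_{a'}$ for each $a'\neq a$ --- subject to $|A|$ equality constraints, namely $\Vert\eta\Vert_1=1$ and $C_a(\eta)-C_{a'}(\eta)+s_{a'}=0$ for $a'\neq a$. Next I would observe that these $|A|$ constraints are linearly independent: each slack variable $s_{a'}$ appears (with coefficient $1$) in exactly one of them and in none of the others, so the constraint matrix contains an $(|A|-1)\times(|A|-1)$ identity block on its slack columns, and the normalization row is independent of the rest since it has zero slack part but nonzero $\eta$ part. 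Hence the rank is exactly $|A|$, and by Definition~\ref{def:basic_fs} every basic feasible solution is obtained by designating $|A|$ of the $|\Omega|+|A|-1$ variables basic, equivalently by choosing the $|\Omega|-1$ variables that are forced to zero. Since distinct bases can only coincide as points, this yields
\[
|\mathcal{H}_a|\;\le\;\binom{|\Omega|+|A|-1}{|A|}\;=\;\binom{|\Omega|+|A|-1}{|\Omega|-1}.
\]

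It then remains to read off polynomiality in each regime by picking the favorable form of this binomial coefficient. When $|A|=O(1)$, I would bound $\binom{|\Omega|+|A|-1}{|A|}\le(|\Omega|+|A|-1)^{|A|}$, a polynomial in $|\Omega|$ of the fixed degree $|A|$, hence polynomial in $(|\Omega|,|A|)$. When $|\Omega|=O(1)$, I would instead bound $\binom{|\Omega|+|A|-1}{|\Omega|-1}\le(|\Omega|+|A|-1)^{|\Omega|-1}$, a polynomial in $|A|$ of the fixed degree $|\Omega|-1$. Either way $|\mathcal{H}_a|=\text{poly}(|\Omega|,|A|)$, as claimed. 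As a side remark, the same enumeration is constructive --- for each candidate basis one inverts an $|A|\times|A|$ submatrix and checks nonnegativity --- so in both regimes one can actually list all vertices of $\mathcal{P}_a$ in polynomial time, which is what the convex program~\eqref{prog:opt_convex} downstream needs.

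I do not expect a genuine obstacle here, since the argument is essentially bookkeeping. The only step that uses anything specific about the problem --- and the only one worth stating carefully --- is the full-rank claim for~\eqref{prog:basic_solution}, because it is precisely what pins every vertex down to exactly $|\Omega|-1$ zero coordinates and thereby forces the constant parameter into the bottom entry of the binomial coefficient in the $|\Omega|$-constant case. If I wanted to be maximally conservative I would not even need the exact rank, only the lower bound $\mathrm{rank}\ge|A|-1$ coming from the slack identity block together with independence of the normalization row; but the identity-block observation gives the exact rank $|A|$ for free and for any utility function $u$, so I would simply invoke that.
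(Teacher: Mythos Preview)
Your proposal is correct and follows essentially the same approach as the paper: count basic feasible solutions of the system~\eqref{prog:basic_solution} by the binomial coefficient $\binom{|\Omega|+|A|-1}{|A|}$ and observe it is polynomial when either parameter is constant. The paper's proof is a terse two-line version of yours; your added justification of the rank via the slack identity block and your explicit use of the symmetric form $\binom{|\Omega|+|A|-1}{|\Omega|-1}$ for the $|\Omega|$-constant case are details the paper leaves implicit.
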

\begin{proof}
	$\mathcal{H}_a$, the set of extreme points of the polytope $\mathcal{P}_a$, is the set of the basic feasible solutions of \eqref{prog:basic_solution},
which contains $|A|$ constraints and $|\Omega| + |A| -1$ variables. So there are at most $C(|\Omega| + |A| -1, |A|)$ basic feasible solutions (which is the number of possible choices of basic variables $B$). When $|\Omega|$ or $|A|$ is a constant, $C(|\Omega| + |A| -1, |A|) = poly(|\Omega|, |A|)$.
\end{proof}
\begin{theorem}
	There exists an optimal advertising rule $\langle S, \pi \rangle$ with $|S| \le n$, in which the seller reveals  $\le 2|A|$ possibilities of the realized state of the world $\omega$ to the buyer before selling the (remaining) information. Formally, for all $s \in S$, the buyer's posterior $\eta^s$ has no more than $2 |A|$ non-zero entries,
	$$
	\Vert \eta^s \Vert_0 \le 2|A|.
	$$
\end{theorem}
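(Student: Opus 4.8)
The plan is to combine the structural result of Lemma~\ref{lem:point_pair} with a sparsity bound on the vertices of the polytopes $\mathcal{P}_a$, and then to clean up the number of signals using Lemma~\ref{lem:opt_size}. First I would invoke Lemma~\ref{lem:point_pair}: there is an optimal advertising rule $\langle S,\pi\rangle$ in which every posterior satisfies $\eta^s=\beta\cdot i+(1-\beta)\cdot j$ for some $\beta\in[0,1]$ and some vertices $i,j\in\mathcal{H}_a$ of the polytope $\mathcal{P}_a$ corresponding to the best action under $\eta^s$. Since the support of a convex combination is contained in the union of the supports, $\Vert\eta^s\Vert_0\le\Vert i\Vert_0+\Vert j\Vert_0$, so it suffices to show that every vertex of every $\mathcal{P}_a$ has at most $|A|$ nonzero coordinates.

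For the vertex sparsity bound I would use the description of $\mathcal{P}_a$ given in~\eqref{prog:basic_solution}: it is the nonnegative solution set of a linear system with $m\le|A|$ linearly independent equality constraints (the normalization $\Vert\eta\Vert_1=1$ together with the $|A|-1$ equations $C_a(\eta)-C_{a'}(\eta)+s_{a'}=0$) in the $n+|A|-1$ variables $(\eta,(s_{a'})_{a'\neq a})$. By Lemma~\ref{lem:basic_fs}, every vertex of $\mathcal{P}_a$ is a basic feasible solution of this system, hence has at most $m\le|A|$ nonzero variables; in particular at most $|A|$ of the coordinates $\eta_1,\dots,\eta_n$ are nonzero. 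Plugging this into the previous paragraph yields $\Vert\eta^s\Vert_0\le 2|A|$ for all $s\in S$.

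It remains to guarantee $|S|\le n$ simultaneously. The rule obtained so far may have more than $n$ signals, but the reduction in the proof of Lemma~\ref{lem:opt_size} only removes signals: it keeps every posterior $\eta^s$ fixed and merely redistributes the probabilities $\phi_\theta(s)$ until some signal receives probability zero and is deleted. Iterating it produces an optimal rule with at most $n$ signals whose set of posteriors is a subset of the original one, so the bound $\Vert\eta^s\Vert_0\le 2|A|$ still holds for every remaining signal. The only point that needs care — and really the crux of assembling the two lemmas — is precisely this observation that the signal-count reduction never enlarges any support, so the two reductions compose cleanly. This completes the proof.
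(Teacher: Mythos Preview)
Your proposal is correct and follows essentially the same approach as the paper: invoke Lemma~\ref{lem:point_pair} to put each posterior on a segment between two vertices of some $\mathcal{P}_a$, bound the support of each vertex by $|A|$ via the basic-feasible-solution characterization of~\eqref{prog:basic_solution}, and then apply Lemma~\ref{lem:opt_size} to bring $|S|$ down to $n$. Your final paragraph, noting that the reduction in Lemma~\ref{lem:opt_size} keeps every $\eta^s$ fixed and only zeroes out some $\phi$'s, is in fact more explicit than the paper's own argument, which does not spell out why the sparsity bound survives the signal-count reduction.
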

\begin{proof}
Recall that the set of vertices $\mathcal{H}_a$ is a subset of the basic feasible solutions of~\eqref{prog:basic_solution}. And each basic feasible solution has $|A|$ non-zero variables, which means that the points in $\mathcal{H}_a$ have no more than $|A|$ non-zero entries. According to Lemma~\ref{lem:point_pair}, there exists an optimal advertising rule with $\eta^s = \beta \cdot i + (1-\beta) j$ for $i,j\in \mathcal{H}_a, a\in A$, which means that $\eta^s$ will not have more than $2|A|$ non-zero entries. 
\end{proof}

\begin{proposition} 
When $\theta = \mu$, there exists an optimal advertising rule $\langle S, \pi \rangle$ with $|S| \le n$, that reveals $\le |A|$ possibilities of the realized state of the world $\omega$ to the buyer before selling the (remaining) information, i.e., for all $s\in S$,
$$
	\Vert \eta^s \Vert_0 \le |A|.
	$$
\end{proposition}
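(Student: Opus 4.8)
The plan is to exploit the fact that when $\theta = \mu$ the likelihood ratio function is identically one, so the objective in~\eqref{prog:concave} reduces to finding the concave closure of $f(\eta) = C(\eta)$ at $\theta$; since $C$ is concave this value is just $C(\theta)$. Consequently, any advertising rule whose signals induce posteriors $\{\eta^s\}$ with $\sum_s \phi_s \eta^s = \theta$ and $\sum_s \phi_s C(\eta^s) = C(\theta)$ is optimal, and it only remains to exhibit such a rule in which every $\eta^s$ is supported on at most $|A|$ coordinates.

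First I would pick a best action $a^\star$ at the belief $\theta$, i.e.\ $C(\theta) = C_{a^\star}(\theta)$, so that $\theta \in \mathcal{P}_{a^\star}$. Since $\mathcal{P}_{a^\star}$ is a polytope, Carath\'eodory's theorem lets us write $\theta = \sum_{i \in T} q_i \cdot i$ as a convex combination of a set $T \subseteq \mathcal{H}_{a^\star}$ of at most $n = |\Omega|$ vertices, with $q_i > 0$ for $i \in T$. Define the advertising rule with signal set $S = \{s_i : i \in T\}$, posteriors $\eta^{s_i} = i$, buyer-side probabilities $\phi_{s_i} = q_i$, and (following Section~\ref{sec:single_ccf}) $\pi(\omega, s_i) = q_i\, i_\omega/\theta_\omega$ and $p_{s_i} = C(i)$. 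Because $\theta$ has full support this $\pi$ is well defined, it is nonnegative, and $\sum_{i \in T} \pi(\omega, s_i) = \sum_{i \in T} q_i\, i_\omega/\theta_\omega = \theta_\omega/\theta_\omega = 1$ for every $\omega$, so it is a valid signaling scheme.

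Next I would verify optimality. Every $i \in T \subseteq \mathcal{P}_{a^\star}$ satisfies $C(i) = C_{a^\star}(i)$, the function $C_{a^\star}$ is linear, and $R \equiv 1$, so
\begin{align*}
\sum_{i \in T} \phi_{s_i}\, R(\eta^{s_i})\, C(\eta^{s_i}) = \sum_{i \in T} q_i\, C_{a^\star}(i) = C_{a^\star}\!\Big(\sum_{i \in T} q_i\, i\Big) = C_{a^\star}(\theta) = C(\theta),
\end{align*}
which is exactly the optimal value of~\eqref{prog:concave}. Finally, by the discussion preceding this proposition, each vertex of $\mathcal{P}_{a^\star}$ is a basic feasible solution of the system~\eqref{prog:basic_solution}, which has $|A|$ equality constraints and hence at most $|A|$ nonzero variables; in particular $\|i\|_0 \le |A|$ for every $i \in \mathcal{H}_{a^\star}$, so $\|\eta^{s_i}\|_0 \le |A|$ for all $s_i \in S$. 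Since $|S| = |T| \le n$, the rule meets all the requirements.

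The only point needing care is the observation that $C$ coincides with the single linear piece $C_{a^\star}$ on all of $\mathcal{P}_{a^\star}$, which is what makes the convex decomposition of $\theta$ lossless, together with the bookkeeping that the vertices of $\mathcal{P}_{a^\star}$ are basic solutions of a system with $|A|$ constraints and thus have at most $|A|$ nonzero entries; there is no substantial obstacle. If one prefers to avoid invoking Carath\'eodory explicitly, one may instead apply Lemma~\ref{lem:opt_size} afterwards to reduce $|S|$ to at most $n$.
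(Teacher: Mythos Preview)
Your proof is correct and follows essentially the same approach as the paper: decompose $\theta$ inside $\mathcal{P}_{a^\star}$ as a convex combination of vertices of $\mathcal{H}_{a^\star}$, use linearity of $C_{a^\star}$ on $\mathcal{P}_{a^\star}$ to preserve the objective value $C(\theta)$, and invoke the basic-feasible-solution structure to bound the support of each vertex by $|A|$. The only cosmetic difference is that you invoke Carath\'eodory to get $|T|\le n$ directly, whereas the paper first decomposes without a size bound and then appeals to Lemma~\ref{lem:opt_size}.
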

\begin{proof}
We prove that when $\theta = \mu$, there exists an optimal advertising rule that reveals $\le |A|$ possibilities of the realized state of the world $\omega$ to the buyer before selling the (remaining) information. 
Recall that in this case, the optimal advertising problem is equivalent to finding the concave closure of $f(x) = C(x)$,
\begin{align*} 
\max_{\eta,\,\phi_\theta} &\quad \sum_{s\in S} \phi_\theta(s) \cdot C(\eta^s)\\
\text{s.t.} & \quad \sum_{s\in S} \phi_\theta(s) \cdot \eta^s = \theta \notag \\
& \quad \phi_\theta(s) \ge 0, \ \eta^s \in \Delta \Omega, \ \forall s \notag
\end{align*}
The optimal objective value is $C(\theta)$ being achieved at
$$
S = \{s\}, \ \eta^s = \theta, \ \phi_\theta(s) = 1.
$$ 
which means is optimal to not give any advertising information and directly charge the buyer his expected gain. We show that $\eta^s = \theta$ can be decomposed into points in $\mathcal{H}_a$ for some $a\in A$ without changing the expected revenue. Assume that $\theta \in \mathcal{P}_a$, then $\theta$ can be decomposed into a convex combination of the vertices in $\mathcal{H}_a$,
$$
\theta = \sum_{i\in \mathcal{H}_a} q_i \cdot i.
$$
Since $C(x)$ is linear within $\mathcal{P}_a$,  we can decompose $\theta$ into $\sum_{i\in \mathcal{H}_a}$ as
$$
S= \{s_i: i \in \mathcal{H}_a\}, \ \eta^{s_i} = i, \ \phi_\theta(s_i) = q_i,
$$
and the objective function value remains unchanged
$$
\sum_{s_i \in S} \phi_\theta(s) C(\eta^s) = \sum_{i \in \mathcal{H}_a} q_i C(i) =  C\left(\sum_{i \in \mathcal{H}_a} q_i  \right) = C(\theta)
$$
The vertices in $\mathcal{H}_a$ has no more than $|A|$ non-zero entries by Definition~\ref{def:basic_fs} and Lemma~\ref{lem:basic_fs}.
Finally we can decrease the number of signals $|S|$ to $\le n$ by the same method in the proof of Lemma~\ref{lem:opt_size}.

\end{proof}

\subsection{Proof of Theorem~\ref{thm:single_binary}} \label{app:single_binary}

We prove that when $|\Omega| = 2$, 
there exists an optimal advertising rule $\langle S, \pi\rangle$ with $S= \{ s, t\}$. The optimal advertising rule has $\eta^s_1$ being a vertex of $C(\eta_1)$ on one side of $\eta^*_1$, and $\eta^t_1$ lying on the other side of $\eta^*_1$, as illustrated in Figure~\ref{fig:single_binary_app}. The optimal advertising rule can be solved in $O(|A|^2)$ time.
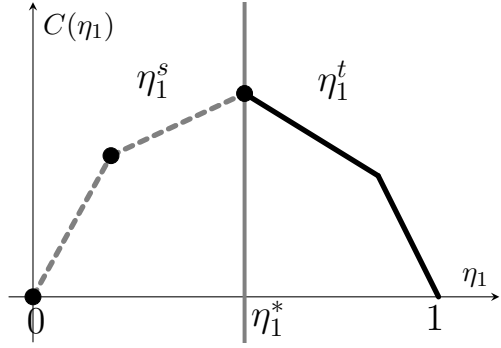
\begin{figure}
\centering
\begin{tikzpicture}[line cap=round,line join=round,>=triangle 45,x=0.4cm,y=0.4cm]
\begin{axis}[
x=0.4cm,y=0.4cm,
axis lines=middle,
xmin=-0.8,
xmax=15.5,
ymin=-1.5,
ymax=9.8,
xtick={0},
ytick={0},
xlabel={$\eta_1$},
ylabel={$C(\eta_1)$}]
\clip(-1.2390112325856555,-3.028733986414866) rectangle (14.99975892730601,10.108248614845827);
\draw [line width=2pt, dashed, color=gray] (0,0)-- (2.592608692782041,4.695325228215276);
\draw [line width=2pt, dashed, color=gray] (2.592608692782041,4.695325228215276)-- (7.032422257096991,6.763183600635941);
\draw [line width=2pt] (7.032422257096991,6.763183600635941)-- (11.47223582141194,4.026312225373297);
\draw [line width=2pt] (11.47223582141194,4.026312225373297)-- (13.479274829937877,0);
\draw [line width=1.5pt, color=gray] (7.032422257096991,-3.028733986414866) -- (7.032422257096991,10.108248614845827);
\begin{scriptsize}
\draw [fill=black] (0,0) circle (3pt);
\draw[color=black] (0.1,-0.8) node {\Large $0$};
\draw [fill=black] (2.592608692782041,4.695325228215276) circle (3pt);
\draw [fill=black] (7.032422257096991,6.763183600635941) circle (3pt);
\draw[color=black] (13.357636102148426,-0.7) node {\Large $1$};
\draw[color=black] (7.82307398772842,-0.7) node {\Large $\eta^*_1$};
\draw[color=black] (4,7.2) node {\Large $\eta^s_1$};
\draw[color=black] (10,7.2) node {\Large $\eta^t_1$};
\end{scriptsize}
\end{axis}
\end{tikzpicture}

\caption{Characterization of the optimal advertising rule when the state of the world is binary. We plot the function $C(\eta_1)$, which is the minimum of $|A|$ linear functions of $\eta_1$. The optimal advertising rule must have $\eta^s_1$ being one of the vertices on one side of $\eta^*_1$, plotted as black dots in the pictures. And $\eta^t_1$ must lie on the other side of $\eta^*_1$, plotted as black segments in the pictures.}
\label{fig:single_binary_app}
\end{figure}

First by Lemma~\ref{lem:opt_size}, there exists an optimal advertising rule that has $|S| = 2$. Let $S = \{s, t\}$.
Without loss of generality assume $R((1,0) > R((0,1))$. Then $R(\eta_1)$ is an increasing function of $\eta_1$.  By Lemma~\ref{lem:opt_decompose}, the optimal advertising rule should not have $\eta^s_1$ or $\eta^t_1$ lying on the left of $\eta_1^*$ and not being a vertex. By Lemma~\ref{lem:opt_never}, the optimal advertising rule should not have both $\eta^s_1$ and $\eta^t_1$ on the right of $\eta_1^*$.  Therefore we must have one of $\eta^s_1$ and $\eta^t_1$ being a vertex on the left of $\eta^*_1$, and the other one on the right of $\eta^*_1$. 

We then show how to compute the the optimal mechanism in $O(|A|^2)$ time. Assume that $\eta^s_1$ is on the left of $\eta^*_1$ and $\eta^t_1$ is on the right of $\eta^*_1$. We enumerate all possibilities of $\eta^s_1$, i.e., the vertices on the left of $\eta^*_1$, and all the linear segments on the right of $\eta^*_1$ that $\eta^t_1$ lies in.  
Denote the linear segment by $[l,r]$. Then  $f(\eta_1) = R(\eta_1) C(\eta_1)$ should be a concave quadratic function of $\eta_1$ in interval $[l,r]$.  Then our problem is equivalent to finding the concave closure of point $(\eta^s_1, f(\eta^s_1))$ and function $f(\eta_1)$ on segment $[l,r]$. This is equivalent to finding the tangent line of $f(\eta_1)$ on segment $[l,r]$ that goes through point  $(\eta^s_1, f(\eta^s_1))$. The tangent point $(\eta^t_1, f(\eta^t_1))$ is either at the endpoints $l,r$, or has 
$$
(f(\eta^t_1) - f(\eta^s_1), \eta^t_1 - \eta^s_1) \propto (f'(\eta^t_1), 1)
$$
which is easy to solve and verify. 

\section{General Problem}

\subsection{Proof of Theorem~\ref{thm:general_hardness}}

We prove Theorem~\ref{thm:general_hardness} by reduction from the \textsc{AlmostColoring} problem from~\cite{khot2012hardness}:
\begin{itemize}
\item[] For any constant $\varepsilon \in (0, \frac{1}{2}]$, and positive integers $k$ and $q$ such that $q \ge 2^k + 1$, given a graph $G(V,E)$, it is NP-hard to distinguish between the following two cases:
\begin{description}
    \item[YES Case:] There are $q$ disjoint independent sets $V_1, \dots, V_q \subseteq V$, such that $|V_i| = (1-\varepsilon) \frac{|V|}{q}$ for $i = 1, \dots, q$.
    \item[NO Case:] There is no independent set in $G$ of size $\frac{|V|}{q^{k+1}}$.
\end{description}
\end{itemize}
For any instance of \textsc{AlmostColoring}, we construct an optimal information advertising  problem whose solution can be used to distinguish the YES Case and the NO Case. Let the state of the world be one of the vertices, i.e., $\Omega = V$ and $n = |V|$. The buyer can possibly have $|V|$ different prior beliefs $\Theta = \{\theta^{(v)}: v\in V\}$. The buyer with prior belief $\theta^{(v)}$ initially thinks that $\omega$ is highly likely to be one of the neighbors of $v$: for some constant $C$
\begin{align*}
   & \theta^{(v)}_j = C, \quad \text{for } j\in V\setminus \mathcal{N}(v),\\
   & \theta^{(v)}_i = qn\cdot C, \quad \text{for } i\in \mathcal{N}(v),
\end{align*}
where $\mathcal{N}(v)$ is the set of the neighboring vertices of $v$ (not including $v$). For simplicity, we call a buyer with prior belief $\theta^{(v)}$ a \emph{type-$v$ buyer}. Assume both the state of the world $\omega$ and the buyer's prior $\theta$ are uniformly distributed, and $\omega$ and $\theta$ are independent, i.e., $\mu(\omega, \theta) = \mu(\omega) \mu(\theta) = \frac{1}{n} \cdot \frac{1}{n}$ for $\omega \in \Omega = V$ and $\theta \in \Theta = \{\theta^{(v)}: v\in V\}$. The buyer can take $2|V|$ actions $A = \{H^{(v)}: v \in V\} \cup \{L^{(v)}: v \in V\}$. Let $d = (1-\varepsilon) \frac{|V|}{q}$ be the size of the subsets in the YES Case. When the buyer takes $H^{(v)}$, he will have utility $M$ when the state of the world is realized to $v$, and have slightly lower utility $M - \frac{M}{d-1}$ when the state of the world is not $v$, i.e.,
\begin{align*}
    u(v, H^{(v)}) = M, \quad u(j, H^{(v)}) = M - \frac{M}{d-1} \text{ for } j\neq v.
\end{align*}
When the buyer takes $L^{(v)}$, he will have zero utility when the state of the world is realized to $v$ and otherwise have utility $M$, i.e.,
\begin{align*}
u(v, L^{(v)}) = 0, \quad u(j, L^{(v)}) = M \text{ for } j \neq v.    
\end{align*}
Then we claim the follows.
\begin{claim} \label{clm:distinguish}
In the YES Case, there exists a mechanism that achieves expected revenue $\ge \frac{M}{n}$. In the NO Case, there exists no mechanism that has expected revenue $> \frac{2M}{qn}$. 
\end{claim}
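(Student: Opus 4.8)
The plan is to reduce everything to two facts about the constructed decision problem. First, compute the cost of uncertainty. Since $u^*(\omega)=\max_a u(\omega,a)=M$ for every $\omega$ (achieved by $H^{(\omega)}$, and also by any $L^{(w)}$ with $w\neq\omega$), and $\E_\eta[u(\omega,H^{(w)})]=M-\frac{M}{d-1}(1-\eta_w)$ while $\E_\eta[u(\omega,L^{(w)})]=M(1-\eta_w)$, the best action has value $\max\{M-\frac{M}{d-1}(1-\max_w\eta_w),\,M(1-\min_w\eta_w)\}$, so
\[
C(\eta)=\min\Bigl\{\tfrac{M}{d-1}\bigl(1-\max_w\eta_w\bigr),\ M\min_w\eta_w\Bigr\}.
\]
Two consequences drive the argument: (i) $C(\eta)\le M\min_w\eta_w\le M/n$, so every buyer is willing to pay at most $M/n$, and any posterior with a zero coordinate gives $C=0$ (hence a useful signal must keep $\pi(\cdot,s)$ fully supported); (ii) writing $\rho_s=\sum_\omega\mu(\omega)\pi(\omega,s)$ for the probability of signal $s$ (type-independent, since $\omega\perp\theta$) and $\Lambda(s,p)=\{v:C(\eta^s(\theta^{(v)}))\ge p\}$ for the set of buying types, the expected revenue of a mechanism equals $\frac1n\sum_s\rho_s\,p_s\,|\Lambda(s,p_s)|$, and by (i) this is at most $M/n$ for any instance.

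For the YES case I would construct a signalling scheme attaining revenue $M/n$ — which by the above is the maximum conceivable, so the scheme must be simultaneously tight for every type. The structural input is that if $v\in V_i$ and $V_i$ is independent then $\mathcal N(v)\cap V_i=\emptyset$, so $\theta^{(v)}$ is constant on $V_i$. Using this, together with the fact that $V_1,\dots,V_q$ are disjoint and cover a $(1-\varepsilon)$ fraction of $V$, I would design $\pi$ — keeping $\pi(\cdot,s)>0$ everywhere, which by consequence (i) is essential — so that the $qn{:}1$ imbalance between $\mathcal N(v)$ and its complement in $\theta^{(v)}$ is exactly undone for every type, i.e.\ after receiving the relevant signal every type's posterior is the uniform distribution on $\Omega$; charging the uniform price $M/n$ then makes every buyer purchase, giving revenue exactly $M/n$.

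For the NO case I would first prove the key lemma: there is a threshold $\tau=\Theta(M/(qn))$ — concretely $\tau=\frac{M}{qn+d-1}$ — such that whenever $p>\tau$ the buying set $\Lambda(s,p)$ is an independent set of $G$. Intuitively, a buyer pays a near-maximal price only if the signal dampens the weight his prior puts on his neighbourhood, so $\pi(\cdot,s)$ must be small on $\mathcal N(v)$; two adjacent types each force $\pi(\cdot,s)$ small at the other, which is too much. Quantitatively, if $v\sim v'$ and both lie in $\Lambda(s,p)$, then $C(\eta^s(\theta^{(v)}))\ge p$ gives both $\min_w\eta^s_w(\theta^{(v)})\ge p/M$ and $\max_w\eta^s_w(\theta^{(v)})\le 1-(d-1)p/M$; since $v'\in\mathcal N(v)$ but $v\notin\mathcal N(v)$, the ratio $\eta^s_{v'}(\theta^{(v)})/\eta^s_{v}(\theta^{(v)})$ equals $qn\,\pi(v',s)/\pi(v,s)$, which yields $\pi(v',s)\le\frac{\pi(v,s)}{qn}\bigl(\frac Mp-d+1\bigr)$; the symmetric inequality (swapping $v,v'$) multiplied with this one, together with $\pi(v,s),\pi(v',s)>0$ (forced by the $\min$-bound and $p>0$), gives $qn\le\frac Mp-d+1$, i.e.\ $p\le\tau$ — a contradiction. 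Granting the lemma, I split the revenue $\frac1n\sum_s\rho_s p_s|\Lambda(s,p_s)|$ by whether $p_s\le M/(qn)$ or $p_s>M/(qn)\,(\ge\tau)$: the low-price signals contribute at most $\frac1n\sum_s\rho_s\cdot\frac M{qn}\cdot n=\frac M{qn}$, while for the high-price ones $\Lambda(s,p_s)$ is independent, hence of size $<n/q^{k+1}$ by the NO promise, so (using $p_s\le M/n$) they contribute at most $\frac1n\cdot\frac Mn\cdot\frac n{q^{k+1}}\le\frac M{qn}$; the total is at most $\frac{2M}{qn}$.

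I expect the main obstacle to be the YES-direction construction: producing a single signalling scheme that flattens the belief of every type to the uniform distribution while keeping full support on $\Omega$ (so $C$ does not collapse to $0$), with no slack anywhere since consequence (i) pins the optimum to $M/n$ exactly, so the disjoint-independent-set promise must be used in an essential, global way. The NO-direction lemma, by contrast, is short once the threshold $\frac{M}{qn+d-1}$ is guessed and the ``multiply the two one-sided posterior bounds'' trick is applied.
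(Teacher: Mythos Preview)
Your NO-case argument is correct and matches the paper's structure: if the purchasing set contains an edge the price is $O(M/(qn))$, and otherwise the set is independent and hence small by the NO promise. The paper's edge argument is marginally simpler than yours --- it assumes WLOG $\pi(i,s)\le\pi(j,s)$ and reads off $\eta^{(i)}_i\le 1/(qn)$ directly, whereas you multiply two symmetric posterior inequalities --- but the conclusions coincide, and your split by price threshold and the paper's split by whether $V^{(s)}$ is independent both yield the $2M/(qn)$ bound.

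The YES case is where your proposal has a genuine gap. Your plan is to design $\pi$ so that every type's posterior becomes uniform on all of $\Omega$, but this is impossible, not merely hard: uniformity of $\eta^s(\theta^{(v)})$ forces $\pi(\omega,s)\propto 1/\theta^{(v)}_\omega$, and since $\theta^{(v)}$ depends on $\mathcal N(v)$, no single $\pi(\cdot,s)$ can satisfy this simultaneously for two types with distinct neighbourhoods. Your own bound $C(\eta)\le M\min_w\eta_w\le M/n$ shows that hitting revenue exactly $M/n$ would require \emph{every} type to be exactly uniform after \emph{every} signal, so the rigidity you flag as ``the main obstacle'' is in fact fatal to the approach. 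The paper does something entirely different and does not attempt to flatten to the uniform distribution on $\Omega$: it simply announces which independent set $V_i$ contains $\omega$, so that a type-$v$ buyer with $v\in V_i$ --- whose prior is constant on $V_i$ since $\mathcal N(v)\cap V_i=\emptyset$ --- ends up uniform on $V_i$ (not on $\Omega$), and the paper asserts this gives cost of uncertainty $M/d$, whence a $d/n$ fraction of buyers pay $M/d$ after each $s_i$. (Your consequence (i), that $C(\eta)\le M\min_w\eta_w$, would actually give $C=0$ for any posterior with a zero coordinate, so there is tension between your formula and the paper's computation of $C$ for posteriors supported on $V_i\subsetneq\Omega$; but the relevant point for your proposal is that your route to the YES bound is not the paper's and cannot be completed as sketched.)
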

We first show that in the YES Case, there exists a mechanism that has expected revenue $\frac{M}{n}$. Consider the signaling scheme that reveals which independent set $\omega$ belongs to, i.e., there are $q$ possible signals $S = \{s_1, \dots, s_q\}$ and
\begin{align*}
    \pi(\omega, s_i) = \left\{ \begin{array}{ll}
        1, & \text{ if } \omega\in V_i  \\
        0, & \text{ otherwise}
    \end{array}
    \right.
\end{align*}
Because $V_i$ is an independent set, for any vertex $v \in V_i$, the posterior of a type-$v$ buyer after receiving $s_i$ is the uniform distribution over $V_i$. It is then easy to verify that for any type-$v$ buyer with $v \in V_i$, the cost of uncertainty equals $\frac{M}{d}$ after receiving $s_i$. So if the seller sets a price $\frac{M}{d}$, at least $\frac{|V_i|}{|V|} = \frac{d}{n}$ portion of the buyers will pay for the full revelation of $\omega$ after receiving $s_i$. This holds for all $s_i \in S$. Therefore the seller can have at least $\frac{M}{d} \cdot \frac{d}{n} = \frac{M}{n}$ expected revenue.

We then show that in the NO Case, there exists no mechanism that has expected revenue $> \frac{M}{qn}$. Let's consider an arbitrary mechanism that first sends a signal $s\in S$ using signaling scheme $\pi: \Omega \to \Delta S$, and then sets price $p_s$ if the signal $s$ is realized. Let $V^{(s)}$ be the set of buyer types who will pay for the full revelation of $\omega$ after observing $s$, 
$$
V^{(s)} = \{ v: \text{ the buyer with prior belief } \theta^{(v)} \text{ will pay } p_s \text{ after observing } s\}.
$$
Let's consider two possibilities: (1) $V^{(s)}$ is an independent set; (2) $V^{(s)}$ contains an edge $(i,j)$. 
\paragraph{(1)} If $V^{(s)}$ is an independent set, then it holds that
\begin{itemize}
    \item $|V^{(s)}| \le \frac{|V|}{q^{k+1}}$ because of the NO Case condition.
    \item $p_s \le \frac{M}{d}$. This is because for any probability distribution of the state of the world, $\eta \in \Delta \Omega$, the cost of uncertainty is bounded by $M/d$. More specifically,
\begin{align*}
C(\eta) = & \sum_{\omega=1}^n \eta_\omega \max_{a\in A} u(\omega, a) - \max_{a\in A} \sum_{\omega = 1}^n \eta_\omega u(\omega,a) \\
= &\  M - \max_i \ \max \left\{  \sum_{\omega = 1}^n \eta_\omega u(\omega, H^{(i)}), \sum_{\omega = 1}^n \eta_\omega u(\omega, L^{(i)})\right\}\\
= & \ M - \max_i \ \max \left\{  \eta_i M + (1-\eta_i)(M- \frac{M}{d-1}), \ (1-\eta_i)M\right\}\\
= & \ \min_i\ \min \left\{ (1-\eta_i) \frac{M}{d-1},\ \eta_i M\right\}\\
\le & \ \min \left\{ \left(1-\frac{1}{d}\right) \frac{M}{d-1},\ \frac{1}{d}\cdot M\right\}\\
= & \ \frac{M}{d}.
\end{align*}
\end{itemize}
Therefore the seller's expected revenue conditioning on sending $s$ is no more than $\frac{1}{q^{k+1}}\cdot \frac{M}{d} = \frac{M}{n q^{k}(1-\varepsilon)} \le \frac{2M}{n q^k}$.
\paragraph{(2)} If $V^{(s)}$ contains an edge $(i,j)$, then we claim that $p_s$ cannot exceed $\frac{M}{qn}$. By the definition of $\theta^{(i)}$
and $\theta^{(j)}$, we have 
$$
\theta^{(i)}_j = qn \cdot \theta^{(i)}_i, \quad \theta^{(j)}_i = qn \cdot \theta^{(j)}_j.
$$
WLOG assume $\pi(i, s) \le \pi(j,s)$. Denote by $\eta^{(i)}$ the posterior of a type-$i$ buyer after receiving $s$, then we should have 
$$
\frac{\eta^{(i)}_i}{\eta^{(i)}_j} = \frac{\theta^{(i)}_i}{\theta^{(i)}_j}\cdot \frac{\pi(i,s)}{\pi(j,s)} \le \frac{1}{qn}.
$$
And since $\eta^{(i)}_j \le 1$, it holds that
$$
\eta^{(i)}_i \le \frac{1}{qn},
$$
which means the type-$i$ buyer will believe that the probability of the state of the world $\omega$ being $i$ is no more than $\frac{1}{qn}$ after observing $s$. Then the type-$i$ buyer will not pay more than
\begin{align*}
C(\eta^{(i)}) = &\sum_{\omega=1}^n \eta^{(i)}_\omega \max_{a\in A} u(\omega, a) - \max_{a\in A} \sum_{\omega = 1}^n \eta^{(i)}_\omega u(\omega,a)\\
= &\  M - \max_{a\in A} \sum_{\omega = 1}^n \eta^{(i)}_\omega u(\omega,a)\\
\le & \ M -  \sum_{\omega = 1}^n \eta^{(i)}_\omega u(\omega, L^{(i)}) \\
\le & \ \frac{M}{qn}.
\end{align*}
for the full revelation of $\omega$, which means $p_s \le \frac{M}{qn}$. Therefore the seller's expected revenue conditioning on sending $s$ is no more than $\frac{M}{qn}$.

In both of the cases, the seller's expected revenue conditioning on sending $s$ is no more than $\frac{2M}{qn}$. This holds for all $s$. Therefore in the NO Case, the expected revenue of any mechanism $\le \frac{2M}{qn}$. For any constant $c$, by setting $q = 2c^2$, Claim~\ref{clm:distinguish} implies that it is NP-hard to find $c$-approximation of the optimal mechanism.

\subsection{Efficient Approximation for Binary State} \label{app:eff_bi}

Recall that  $\Lambda(s, p_s) \subseteq \Theta$ is the set of buyer types that would pay for the full revelation of $\omega$, i.e.,
\begin{align} \label{def:lambda_s}
\Lambda(s, p_s) = &\{\theta: C(\eta^s(\theta)) \ge p_s\} 
\end{align}
Let $\mathbf{\Lambda}$ be the set of all possible $\Lambda(s, p_s)$,
\begin{align*}
\mathbf{\Lambda} = \left\{\Lambda(s, p_s): s \text{ is a signal sent by a signaling scheme } \pi, p_s \in \mathbb{R}\right\} 
\end{align*}
We show that when there is a binary state of the world, i.e. $|\Omega| = 2$,  $|\mathbf{\Lambda}|$ is no more than $|\Theta|^2$. This is because 
\begin{align*} 
\Lambda(s, p_s) = &\{\theta: C(\eta^s(\theta)) \ge p_s\} 
\end{align*}
 must be  a convex set. By definition, the posterior belief of a type-$\theta$ buyer after receiving a signal $s$ is equal to 
$$\eta^s (\theta) = \frac{\big( \theta_1 \pi(1, s), \dots, \theta_n \pi(n, s)\big)}{\sum_{\omega=1}^n \theta_\omega \pi(\omega,s)}.$$
And the cost of uncertainty function 
\begin{align*}
C(\eta) & =   \E_{\omega \sim \eta} \big[\max_{a\in A} u(\omega, a) \big] - \max_{a\in A} \E_{\omega \sim \eta} [ u(\omega, a) ] \\
& =   \sum_{\omega=1}^n \eta_\omega \max_{a\in A} u(\omega, a) - \max_{a\in A} \sum_{\omega = 1}^n \eta_\omega u(\omega,a)\\
& = \min_a\, C_a(\eta)
\end{align*}
is the minimum of $|A|$ linear functions. Although the cost of uncertainty function is defined on $\Delta \Omega$, we can naturally extend the domain to $[0,1]^n$ so that 
$$
C(k\eta) = k C(\eta).
$$
Then we have
\begin{align*}
 & \ C(\eta^s(\theta)) \ge p_s \\
 \Longleftrightarrow & \ C\big( \theta_1 \pi(1, s), \dots, \theta_n \pi(n, s)\big) \ge p_s \sum_{\omega=1}^n \theta_\omega \pi(\omega,s).
\end{align*}
The left hand side $C( \theta_1 \pi(1, s), \dots, \theta_n \pi(n, s))$ is a concave function of $\theta$, and the right hand side $p_s \sum_{\omega=1}^n \theta_\omega \pi(\omega,s)$ is a linear function of $\theta$.  Therefore, $\Lambda(s, p_s) = \{\theta: C(\eta^s(\theta)) \ge p_s\}$ must be a convex set. Denote $\theta = (\theta_1, \theta_2)$.  Since $\Lambda(s, p_s) $ is a convex set, the types $\theta$ in $\Lambda(s, p_s)$ must have $\theta_1$ lying in an interval $[L,R]$.  Then as the type space is discrete, we only need to consider the intervals with endpoints $\{\theta_1: \theta \in \Theta\}$ to include all $\Lambda(s, p_s)$. The number of such intervals is no more than $|\Theta|^2$. 

\subsection{Proof of Lemma~\ref{lem:general_eff}} \label{app:general_eff}
Consider an arbitrary optimal advertising rule $\langle S, \pi, \{p_s : s \in S\} \rangle$. Suppose there exist two signals $s,t \in S$ with $(p_s, \Lambda(s, p_s)) = (p_t, \Lambda(t, p_t))$. Then we can merge $s, t$ into one signal $s'$ as follows 
\begin{align*}
&\pi(\omega, s') = \pi(\omega, s) + \pi(\omega, t) \text{ for all } \omega\\
&p_{s'} = p_s = p_t.
\end{align*}
Then according to \eqref{eqn:posterior_eta}, for any buyer of type $\theta \in \Lambda(s, p_s) = \Lambda(t, p_t)$, his posterior after seeing $s'$ is
\begin{align*}
\eta^{s'} (\theta) & = \frac{1}{\sum_{\omega=1}^n \theta_\omega (\pi(\omega,s) + \pi(\omega,t))} \left( \eta^{s} (\theta)\sum_{\omega=1}^n \theta_\omega \pi(\omega,s) +  \eta^{t} (\theta)\sum_{\omega=1}^n \theta_\omega \pi(\omega,t) \right)\\
& = k \eta^{s} (\theta) + (1-k) \eta^{t} (\theta).
\end{align*}
Since the cost of uncertainty function is concave, 
$$
C\big(\eta^{s'} (\theta)\big) \ge k \cdot C\big(\eta^{s} (\theta)\big) + (1-k) \cdot C\big(\eta^{t} (\theta)\big) \ge k p_s + (1-k) p_t = p_{s'}.
$$
So the buyer will still be willing to pay $p_{s'}$. The expected revenue will not decrease.

\section{Optimal Information Disclosure} \label{app:optimal_inf_dis}
\cite{rayo2010optimal} studies the following problem. 
There is a sender endowed with a prospect, which is randomly drawn from a finite set $P=\{1, \dots, N\}$. The probability of $i$ being realized is $p_i > 0$ and $\sum_{i=1}^n p_i = 1$. Each $i\in P$ is characterized by its payoffs $(\pi_i, v_i) \in \mathbb{R}^2$, where $\pi_i$ is the prospect's profitability for the sender, and $v_i$ is  its value to the receiver.

The sender chooses a disclosure rule $\langle \sigma, S \rangle$ to send a signal $s\in S$ drawn from $\sigma(i)$ to the receiver. The receiver observes the signal $s$, and decides whether to ``accept'' ($a=1$) or `` not accept'' ($a=0$). The receiver forgoes an outside option worth $r\in \mathbb{R}$, which is a random variable independent of $i$. So the sender's payoff is $a \cdot \pi$ and the receiver's payoff is $a(v-r)$.

Assume $v\in[0,1]$ and $r \sim \mathcal{U}[0,1]$. Then the sender's expected payoff is 
\begin{align} \label{optimal_info_dis}
\E_{s} \big( \E[\pi|s] \cdot \E[v|s]\big) = \sum_s \left(\sum_i p_i \sigma(i, s) \right) \frac{\sum_i p_i \sigma(i, s) \pi_i}{\sum_i p_i \sigma(i, s)} \cdot \frac{\sum_i p_i \sigma(i, s)v_i}{\sum_i p_i \sigma(i, s)}.
\end{align}

\paragraph{Optimal Information Disclosure as Optimal Advertising.} Consider a Optimal Advertising problem with common buyer prior. Let the state of the world be the realization of the prospect, $\Omega = P$. Let the buyer's common prior equal to the probability distribution of the prospect, i.e., $\theta_i = p_i$. And let the true underlying distribution $\mu$ satisfy $\frac{\mu_i}{\theta_i} \propto \pi_i$, i.e., $\frac{\mu_i}{\theta_i} = \pi_i \cdot M$ where $M$ is a constant so that $\sum_i \theta_i \pi_i \cdot M = 1$. Finally let the cost of uncertainty be the expected value of the prospect, i.e., $C(\theta) = \E_\theta[v] = \sum_{i=1}^n \theta_i v_i$.\footnote{This is not really a valid cost of uncertainty function, as $C(e_i) \neq 0$ for $e_i = (0,\dots,0, 1, 0, \dots, 0)$. But our algorithm still works when the cost of uncertainty is a linear function. } Then it is easy to verify that~\eqref{optimal_info_dis} is equivalent to the optimal advertising problem~\eqref{prog:opt1} with a constant factor difference in the objective function,
\begin{align*}
&\sum_s \left(\sum_i p_i \sigma(i, s) \cdot \frac{\sum_i p_i \sigma(i, s) \pi_i}{\sum_i p_i \sigma(i, s)}\right)  \frac{\sum_i p_i \sigma(i, s)v_i}{\sum_i p_i \sigma(i, s)} \\
= & \sum_s  \left(\sum_i \theta_i \sigma(i, s) \pi_i\right) \frac{\sum_i \theta_i \sigma(i, s)v_i}{\sum_i \theta_i \sigma(i, s)} \\
= & \sum_s  \left(\sum_i \theta_i \sigma(i, s) \cdot \frac{\mu_i}{\theta_i} /M\right)  C(\theta^s) \\
= & \frac{1}{M} \sum_s  \left(\sum_i \mu_i \sigma(i, s) \right)  C(\theta^s).
\end{align*}

\end{document}